\newcommand{\Oh}{\mathcal{O}}
\newcommand{\cost}{{\sf cost}}
\newcommand{\opt}{{\sf Opt}}
\DeclareMathOperator{\operatorClassNP}{{\sf NP}}
\newcommand{\classNP}{\ensuremath{\operatorClassNP}}
\DeclareMathOperator{\operatorClassCoNP}{{\sf coNP}}
\DeclareMathOperator{\operatorClassPoly}{{\sf poly}}
\newcommand{\nopoly}{\ensuremath{\operatorClassNP\not\subseteq\operatorClassCoNP/\operatorClassPoly}}
\newcommand{\nonopoly}{\ensuremath{\operatorClassNP\subseteq\operatorClassCoNP/\operatorClassPoly}}
\DeclareMathOperator{\operatorClassFPT}{{\sf FPT}\xspace}
\newcommand{\classFPT}{\ensuremath{\operatorClassFPT}\xspace}
\newcommand{\bfC}{\mathbf{C}}
\newcommand{\bfc}{\mathbf{c}}
\newcommand{\bff}{\mathbf{f}} 
\newcommand{\bfS}{\mathbf{S}} 
\newcommand{\bfx}{\mathbf{x}} 
\newcommand{\bfX}{\mathbf{X}} 
\newcommand{\bfY}{\mathbf{Y}} 
\newcommand{\bfZ}{\mathbf{Z}} 
\newcommand{\bfy}{\mathbf{y}} 
\newcommand{\bfv}{\mathbf{v}} 
\newtheorem{theorem}{Theorem}
\newtheorem{lemma}{Lemma}
\newtheorem{claim}{Claim}[section]
\newtheorem{corollary}{Corollary}
\newtheorem{observation}{Observation}
\newtheorem{proposition}{Proposition}
\newtheorem{redrule}{Reduction Rule}
\newcommand{\pname}{\textsc}
\newcommand{\ProblemFormat}[1]{\pname{#1}}
\newcommand{\ProblemIndex}[1]{\index{problem!\ProblemFormat{#1}}}
\newcommand{\ProblemName}[1]{\ProblemFormat{#1}\ProblemIndex{#1}{}\xspace}
\newcommand{\probEClust}{\ProblemName{Decision Equal Clustering}}
\newcommand{\probMEClust}{\ProblemName{Equal Clustering}}
\newcommand{\probPEClust}{\ProblemName{Parameterized Equal Clustering}}
\newcommand{\probClust}{\ProblemName{$k$-Median}}
\newcommand{\probPM}{\ProblemName{Minimum Weight Perfect Matching}}
\newcommand{\probDM}{\ProblemName{Perfect $r$-Set Matching}}
\newlength{\RoundedBoxWidth}
\newsavebox{\GrayRoundedBox}
\newenvironment{GrayBox}[1]%
   {\setlength{\RoundedBoxWidth}{.93\textwidth}
    \def\boxheading{#1}
    \begin{lrbox}{\GrayRoundedBox}
       \begin{minipage}{\RoundedBoxWidth}}%
   {   \end{minipage}
    \end{lrbox}
    \begin{center}
    \begin{tikzpicture}%
       \node(Text)[draw=black!20,fill=white,rounded corners,%
             inner sep=2ex,text width=\RoundedBoxWidth]%
             {\usebox{\GrayRoundedBox}};
        \coordinate(x) at (current bounding box.north west);
        \node [draw=white,rectangle,inner sep=3pt,anchor=north west,fill=white] 
        at ($(x)+(6pt,.75em)$) {\boxheading};
    \end{tikzpicture}
    \end{center}}     
\newenvironment{defproblemx}[2][]{\noindent\ignorespaces%
                                \FrameSep=6pt%
                                \parindent=0pt%
                \vspace*{-1.5em}
                \ifthenelse{\isempty{#1}}{%
                  \begin{GrayBox}{\textsc{#2}}%
                }{%
                  \begin{GrayBox}{\textsc{#2} parameterized by~{#1}}%
                }
                \begin{tabular*}{\textwidth}{@{\hspace{.1em}} >{\itshape} p{1.8cm} p{0.8\textwidth} @{}}%
            }{
                \end{tabular*}%
                \end{GrayBox}%
                \ignorespacesafterend
            }
\newcommand{\defproblema}[3]{
  \begin{defproblemx}{#1}
    Input:  & #2 \\
    Task: & #3
  \end{defproblemx}
}%
\begin{document}

\title{Lossy Kernelization of Same-Size Clustering\thanks{The research leading to these results have  been supported by the Research Council of Norway via the project ``MULTIVAL" (grant no. 263317) and the European Research Council (ERC) via grant LOPPRE, reference 819416.}}

\author{
Sayan Bandyapadhyay\thanks{Department of Informatics, University of Bergen, Bergen, Norway}\addtocounter{footnote}{-1}
\and
Fedor V. Fomin\thanks{\{Sayan.Bandyapadhyay, Fedor.Fomin, Petr.Golovach, Nidhi.Purohit, Kirill.Simonov\}@uib.no}\addtocounter{footnote}{-1}
\and
Petr A. Golovach\footnotemark{} \addtocounter{footnote}{-1}
\and
Nidhi Purohit\footnotemark{} \addtocounter{footnote}{-1}
\and 
Kirill Simonov\footnotemark
}

\date{}

\maketitle

\begin{abstract}
In this work, we study the $k$-median clustering problem with an additional equal-size constraint on the clusters, from the perspective of parameterized preprocessing. Our main result is the first lossy ($2$-approximate) polynomial kernel for this problem, parameterized by the cost of clustering. We complement this result by establishing lower bounds for the problem that eliminate the existences of an (exact) kernel of polynomial size and a PTAS.  
\end{abstract}

\section{Introduction}\label{sec:intro}

\emph{Lossy kernelization} stems from \emph{parameterized complexity}, a branch in theoretical computer science that studies complexity of problems as functions of multiple \emph{parameters} of the input or output~\cite{DowneyF13}. 
A central notion in parameterized complexity is \emph{kernelization}, which is a generic technique for designing efficient algorithms availing a polynomial time preprocessing step that transforms a ``large'' instance of a problem into a  smaller, equivalent instance. Naturally, the preprocessing step is called the \emph{kernelization algorithm} and the smaller instance is called the \emph{kernel}. One limitation of the classical kernelization technique is that kernels can only analyze ``lossless'' preprocessing, in the sense that a kernel must be equivalent to the original instance. This is why most of the interesting models of problems arising from machine learning, e.g.,  clustering, are intractable from the perspective of kernelization. Lossy or approximate kernelization is a successful attempt of combining kernelization   with approximation algorithms. Informally, in lossy kernelization, given an instance of the problem and a parameter, we would like the kernelization algorithm to output a reduced instance of size polynomial in the parameter; however  the notion of equivalence is relaxed in the following way. Given a $c$-approximate solution (i.e., one with the cost within $c$-factor of the optimal cost) to the reduced instance, it should be possible to produce in polynomial time an 
$\alpha c$-approximate solution to the original instance. The factor $\alpha$ is the loss   incurred while going from reduced instance to the original instance.  
The notion of lossy kernelization was introduced by Lokshtanov et al. in  \cite{DBLP:conf/stoc/LokshtanovPRS17}.
 However,   most of the developments of lossy kernelization up to now are in graph algorithms 
  \cite{DBLP:journals/corr/EibenKMP17,DBLP:journals/corr/Siebertz17a,DBLP:conf/fsttcs/KrithikaM0T16,DBLP:journals/corr/abs-1708-00622,DBLP:journals/corr/newlossy}, see also \cite[Chapter~23]{FominLSZ19} for an overview.  
  
One of the actively developing areas of parameterized complexity concerns \emph{fixed-parameter tractable}- or FPT-approximation. We refer to the survey 
 \cite{DBLP:journals/algorithms/FeldmannSLM20} for a nice overview of the area. Several important advances on FPT-approximation concern clustering problems. It includes    
tight algorithmic and complexity results for $k$-means and $k$-median   \cite{Cohen-AddadG0LL19} and constant factor FPT-approximation for capacitated clustering \cite{Cohen-AddadL19}.  The popular approach for   data compression used for FPT-approximation of clustering is based on \emph{coresets}.  
The notion of coresets originated from computational geometry. It was introduced by 
Har-Peled and Mazumdar \cite{har2004coresets} for $k$-means and  $k$-median clustering. Informally, a coreset is  a summary of the data that for every set   of $k$ centers,  approximately (within $(1\pm \epsilon)$ factor) preserves the optimal clustering cost.

 Lossy kernels and coresets have a lot of similarities. Both compress the space compared to the original data, and  any  algorithm can be applied on a coreset or kernel to efficiently retrieve a solution with guarantee almost the same as the one provided by the algorithm on the original input. 
The crucial difference is that  coreset constructions result  in a small set of weighted points.   The weights could be up to the input size $n$. Thus a coreset of size polynomial in  $k/\epsilon$,  is not a polynomial sized lossy kernel for parameters $k, \epsilon$ because of the $\log{n}$ bits required to encode the weights. Moreover,  usually   coreset constructions do not bound the number of coordinates or dimension of the points. 

 While the notion of lossy kernelization proved to be useful in the design of graph algorithms, we are not aware of its applicability in clustering. This brings us to the following question.
    \begin{tcolorbox}[colback=green!5!white,colframe=blue!40!black]
What can lossy kernelization offer to clustering?
 \end{tcolorbox}
  
In this work, we make the first step towards the development  of lossy kernels for clustering problems.     
Our main result   is the design of a lossy kernel for a  variant of the ubiquitous \probClust   clustering with   clusters of equal sizes. More precisely, consider 
 a collection (multiset) of points from $\mathbb{Z}^d$ with $\ell_p$-norm.  Thus every point is a $d$-dimensional vector with integer coordinates. 
 For a nonnegative integer $p$, we use $\|\bfx\|_p$ to denote the $\ell_p$-norm of a $d$-dimensional vector $\bfx=(x[1],\ldots,x[d])\in\mathbb{R}^d$, that is, for $p\geq 1$, 
$$\|\bfx\|_p=\big(\sum_{i=1}^d|x[i]|^p\big)^{1/p}$$
 and for $p=0$, $\|\bfx\|_0$ is the number of nonzero elements of $\bfx$, i.e., the Hamming norm. For any subset of points $T\subseteq \mathbb{Z}^d$, we define 
\begin{equation*}
\cost_p(T)=\min_{\bfc\in\mathbb{R}^d}\sum_{\bfx\in T}\|\bfc-\bfx\|_p.
\end{equation*}
Then \probClust\footnote{Traditionally this problem is studied with real input points, but because of the choice of the parameterization, it is natural for us to assume that points have integer coordinates. As the coordinates can be scaled, this does not lead to the loss of generality.} clustering (without constraints) is the task of  finding  a partition $\{X_1,\ldots,X_k\}$  of a given set $\bfX\subseteq \mathbb{Z}^d$ of points minimizing the sum
\[
\sum_{i=1}^k  \cost_p(X_i).
\]

In many real-life scenarios, it is desirable to cluster data into clusters of equal sizes. For example,
to tailor teaching methods to the specific needs of various students, one would be interested in allocating  $k$ fair class sizes by grouping students with homogeneous abilities and skills \cite{HoppnerK08}. In scheduling, the standard task is to distribute $n$ jobs to $k$ machines while keeping identical workloads on each machine and simultaneously reducing the configuration time. In the setting of designing a conference program, one might be interested in allocating $n$ scientific papers according to their similarities to $k$ ``balanced'' sessions
\cite{DBLP:conf/iccS/Vallejo-HuangaM17}.

 The following model is an attempt to capture such scenarios.
  
\defproblema{\probMEClust}%
{A collection (multiset) $\bfX=\{\bfx_1,\ldots,\bfx_n\}$ of $n$ points of $\mathbb{Z}^d$ and a positive integer $k$ such that $n$ is divisible by $k$.}%
{Find a partition $\{X_1,\ldots,X_k\}$ ($k$-clustering) of $\bfX$ with $|X_1|=\cdots=|X_k|=\frac{n}{k}$ minimizing 
\[
\sum_{i=1}^k  \cost_p(X_i).
\]
 }
First, note that \probMEClust is a restricted variant of the capacitated version \cite{Cohen-AddadL19} of \probClust where the size of each cluster is required to be bounded by a given number $U$. Also note, that some points in $\bfX$ may be identical. (In the above examples, several students, jobs, or scientific papers can have identical features but could be assigned to different clusters due to the size limitations.) We refer to the multisets $X_1,\ldots,X_k$ as the  \emph{clusters}.

To describe the lossy-kernel result, we need to define the parameterized version of \probMEClust with the cost of clustering $B$ (the budget) being the parameter. Following the framework of parameterized kernelization \cite{DBLP:conf/stoc/LokshtanovPRS17}, when the cost of  an optimal clustering exceeds the budget, we assume it is equal to $B+1$. 
More precisely, in 
 \probPEClust,   we  are given an additional integer $B$ (budget parameter). The task is to find a 
$k$-clustering $\{X_1,\ldots,X_k\}$ with $|X_1|=\cdots =|X_k|$ and minimizing the value
\begin{equation*}
\cost_p^B(X_1,\ldots,X_k)=
\begin{cases}
\sum_{i=1}^k  \cost_p(X_i)  &\mbox{if } \sum_{i=1}^k  \cost_p(X_i) \leq B,\\
B+1&\mbox{otherwise.}
\end{cases}
\end{equation*}
\medskip

Our first main result is the following theorem providing a polynomial $2$-approximate kernel. 
\begin{restatable}{theorem}{Lossy}{\label{thm:lossykernel1}}
For every nonnegative integer constant $p$, \probPEClust admits a $2$-approximate kernel when parameterized by $B$, where the output collection of points has $\mathcal{O}(B^2)$ points of $\mathbb{Z}^{d'}$ with $d'= \mathcal{O}(B^{p+2})$, where each coordinate of a point takes an absolute value of $\mathcal{O}(B^3)$.
\end{restatable}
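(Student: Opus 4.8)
The plan is to obtain an \emph{exact} kernelization for the variant of \probPEClust in which every cluster centre is required to be one of the input points, and then to transfer it back to \probPEClust paying a factor of~$2$. This is precisely where the loss comes from: for any nonempty finite multiset $T\subseteq\mathbb{Z}^d$ one has $\min_{\bfc\in T}\sum_{\bfx\in T}\|\bfc-\bfx\|_p\le 2\,\cost_p(T)$, since if $\bfc^\star$ is an optimal unrestricted centre and $\bfy\in T$ is a point of $T$ closest to~$\bfc^\star$, then $\|\bfc^\star-\bfy\|_p\le\frac1{|T|}\sum_{\bfx\in T}\|\bfc^\star-\bfx\|_p$, and the triangle inequality (valid for $\|\cdot\|_p$, $p\ge 1$, and for the Hamming norm) yields $\sum_{\bfx\in T}\|\bfy-\bfx\|_p\le |T|\cdot\|\bfy-\bfc^\star\|_p+\cost_p(T)\le 2\,\cost_p(T)$. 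Writing $\cost'_p$ for the cost with input-restricted centres, every clustering satisfies $\cost_p\le\cost'_p\le 2\,\cost_p$, so an exact kernel for the $\cost'_p$-version of the problem, with the budget rescaled and the truncation at $B+1$ carried along, is a $2$-approximate kernel for \probPEClust. Throughout I fix an optimal input-restricted clustering of cost at most~$B$; if no such clustering exists then every clustering attains the truncated value $B+1$ and any bounded instance with that same property works.

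The structural facts I would rely on are elementary. Call a cluster \emph{homogeneous} if all its points coincide, and call a point an \emph{outlier} if it differs from the centre of its cluster. Since the coordinates are integral, every non-homogeneous cluster and every outlier contributes at least~$1$ to the cost, so the fixed clustering has at most~$B$ non-homogeneous clusters and at most~$B$ outliers in total, while every remaining point equals the centre of its cluster. Moreover, expanding $\|\cdot\|_p^p$ coordinatewise, the total cost is concentrated on a set $Z$ of few coordinates ($\mathcal{O}(B^{p})$ of them for $p\ge 1$, $\mathcal{O}(B)$ for $p=0$): outside $Z$ every cluster is constant on every coordinate. In particular, only $\mathcal{O}(B)$ distinct point values (``types'') are \emph{active}, in the sense of being an outlier or the centre of a non-homogeneous cluster; every other type occurs only inside homogeneous clusters, and such clusters have centre of multiplicity at least $\tfrac nk$.

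With these facts I would set up the following reduction rules, applied exhaustively in order. \emph{(i) Bounding multiplicities.} If some value occurs at least $(B+2)\cdot\frac nk$ times, a counting argument using that at most $B$ clusters are non-homogeneous (each holding fewer than $\frac nk$ copies of that value) shows that every optimal clustering contains a homogeneous cluster made only of that value; delete $\frac nk$ of those copies and decrease $k$ by~$1$. This leaves $\frac nk$ and the truncated optimum unchanged, and solution lifting just re-attaches the homogeneous cluster; afterwards every value has multiplicity $\mathcal{O}(B\cdot\frac nk)$. \emph{(ii) Shrinking the instance.} If the instance is still large, then — because only $\mathcal{O}(B)$ types are active and every cluster consists of $\ge\frac nk-B$ copies of its centre plus $\le B$ outliers — most points are ``inactive'' copies that can be stripped: repeatedly remove a carefully chosen redundant copy of a high-multiplicity value (peeling off a whole homogeneous cluster and decreasing $k$ when possible), always keeping $n/k$ integral, the truncated cost unchanged, and the stripped-away portion a well-separated sub-instance; carried out correctly this drives the point count down to $\mathcal{O}(B^{2})$, i.e.\ $\mathcal{O}(B)$ relevant clusters of size $\mathcal{O}(B)$. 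Here solution lifting uses the \emph{original} instance: from a near-optimal clustering of the reduced (essentially ``active'') instance, blow the relevant clusters back up to their original size with copies of their centres, and optimally cluster the remaining — well-separated, hence trivial — original points into equal-size pieces. \emph{(iii) Dimension and coordinates.} Discard every coordinate on which all points agree, keep the cost-carrying coordinates together with a bounded number of further coordinates that separate the $\mathcal{O}(B^{2})$ surviving types at the correct $\ell_p$-distances, reaching $d'=\mathcal{O}(B^{p+2})$; then, since within each relevant cluster the points lie within $\ell_p$-distance $B$ of the centre and there are only $\mathcal{O}(B^{2})$ points in $\mathcal{O}(B^{p+2})$ coordinates, a coordinatewise relabelling brings every coordinate into $\{-\mathcal{O}(B^{3}),\dots,\mathcal{O}(B^{3})\}$ while preserving all clustering costs up to the budget.

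I expect rule~(ii) to be the crux. Three requirements have to be met simultaneously: the peeling must preserve the truncated objective $\cost_p^{B}$ \emph{exactly}; it must respect the equal-size constraint at every step, so each deletion has to keep $n/k$ an integer and $k$ a valid cluster count; and it must come with a solution-lifting algorithm that, seeing only a near-optimal solution of the small reduced instance, recovers a near-optimal solution of the original — which in turn forces the rule to be phrased so that the discarded part is provably easy to re-cluster on the original input. Making the bookkeeping of (ii) tight enough to land at $\mathcal{O}(B^2)$ points, and dovetailing it with the budget rescaling needed to pass from $\cost'_p$ back to $\cost_p$, is where the real work lies; rules (i) and (iii) should be routine by comparison.
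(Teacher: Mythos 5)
Your plan is genuinely different from the paper's. You charge the factor $2$ to restricting cluster centres to input points, and then aim for an \emph{exact} kernel for that restricted variant. The paper never restricts centres; it keeps optimal (real-valued) medians throughout, and the factor $2$ comes from a single greedy step — repeatedly extracting a block of $s=n/k$ identical points as a complete cluster — whose loss is bounded by a min-cost-flow argument (Lemma~\ref{lem:bound-opt}: $\opt(\bfY,k-t)\le 2\opt(\bfX,k)$). The paper also handles $s\ge 4B+1$ by solving the instance exactly in polynomial time (Lemma~\ref{lem:polynomialtimealgorithm}, via forced medians plus minimum-weight perfect matching) rather than by a reduction rule, and it achieves the dimension/coordinate compression (your Rule~(iii)) via the well-separated-blocks decomposition in Lemma~\ref{lem:coordinatereduction1}.

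The gap is exactly where you flag it, in Rule~(ii), and I believe the ``exact peeling'' it relies on does not exist. Consider $s=3$, $k=2$, with three copies of $u=(0,0,0)$ and one copy each of $v_1,v_2,v_3$ equal to the three standard basis vectors of $\{0,1\}^3$ (Hamming norm, so $\|u-v_i\|_0=1$, $\|v_i-v_j\|_0=2$). With centres restricted to cluster members, the optimum is $3$: split the $u$'s, e.g.\ $\{u,u,v_1\}$ (cost $1$) and $\{u,v_2,v_3\}$ centred at $u$ (cost $2$). After peeling $\{u,u,u\}$ and decrementing $k$, the surviving instance $\{v_1,v_2,v_3\}$ with $k=1$ has restricted cost $4$. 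So peeling a whole homogeneous block of identical points can strictly increase the optimum in the centre-restricted model — this is precisely the phenomenon the paper's Lemma~\ref{lem:bound-opt} bounds by a factor of $2$, and once $u$ is deleted you cannot even appeal to ``centre may be any input point'' to repair it. Since you already spend the factor $2$ on restricting centres, Rule~(ii) must preserve the (truncated) optimum exactly, but the algorithm has no way to tell, in polynomial time, which $s$-blocks are safe to peel; a blind greedy peel would compound with the centre restriction and land at factor $4$, not $2$. Rule~(i) is sound (with threshold $(B+2)\cdot\frac nk$ the counting argument does force a homogeneous block in every near-optimal centre-restricted clustering), and Rule~(iii) is in the right spirit, but the crucial size-reduction step is not proved and appears to be genuinely lossy rather than exact.
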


In other words, the theorem provides a polynomial time algorithm that compresses the original instance $\bfX$ to a new instance 
whose size is bounded by a polynomial of $B$ and such that any $c$-approximate solution in the new instance can be turned in polynomial time to a $2c$-approximate solution of the original instance. 

A natural question is whether  the approximation ratio of lossy kernel in Theorem~\ref{thm:lossykernel1} is optimal.  While we do not have a complete answer to this question, we provide lower bounds supporting our study of the problem from the perspective of approximate kernelization. Our next result rules out the existence of an ``exact'' kernel for the problem.  To state the result, we need to define the decision version of  \probMEClust.  In this version, we call it \probEClust, the question is whether for a given budget $B$, there is 
a $k$-clustering $\{X_1,\ldots,X_k\}$ with clusters of the same size such that $\sum_{1\leq i \leq k}  \cost_p(X_i) \leq B$. 

\begin{restatable}{theorem}{Nokernel}{\label{thm:no-kern}}   
For $\ell_0$ and $\ell_1$-norms,
\probEClust has no polynomial kernel when parameterized by $B$, unless $\nonopoly$, even if the input points are binary, that is, are from $\{0,1\}^d$. 
\end{restatable}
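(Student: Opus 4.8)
The plan is to construct a cross-composition (in the sense of Bodlaender--Jansen--Kratsch) or, essentially equivalently, an OR-composition from an \NPC problem into \probEClust with the parameter $B$ bounded polynomially in the size of a single input instance and only polylogarithmically (in fact, with a bounded number of instances) in the number of composed instances. Concretely, I would start from an \NPC problem whose instances are naturally ``binary clustering''-flavoured; the natural candidate here is a same-size / balanced clustering decision problem on binary points with a \emph{small} budget, or a well-chosen graph problem such as a restricted version of \probMC or a perfect-matching/partition-into-cliques problem encoded over $\{0,1\}^d$. The key point is to first establish (or invoke, if it appears earlier in the full paper) that \probEClust is \NPC already for binary points with budget $B$ polynomially bounded in the instance description but \emph{not} in $d$ or $n$; then the composition needs to merge $t$ such instances into one instance whose budget is $\Oh(\mathrm{poly}(\max_i |I_i| + \log t))$.

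The main structural device I would use is ``coordinate blocking'': given instances $I_1,\ldots,I_t$, each a set of binary points in some dimension $d_i$ with target budget $B_i$, I place the points of instance $I_j$ in a fresh block of coordinates, padding all other blocks with zeros, so that points from different instances are automatically far apart in Hamming ($\ell_0$) or $\ell_1$ distance. To force a correct solution to ``respect'' the blocks, I add, in each block $j$, a large number of identical ``anchor'' points (for instance copies of a fixed pattern, or simply of $\mathbf 0$ restricted to that block) so that any clustering staying within budget must put the original points of $I_j$ together with the anchors of block $j$ and cannot mix blocks; the equal-size constraint is then enforced by padding each instance to a common cluster size $n/k$ using these anchor points. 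The budget of the composed instance becomes $B = \sum_j (\text{internal cost of } I_j) + O(1)$ per block — here is the first subtlety: a naive sum of $t$ budgets is \emph{not} polynomially bounded, so I must instead use the standard cross-composition trick of partitioning the $t$ instances into $\sqrt t$ groups of $\sqrt t$ and arguing only a polynomial blow-up, OR — cleaner in this setting — ensure that the composed YES-instance has budget exactly the budget of whichever $I_j$ is the "active" instance, by making the anchor padding cost-free (anchors coincide exactly so contribute $0$ to $\cost_p$) and making all cross-block contributions provably $0$ in any budget-respecting solution. With cross-block interactions costing $0$ and anchor padding costing $0$, the total cost of a valid solution equals $\sum_j \cost(\text{solution restricted to } I_j)$, and here we really do need the OR-composition refinement: the instance is a YES-instance iff \emph{some} $I_j$ is, but to keep $B$ small we want the composed budget to be roughly $\max_j B_j$, not $\sum_j B_j$. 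This forces the composition to be set up so that in a YES-solution only one block incurs any cost at all, with all other blocks clustered at cost $0$ (which is why each $I_j$ must itself be padded with a large identical-point cluster whose removal leaves a trivially-$0$-cost clustering, e.g., by duplicating each point $n/k$ times so a ``lazy'' clustering groups identical copies together).

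The steps in order: (1) fix the source \NPC problem $P$ and verify the polynomial-equivalence relation / the condition that all instances of one equivalence class share the same $k$, same $n/k$, and same dimension after a trivial normalisation; (2) describe the coordinate-block embedding and the anchor-point padding, defining the composed instance $(\bfX, k, B)$ with $k$ and $B$ polynomial in $\max_i|I_i|$; (3) prove the forward direction: if some $I_j$ is a YES-instance, exhibit a same-size clustering of $\bfX$ of cost $\le B$ by clustering block $j$ according to the solution of $I_j$ and clustering every other block by grouping identical padded copies at cost $0$; (4) prove the backward direction: any same-size clustering of $\bfX$ of cost $\le B$ cannot place two points from different blocks in the same cluster (each such pair costs $\ge 1$ per differing coordinate and there are $\Omega(\max_i d_i)$ such coordinates, exceeding $B$), hence the clustering decomposes per block, and since each block other than one must be clustered at cost $0$ by the anchor-majority argument, the single nontrivial block yields a cost-$\le B_j$ same-size clustering of $I_j$, so $I_j$ is a YES-instance; (5) conclude via the cross-composition theorem that a polynomial kernel for \probEClust parameterized by $B$ would imply $\nonopoly$, and observe that the whole construction stays inside $\{0,1\}^d$ so it applies to the $\ell_0$-norm; for $\ell_1$ on binary points $\cost_1$ and $\cost_0$ differ only by the choice of center coordinates being in $\{0,1\}$ versus $\mathbb R$, and a short argument (the optimal $\ell_1$ center of binary points is the coordinate-wise majority, which is binary) shows the $\ell_1$ hardness follows from the same instance.

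The hard part will be step (4) together with the budget accounting: one must choose the number of anchor points per block, and the padding of each $I_j$, simultaneously so that (a) the equal-size constraint is exactly satisfiable, (b) cross-block mixing is strictly forbidden by the budget, (c) every block except the ``active'' one is forced to cost $0$ rather than just "small", and (d) $B$ remains $\mathrm{poly}(\max_i|I_i|)$ independent of $t$ — reconciling (c) and (d) is precisely where the standard OR-composition vs.\ cross-composition machinery must be deployed carefully (e.g., grouping instances so that the composed budget is $\mathrm{poly}(\max_i|I_i| + \log t)$, which suffices for ruling out polynomial kernels under $\nopoly$). A secondary technical nuisance is making sure that duplicated/multiset points behave as intended under $\cost_p$ — but since \probMEClust is explicitly defined on multisets, this is compatible with the problem statement.
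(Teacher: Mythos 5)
Your proposal and the paper's proof take genuinely different routes, and I think yours has a gap that is not just technical but conceptual.

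The paper does \emph{not} build a cross-composition. Instead, it gives a \emph{polynomial parameter transformation} from \probDM: starting from an $r$-uniform hypergraph $\mathcal{H}$ with $n$ vertices and $m$ hyperedges, it builds a single instance $(\bfX,k,B)$ of \probEClust with budget $B = (3r-2)n = \mathcal{O}(n^2)$, i.e., polynomial in $|V(\mathcal{H})|$ and \emph{independent of the number of hyperedges} $m$. It then invokes the Dell--Marx lower bound (Proposition~\ref{prop:dmatch}), which already encapsulates the composition machinery: \probDM has no polynomial kernel when parameterized by $|V(\mathcal{H})|$. Since a polynomial kernel for \probEClust in $B$ would transfer back to a polynomial kernel for \probDM in $|V(\mathcal{H})|$, the lower bound follows. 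The whole ``OR over many instances'' difficulty is delegated to the source problem.

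Your plan to run a direct OR- or cross-composition into \probEClust runs into an obstruction you half-identify but do not resolve. In a clustering problem \emph{every input point must be placed in some cluster}; there is no way to ``ignore'' a block. So when you compose $t$ instances into coordinate blocks, each block contributes its own minimum clustering cost to the total, and you face a dilemma. If you pad each block so that a trivial (``lazy'') equal-size clustering of cost $0$ always exists --- your proposed fix of duplicating each point $n/k$ times --- then the composed instance has a cost-$0$ solution \emph{whether or not any $I_j$ is a yes-instance}, so the composition is vacuous. If instead you do \emph{not} give inactive blocks a free cost-$0$ clustering, then the total cost accumulates across blocks, and the budget of the composed instance necessarily scales at least linearly in $t$, which violates the $\mathrm{poly}(\max_i|I_i| + \log t)$ bound required by cross-composition. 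The ``$\sqrt t$-grouping'' trick you mention does not help with this: it is used when the parameter already stays bounded for a composition of a special number of instances (say a power of two), not to make an unbounded additive blow-up vanish. Nothing in your construction singles out one block as ``active,'' so there is no mechanism by which exactly one block's cost is charged and the others are free --- that asymmetry cannot exist in a valid composition, yet your budget accounting relies on it. This is precisely the difficulty that the paper sidesteps by reducing from \probDM, where the ``parameter'' ($|V(\mathcal{H})|$) is already small while the instance (the hyperedge set) can be huge.

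Two further smaller points. First, you never pin down the source problem; the choice of \probDM (and specifically the Dell--Marx result that it is hard to kernelize in the \emph{number of vertices}) is essential, because this is what gives you an NP-hard problem with a compact parameterization and a superpolynomial amount of ``ambient'' information. Second, the part of your proposal about $\ell_1$ vs.\ $\ell_0$ --- that on binary points the optimal $\ell_1$ median is the coordinate-wise majority and hence binary, so the two norms coincide --- is correct and is exactly the paper's Observation~\ref{obs:majority}. But that does not rescue the composition itself.
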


On the other hand, we prove that  \probEClust admits a polynomial kernel when parameterized by $k$ and $B$.
 
\begin{restatable}{theorem}{Kernel}\label{thm:kern}
For every nonnegative integer constant $p$,
\probEClust admits a polynomial kernel when parameterized by $k$ and $B$, where the output collection of points has $\mathcal{O}(kB)$ points of $\mathbb{Z}^{d'}$ with $d'= \mathcal{O}(kB^{p+1})$ and  each coordinate of a point takes an absolute value of $\mathcal{O}(kB^2)$.
\end{restatable}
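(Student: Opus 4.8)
The plan is to give a polynomial-time chain of reduction rules that, one aspect at a time, bounds the number of distinct input vectors, their multiplicities, the dimension, and finally the coordinate magnitudes, each rule being provably \emph{safe}, i.e.\ preserving whether the instance is a yes-instance. Everything hinges on one structural fact about an arbitrary clustering $\{X_1,\dots,X_k\}$ of $\cost_p$ at most $B$, with (optimal) centers $\bfc_1,\dots,\bfc_k$: since the input points are integral, only $\mathcal{O}(B)$ of them are \emph{non-central} (differ from the center of their cluster) --- for $p\in\{0,1\}$ the center may be taken integral and each non-central point costs at least $1$, while for $p\ge 2$ one instead uses that an $\ell_p$-ball of radius below $1$ contains only a bounded (depending on $p$) number of integer points --- and a counting argument over the supports of the non-central points shows that at most $\mathcal{O}(kB^{p+1})$ coordinates are \emph{active}, i.e.\ non-constant inside some cluster. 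Consequently, in a yes-instance at most $k+\mathcal{O}(B)$ distinct vectors occur.

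\textbf{Bounding the number of points.} If more than $k+\mathcal{O}(B)$ distinct vectors occur, we return a trivial no-instance. To bound multiplicities, as long as the common cluster size $s:=n/k$ exceeds $B$ and some vector $\bfv$ occurs at least $s$ times, we delete $s$ copies of $\bfv$ and decrease $k$ by one (which leaves $n/k$ equal to $s$). This is safe: conversely we reinsert a monochromatic $\bfv$-cluster at cost $0$; for the forward direction, since $\bfv$ occurs more than $B$ times it is necessarily a cluster center in any budget-respecting clustering, and we repeatedly swap that cluster's non-central points with stray copies of $\bfv$ located elsewhere --- each swap is cost-non-increasing by the triangle inequality --- until the cluster is monochromatic, then remove it. Once this rule no longer applies, either $s\le B$, in which case $n=ks=\mathcal{O}(kB)$ already, or one shows $k=\mathcal{O}(B)$ and that the $\le k$ vectors of multiplicity at least $s-B$ are exactly the cluster centers and lie in pairwise distinct clusters; we then shave $s-\Theta(B)$ copies off each of these center vectors and reset $s$ to $\Theta(B)$, justified by an analogous swap-and-augment argument. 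After this phase the cluster size is $s=\mathcal{O}(B)$, hence $n=ks=\mathcal{O}(kB)$, over $\mathcal{O}(k+B)$ distinct vectors.

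\textbf{Bounding the dimension and the magnitudes.} First delete every coordinate that is constant over all remaining points: it contributes $0$ to every clustering and can be reintroduced for free. This does not finish the job, because a coordinate may be \emph{passive} --- constant inside every cluster, hence costless --- while still non-constant globally. We therefore also bundle coordinates with identical columns (and, more generally, coordinates inducing the same partition of the point set) and truncate any bundle of more than $B+1$ such coordinates to $B+1$, since a bundle of more than $B$ mutually parallel coordinates must be monochromatic inside every cluster of a budget-respecting clustering; combined with a further elimination step that discards all but a bounded retained family of passive coordinates, and with the support bound above, this leaves $\mathcal{O}(kB^{p+1})$ coordinates. For the magnitudes, in each coordinate we sort the occurring values and, whenever two consecutive values differ by more than $B+1$, translate everything above the gap downwards so the gap becomes exactly $B+1$: any cluster with points on both sides of a gap of length $g$ has cost at least $g$, so no budget-respecting clustering crosses a gap longer than $B$, and shrinking an uncrossed gap leaves every cluster's cost unchanged. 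Afterwards each coordinate takes $\mathcal{O}(kB)$ distinct values with consecutive gaps at most $B+1$, so after one global shift all entries lie in $[0,\mathcal{O}(kB^2)]$.

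Finally, one verifies that every rule runs in polynomial time and that their composition meets the claimed bounds, so the resulting instance is the desired kernel. \textbf{The main obstacle} is the dimension reduction: the structural fact only certifies that few coordinates are active in an \emph{optimal clustering we cannot see}, and turning this into instance-oblivious rules that provably preserve the answer --- in particular handling the possibly many passive-but-non-constant coordinates --- is the delicate part; a secondary subtlety is making the ``$\mathcal{O}(B)$ non-central points'' claim precise when $p\ge 2$, where optimal centers need not be integral and one must rely on the integrality of the input.
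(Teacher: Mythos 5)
Your overall plan (bound the number of points, then the dimension, then the magnitudes) matches the paper's in outline, but the heart of the argument — the dimension reduction — is genuinely missing, and you say so yourself. Let me be concrete about why the sketch you give does not close the gap and what the paper actually does.

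The paper first disposes of the case $\frac{n}{k}\geq 4B+1$ outright: by Lemmata~\ref{lem:big-clust2} and~\ref{lem:init-mean} the $k$ optimum medians must coincide with the $k$ high-multiplicity vectors of $\bfX$, and given those medians the optimal assignment is a minimum-weight perfect matching (Lemma~\ref{lem:means-clusters}), so Lemma~\ref{lem:polynomialtimealgorithm} solves the instance in polynomial time and the kernel is trivial. This sidesteps your ``shave $s-\Theta(B)$ copies and reset $s$'' step, which changes the cluster size $s=n/k$ while keeping $k$ fixed; that is a nontrivial modification of the instance, and your swap-based safety argument would need a careful statement and proof to ensure no new cheap clustering appears after shaving (in general, reducing multiplicities of the forced centers unevenly can make the counting fail). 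The paper avoids this difficulty entirely: once $s\leq 4B$ you already have $n\leq 4Bk=\Oh(kB)$ points with no further work.

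For the dimension reduction, your proposed rules do not preserve the answer. Deleting a coordinate can only decrease every clustering's cost, so dropping ``passive'' coordinates is safe for the forward direction but not the reverse: a no-instance can become a yes-instance. Your bundle-and-truncate idea does not fix this — the number of column-equivalence classes is not bounded by any function of $k$ and $B$, and truncating bundles says nothing about coordinates that lie in singleton classes. You flag this honestly as ``the delicate part,'' but that is precisely the content of the kernel. The paper's Lemma~\ref{lem:coordinatereduction1} resolves it by a different device: it greedily groups points into parts $S_1,\dots,S_t$ by transitive $\ell_p$-distance-$B$ closeness, proves (Claim~\ref{claim:clustercontainsinpartition}) that every cluster of any budget-respecting clustering lies entirely inside one part, bounds the number of distinct points per part by $k(2B+1)$ and hence the nonuniform coordinates per part by $\Oh(kB^{p+1})$ using pairwise integrality (two distinct integer points at $\ell_p$-distance $\leq B$ differ in at most $B^p$ coordinates — note this avoids reasoning about possibly non-integral centers altogether), and then projects each part onto its nonuniform coordinates \emph{augmented with one extra coordinate set to $(j-1)(B+1)$ for part $S_j$}. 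That extra ``block index'' coordinate is the crucial ingredient you lack: it guarantees that any cluster straddling two parts in the reduced instance costs more than $B$, so the reduced instance does not accidentally become a yes-instance. Finally, translating each part so that each retained coordinate's minimum is $0$ bounds magnitudes by $\Oh(kB^2)$; this subsumes your per-coordinate gap-shrinking.

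A smaller but real issue: your structural fact is stated as ``only $\Oh(B)$ points are non-central,'' but for $p\geq 2$ an optimum median need not be integral, in which case \emph{every} point differs from the center of its cluster and the phrase has no clear meaning. The paper's Lemma~\ref{lem:big-clust1} states the fact you need more carefully — every cluster of cost at most $B$ contains at least $s-2B$ identical points — and its proof uses only the triangle inequality plus integrality of the input points, with no integrality assumption on the median. Also note that your derivation would, if made rigorous, yield $\Oh(B^{p+1})$ active coordinates rather than the $\Oh(kB^{p+1})$ you later quote; the $k$-factor in the paper comes from the per-part bound of $k(2B+1)$ distinct points, not from a global bound on non-central points, so the two accountings are not interchangeable.
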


 When it comes to approximation in polynomial time, we show (Theorem~\ref{thm:no-PTAS}) that   it is 
 \classNP-hard to obtain a $(1+\epsilon_c)$-approximation for \probMEClust with $\ell_0$ $($or $\ell_1)$ distances for some $\epsilon_c > 0$. However, parameterized by $k$ and $\epsilon$, the standard techniques yield $(1+\epsilon)$-approximation in FPT time.
 For $\ell_2$ norm, there is a general framework for designing algorithms of this form for \probClust with additional constraints on cluster sizes, introduced by Ding and Xu~\cite{ding2020unified}. The best-known improvements by Bhattacharya et al.~\cite{Bhattacharya2018} achieve a running time of $2^{\widetilde{\mathcal{O}}(k/\epsilon^{\mathcal{O}(1)})} n^{\mathcal{O}(1)}d$ in the case of \probMEClust, where $\widetilde{\mathcal{O}}$ hides polylogarithmic factors. In another line of work, FPT-time approximation is achieved via constructing small-sized coresets of the input, and the work~\cite{DBLP:journals/corr/abs-2007-10137} guarantees an $\epsilon$-coreset for \probMEClust (in $\ell_2$ norm) of size $(k d\log n/\epsilon )^{\mathcal{O}(1)}$, and consequently a $(1 + \epsilon)$-approximation  algorithm with running time $2^{\widetilde{\mathcal{O}}(k/\epsilon^{\mathcal{O}(1)})} (nd)^{\mathcal{O}(1)}$.

 Moreover, specifically for \probMEClust, simple $(1 + \epsilon)$-approximations with similar running time can be designed directly via sampling. A seminal work of Kumar et al.~\cite{DBLP:journals/jacm/KumarSS10} achieves a $(1 + \epsilon)$-approximation for \probClust (in $\ell_2$ norm)  with running time $2^{\widetilde{\mathcal{O}}(k/\epsilon^{\mathcal{O}(1)})} nd$. The algorithm proceeds as follows. First, take a small uniform sample of the input points, and by guessing ensure that the sample is taken only from the largest cluster. Second, estimate the optimal center of this cluster from the sample. In the case of \probClust, Theorem 5.4 of~\cite{DBLP:journals/jacm/KumarSS10} guarantees that from a sample of size $(1/\epsilon)^{\mathcal{O}(1)}$ one can compute in time $2^{(1/\epsilon)^{\mathcal{O}(1)}} d$ a set of candidate centers such that at least one of them provides a $(1 + \epsilon)$-approximation to the cost of the cluster. Finally, ``prune'' the set of points so that the next largest cluster is at least $\Omega(1/k)$ fraction of the remaining points and continue the same process with one  less cluster. One can observe that in the case of \probMEClust, a simplification of the above algorithm suffices: one does not need to perform the ``pruning'' step, as we are only interested in clusterings where all the clusters have size exactly $n / k$.
 Thus, $(1/\epsilon)^{\mathcal{O}(1)}$-sized uniform samples from each of the clusters can be computed immediately in total time $2^{\widetilde{\mathcal{O}}(k/\epsilon^{\mathcal{O}(1)})} nd$. This achieves $(1 + \epsilon)$-approximation for \probMEClust with the same running time as the algorithm of Kumar et al. In fact, the same procedure works for $\ell_0$ norm as well, where for estimating the cluster center it suffices to compute the optimal center of a sample of size $\mathcal{O}(1/\epsilon^2)$, as proven by Alon and Sudakov~\cite{AlonS99}. Thus, in terms of FPT approximation, \probMEClust is surprisingly ``simpler'' than its unconstrained variant \probClust, however, our hardness result of Theorem~\ref{thm:no-PTAS} shows that the problems are similarly hard in terms of polynomial time approximation.

 \medskip\noindent
\textbf{Related work.} Since the work of 
Har-Peled and Mazumdar \cite{har2004coresets} for $k$-means and  $k$-median clustering, designing small coresets for clustering has become a flourishing research direction.
 For these problems, after a series of interesting works,  the best-known upper bound on coreset size in general metric space is  
$\Oh((k\log n)/\epsilon^2)$
 \cite{feldman2011unified} and the lower bound is known to be $\Omega(({k}\log n)/{\epsilon})$ \cite{baker2019coresets}. For the Euclidean space (i.e., $\ell_2$-norm) of dimension $d$, it is possible to construct coresets of size $(k/\epsilon)^{\Oh(1)}$ \cite{feldman2013turning,sohler2018strong}. Remarkably,  the size of the coresets in this case does not depend on $n$ and $d$.  For \probMEClust, the best known coreset size of $(kd\log n/\epsilon)^{O(1)}$ (for $p=2$) follows from coresets for the more general  capacitated clustering problem \cite{Cohen-AddadL19,DBLP:journals/corr/abs-2007-10137}.

Clustering is undoubtedly one of the most common procedures in unsupervised  machine learning.  We refer to the   book \cite{Aggarwalbook13} for an overview on clustering. \probMEClust belongs to a wide class of clustering with constraints on the sizes of the clusters.  
In many applications of clustering,  constraints come naturally \cite{basu2008constrained}.  In particular, there is a rich literature on approximation algorithms for various versions of capacitated clustering.  

While  for the capacitated version of $k$-median and $k$-means in general metric space,  no polynomial time $O(1)$-approximation is known,  bicriteria constant-approximations   violating either the capacity constraints or the constraint on the number of clusters, by an $O(1)$ factor  can be obtained~\cite{ByrkaRU16,ByrkaFRS15,CharikarGTS02,ChuzhoyR05,DemirciL16,Li15,Li17}.  Cohen-Addad and Li~\cite{Cohen-AddadL19} designed FPT $\approx 3$- and $\approx 9$-approximation with parameter $k$ for the capacitated version of $k$-median and $k$-means, respectively. For these problems in the Euclidean plane, Cohen-Addad~\cite{Cohen-Addad20} obtained a true PTAS. Moreover, for higher dimensional spaces (i.e., $d\ge 3$), he designed a $(1+\epsilon)$-approximation that runs in time $n^{{(\log n/\epsilon)}^{O(d)}}$~\cite{Cohen-Addad20}. Being a restricted version of capacitated clustering, \probMEClust admits all the approximation results mentioned above.  

 \medskip\noindent
\textbf{Our approach.} We briefly sketch the main ideas behind the construction of our lossy kernel for \probPEClust. 
The lossy kernel's main ingredients are a) a polynomial algorithm based on an algorithm for computing a minimum weight perfect matching in bipartite graphs, b) preprocessing rules reducing the size and dimension of the problem, and c) a greedy algorithm. Each of the steps is relatively simple and easily implementable. However, the proof that these steps result in the lossy kernel with required properties is not easy. 

Recall that for a given budget $B$, we are looking for a $k$-clustering of a collection of point $\bfX=\{\bfx_1,\ldots,\bfx_n\}$ into $k$ clusters of the same size minimizing the cost. We also assume that the cost is $B+1$ if the instance points do not admit a clustering of cost at most $B$. Informally, we are only interested in optimal clustering when its cost does not exceed the budget. 
 First,  if the cluster's size $s=\frac{n}{k}$ is sufficiently large (with respect to the budget), we can construct an optimal clustering in polynomial time. More precisely, we prove that if $s\geq 4B+1$, then the clusters' medians could be selected from $\bfX$.  Moreover, we show how to identify the (potential) medians in polynomial time. In this case, constructing an optimal  $k$-clustering could be reduced to the classical problem of computing a perfect matching of minimum weight in a bipartite graph. 

The case of cluster's size $s\leq 4B$ is different. We apply a set of reduction rules. These rules run in polynomial time. After exhaustive applications of reduction rules, we either correctly conclude that the considered instance has no clustering of cost at most $B$ or constructs an equivalent reduced instance.  In the equivalent instance, the dimension is reduced to $\Oh(kB^{p+1})$ while the absolute values of the coordinates of the points are in $\Oh(kB^2)$. 

Finally, we apply the only approximate reduction on the reduced instance. The approximation procedure is greedy: whenever there are $s$ identical points, we form a cluster out of them. For the points remaining after the exhaustive application of the greedy procedure, we conclude that either there is no clustering of cost at most $B$ or the number of points is $\Oh(B^2)$.  This construction leads us to the lossy kernel. However the greedy selection of the clusters composed of identical points maybe is not optimal. In particular, the reductions used to obtain our algorithmic lower bounds given in Sections~\ref{sec:kern} and \ref{sec:apxhard} exploit the property that it may be beneficial to split a block of $s$ identical points between distinct clusters.

 Nevertheless,  the greedy clustering of identical points leads to a $2$-approximation. The proof of this fact requires some work. 
We evaluate the clustering cost obtained from a given optimal clustering by swapping some points to form clusters composed of identical points. Further, we upper bound the obtained value by the cost of the optimum clustering. For the last step, we introduce an auxiliary clustering problem formulated as a min-cost flow   problem. This reduction allows to evaluate the cost and obtain the required upper bound.

\medskip\noindent
\textbf{Organization of the paper.} The remaining part of the paper is organized as follows. In Section~\ref{sec:prelim}, we introduce basic notation and show some properties of clusterings. In Section~\ref{sec:kernelization}, we show our main result that   \probPEClust admits a lossy kernel. In Section~\ref{sec:kern}, we complement this result by proving that it is unlikely that \probEClust admits an (exact) kernel of polynomial size when parameterized by $B$. Still, the problem has a polynomial kernel when parameterized by $B$
 and $k$. In Section~\ref{sec:apxhard}, we show that \probMEClust is \textsf{APX}-hard. We conclude in Section~\ref{sec:concl} by stating some open problems.

\section{Preliminaries}\label{sec:prelim}
In this section, we give basic definition and introduce notation used throughout the paper. We also state some useful auxiliary results.

\paragraph{Parameterized complexity and kernelization.} We refer to the recent books~\cite{CyganFKLMPPS15,FominLSZ19} for the formal introduction to the area. Here we only define the notions used in our paper. 

Formally, a \emph{parameterized problem} $\Pi$ is a subset of $\Sigma^*\times \mathbb{N}$, where $\Sigma$ is a finite alphabet. Thus, an instance of $\Pi$ is a pair $(I,k)$, where $I\subseteq\Sigma^*$ and $k$ is a nonnegative integer called a \emph{parameter}. It is said that a parameterized problem $\Pi$ is \emph{fixed-parameter tractable} (\classFPT) if it can be solved in $f(k)\cdot |I|^{\Oh(1)}$ time for some computable function $f(\cdot)$. 

A \emph{kernelization} algorithm (or \emph{kernel}) for a parameterized problem $\Pi$ is an algorithm that, given an instance $(I,k)$ of $\Pi$, in polynomial time produces an instance $(I',k')$ of $\Pi$ such that
\begin{itemize}
\item[(i)] $(I,k)\in \Pi$ if and only if $(I',k')\in \Pi$, and
\item[(ii)] $|I'|+k'\leq g(k)$ for a computable function $g(\cdot)$.
\end{itemize}
The function $g(\cdot)$ is called the \emph{size} of a kernel; a kernel is \emph{polynomial} if $g(\cdot)$ is a polynomial. 
Every decidable \classFPT problem admits a kernel. However, it is unlikely that all \classFPT problems have polynomial kernels and the parameterized complexity theory provide tools for refuting the existence of polynomial kernels up to some reasonable complexity assumptions. The standard assumption here is that  $\nopoly$. 

We also consider the parameterized analog of optimization problems. Since we only deal with minimization problems where the minimized value is nonnegative, we state the definitions only for optimization problems of this type. 
A \emph{parameterized minimization} problem $P$ is a computable function 
\begin{equation*}
P\colon \Sigma^*\times \mathbb{N} \times \Sigma^* \rightarrow \mathbb{R}_{\geq 0} \cup \{+\infty\}.
\end{equation*}
An \emph{instance} of $P$ is a pair $(I,k)\in\Sigma^*\times \mathbb{N}$, and a \emph{solution} to $(I,k)$ is a string $s\in\Sigma^*$ such that $|s|\leq |I|+k$. 
The instances of a parameterized minimization problem $P$ are pairs $(I,k) \in \Sigma^* \times \mathbb{N}$, and a solution to $(I,k)$ is simply a string $s \in \Sigma^*$, such that $|s| \leq |I|+k$. Then the function $P(\cdot,\cdot,\cdot)$ defines the \emph{value} $P(I,k,s)$ of a solution $s$ to an instance $(I,k)$. The optimum value of an instance $(I,k)$ is 
\begin{equation*} 
 \opt_{P}(I,k)=\min_{s \in \Sigma^* \text{ s.t. } |s| \leq |I|+k}P(I,k,s).
\end{equation*}
A solution $s$ is \emph{optimal} if $\opt_P(I,k)=P(I,k,s)$. A parameterized minimization problem $P$ is said to be \classFPT if there is an algorithm that for each instance $(I,k)$ of $P$ computes an optimal solution $s$ in $f(k)\cdot |I|^{\Oh(1)}$ time, where $f(\cdot)$ is a computable function. Let $\alpha\geq 1$ be a real number. An \classFPT $\alpha$-approximation algorithm for $P$ is an algorithm that in $f(k)\cdot |I|^{\Oh(1)}$ time computes a solution $s$ for $(I,k)$ such that $P(I,k,s)\leq \alpha\cdot\opt_P(I,k)$, where $f(\cdot)$ is a computable function. 

It is useful for us to make some comments about defining $P(\cdot,\cdot,\cdot)$ for the case when the considered problem is parameterized by the solution value. For simplicity, we do it informally and refer to~\cite{FominLSZ19}  for details and explanations. If $s$ is not a ``feasible'' solution to an instance $(I,k)$, then it is convenient to assume that $P(I,k,s)=+\infty$. Otherwise, if $s$ is ``feasible''  but its value is at least $k+1$, we set $P(I,k,s)=k+1$. 

\paragraph{Lossy  kernel.} 
Finally  we define \emph{$\alpha$-approximate} or \emph{lossy} kernels for parameterized minimization problems. Informally, an $\alpha$-approximate kernel of size $g(\cdot)$ is a polynomial time algorithm, that given an instance $(I,k)$, outputs an instance $(I',k')$ such that $|I|+k \leq g(k)$ and any $c$-approximate solution $s'$ to $(I',k')$ can be turned in polynomial time into a $(c \cdot \alpha)$-approximate solution $s$ to the original instance $(I,k)$.  More precisely, let $P$ be a parameterized minimization problem and let $\alpha\geq 1$. An \emph{$\alpha$-approximate} (or \emph{lossy}) kernel for $P$ is a pair of polynomial algorithms $\mathcal{A}$ and $\mathcal{A}'$ such that 
\begin{itemize}  
\item[(i)] given an instance $(I,k)$, $\mathcal{A}$ (called a \emph{reduction algorithm}) computes an instance $(I',k')$ with $|I'|+k'\leq g(k)$, where $g(\cdot)$ is a computable function,
\item[(ii)] the algorithm $\mathcal{A}'$ (called a \emph{solution-lifting algorithm}), given the initial instance $(I,k)$, the instance $(I',k')$ produced by $\mathcal{A}$, and a solution $s'$ to $(I',k')$, computes an solution $s$ to $(I,k)$ such that
\begin{equation*}
\frac{P(I,k,s)}{\opt_P(I,k)}\leq \alpha\cdot \frac{P(I',k',s')}{\opt_P(I',k')}.
\end{equation*} 
\end{itemize}
To simplify notation, we assume here that $\frac{P(I,k,s)}{\opt_P(I,k)}=1$ if $\opt_P(I,k)=0$ and use the same assumption   
for $\frac{P(I',k',s')}{\opt_P(I',k')}$.
As with classical kernels, $g(\cdot)$ is called the \emph{size} of an approximate kernel, and an approximate kernel is polynomial if $g(\cdot)$ is a polynomial.

\paragraph{Vectors and clusters.} For a vector $\bfx\in\mathbb{R}^d$, we use $\bfx[i]$ to denote the $i$-th element of the vector for $i\in\{1,\ldots,d\}$. 
For a set of indices  $R \subseteq \{1,\ldots,d\}$, $\bfx[R]$ denotes the vector of $\mathbb{R}^{|R|}$ composed by the elements of vector $\bfx$ from set $R$, that is,
if $R=\{i_1,\ldots,i_r\}$ with $i_1<\ldots<i_r$ and $\bfy=\bfx[R]$, then $\bfy[j]=\bfx[i_j]$ for $j\in\{1,\ldots,r\}$. In our paper, we consider collections $\bfX$  of points of $\mathbb{Z}^d$. We underline that  some points of such a collection may be identical. 
However, to simplify notation, we assume throughout the paper that the identical points of  $\bfX$ are distinct elements of $\bfX$ assuming that the points are supplied with unique identifiers. By this convention, we often refer to (sub)collections of points as (sub)sets and apply the standard set notation.  

Let $X$ be a collection of points of $\mathbb{Z}^d$. 
For a vector $\bfc\in\mathbb{R}^d$, we define the \emph{cost of $X$ with respect to $\bfc$} as
\begin{equation*}
\cost_p(X,\bfc)=\sum_{\bfx\in X}\|\bfc-\bfx\|_p.
\end{equation*}
Slightly abusing notation we often refer to $\bfc$ as a (given) \emph{median} of $X$. We say that $\bfc^*\in\mathbb{R}^d$ is an \emph{optimum median} of $X$ if 
$\cost_p(X)=\cost_p(X,\bfc^*)=\min_{\bfc\in \mathbb{R}^d}\cost_p(X,\bfc)$. Notice that the considered collections of points have integer coordinates but the coordinates of medians are not constrained to integers and may be real. 

Let $\bfX=\{\bfx_1,\ldots,\bfx_n\}$ a collection of points of $\mathbb{Z}^d$ and let $k$ be a positive integer such that $n$ is divisible by $k$. We say that a partition $\{X_1,\ldots,X_k\}$ of $\bfX$ is an 
\emph{equal $k$-clustering} of $\bfX$ if  $|X_i|=\frac{n}{k}$ for all $i \in \{1,\ldots,k\}$. For an equal $k$-clustering  $\{X_1,\ldots,X_k\}$  and given vectors $\bfc_1,\ldots,\bfc_k$, we define the \emph{cost of clustering with respect to $\bfc_1,\ldots,\bfc_k$} as 
\begin{equation*}
 \cost_p(X_1,\ldots,X_k,\bfc_1,\ldots,\bfc_k)=\sum_{i=1}^k\cost_p(X_i,\bfc_i).
 \end{equation*} 
 The \emph{cost} of an equal $k$-clustering  $\{X_1,\ldots,X_k\}$  is
 $\cost_p(X_1,\ldots,X_k)=\cost_p(X_1,\ldots,X_k,\bfc_1,\ldots,\bfc_k)$, where $\bfc_1,\ldots,\bfc_k$ are optimum medians of $X_1,\ldots,X_k$, respectively.
For an integer $B\geq 0$,
 \begin{equation*}
 \cost_p^B(X_1,\ldots,X_k)=
 \begin{cases}
\cost_p(X_1,\ldots,X_k)&\mbox{if } \cost_p(X_1,\ldots,X_k)\leq B,\\
B+1&\mbox{otherwise.}
\end{cases}
 \end{equation*}  
We define
\begin{equation*}
\opt(\bfX,k)=\min\{\cost_p(X_1,\ldots,X_k)\mid \{X_1,\ldots,X_k\}\text{ is an equal }k\text{-clustering of }\bfX \},
\end{equation*}
and given a nonnegative integer $B$,
\begin{equation*}
\opt(\bfX,k,B)=\min\{\cost_p^B(X_1,\ldots,X_k)\mid \{X_1,\ldots,X_k\}\text{ is an equal }k\text{-clustering of }\bfX \}.
\end{equation*}

\medskip
We conclude this section by the observation that, given vectors $\bfc_1,\ldots,\bfc_k\in \mathbb{R}^d$, we can find an equal $k$-clustering $\{X_1,\ldots,X_k\}$ that minimizes  $\sum_{i=1}^kcost_p(X_i,\bfc_i)$  using a reduction to the classical \probPM problem on bipartite graphs that is well-known to be solvable in polynomial time.
Recall that a \emph{matching} $M$ of a graph $G$ is a set of edges without common vertices. It is said that a matching $M$ \emph{saturates} a vertex $v$ if $M$
has an edge incident to $v$. A matching $M$ is \emph{perfect} if every vertex of $G$ is saturated. The task of \probPM is, given a bipartite graph $G$ and a weight function $w\colon E(G)\rightarrow \mathbb{Z}_{\geq 0}$, find a perfect matching $M$ (if it exists) such that its weight $w(M)=\sum_{e\in M}w(e)$ is minimum. The proof of the following lemma essentially repeats the proof of Lemma~1 of~\cite{FominGPS21} but we provide it here for completeness.

\begin{lemma}\label{lem:means-clusters}
Let $\bfX=\{\bfx_1,\ldots,\bfx_n\}$ be a collection of points of $\mathbb{Z}^d$ and $k$ be a positive integer such that $n$ is divisible by $k$. Let  also $\bfc_1,\ldots,\bfc_k\in \mathbb{R}^d$. Then an equal 
$k$-clustering $\{X_1,\ldots,X_k\}$ of minimum  $\cost(X_1,\ldots,X_k,\bfc_1,\ldots,\bfc_k)$ can be found in polynomial time. 
\end{lemma}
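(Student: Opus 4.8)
The plan is to reduce the problem to \probPM on an auxiliary bipartite graph. First I would build a bipartite graph $G$ with one side $\{u_1,\ldots,u_n\}$ representing the points $\bfx_1,\ldots,\bfx_n$ of $\bfX$ and the other side consisting of ``slots'': for each cluster index $i\in\{1,\ldots,k\}$ we create $s=\frac{n}{k}$ vertices $v_{i,1},\ldots,v_{i,s}$, so that the right side has exactly $n$ vertices as well. We connect every point-vertex $u_j$ to every slot-vertex $v_{i,t}$ by an edge of weight $w(u_j v_{i,t}) = \|\bfc_i - \bfx_j\|_p$ (a nonnegative integer after clearing denominators; since the $\bfc_i$ are fixed rationals this weight can be computed in polynomial time, or we scale to make it integral). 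This $G$ is a complete bipartite graph on $n+n$ vertices, hence has a perfect matching, and \probPM on it is solvable in polynomial time by the lemma's stated assumption.

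The key correspondence is: a perfect matching $M$ in $G$ assigns each point $\bfx_j$ to exactly one slot $v_{i,t}$, and since each cluster $i$ contributes exactly $s$ slots, the set $X_i := \{\bfx_j : u_j \text{ matched to some } v_{i,t}\}$ has $|X_i| = s = \frac{n}{k}$; thus $\{X_1,\ldots,X_k\}$ is an equal $k$-clustering. Conversely, any equal $k$-clustering $\{X_1,\ldots,X_k\}$ yields a perfect matching by fixing, for each $i$, an arbitrary bijection between $X_i$ and $\{v_{i,1},\ldots,v_{i,s}\}$. Under this correspondence, the weight of $M$ equals $\sum_{i=1}^k \sum_{\bfx_j \in X_i} \|\bfc_i - \bfx_j\|_p = \sum_{i=1}^k \cost_p(X_i,\bfc_i) = \cost_p(X_1,\ldots,X_k,\bfc_1,\ldots,\bfc_k)$. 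The symmetry among the $s$ slots of a fixed cluster means the permutation chosen within a cluster does not affect the weight, so minimizing the matching weight is exactly the same as minimizing the clustering cost. Therefore a minimum-weight perfect matching, read off as above, gives an equal $k$-clustering minimizing $\cost_p(X_1,\ldots,X_k,\bfc_1,\ldots,\bfc_k)$, and the whole procedure runs in polynomial time.

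There is no serious obstacle here; the proof is essentially a bookkeeping argument establishing the weight-preserving bijection between perfect matchings and equal $k$-clusterings. The only points requiring minor care are: (a) making sure the edge weights $\|\bfc_i-\bfx_j\|_p$ are turned into nonnegative integers as required by \probPM --- for integer constant $p$ and rational centers this is a matter of common-denominator scaling, which changes all weights by the same factor and hence preserves the argmin; (b) noting that for $p=0$ the ``norm'' $\|\cdot\|_0$ is already a nonnegative integer, so no scaling is needed; and (c) observing that the size of $G$ (with $2n$ vertices and $n^2$ edges) and the time to compute all weights are polynomial in the input. Since this mirrors Lemma~1 of~\cite{FominGPS21}, I would present it concisely, emphasizing the slot construction and the weight identity.
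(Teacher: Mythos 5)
Your proof takes essentially the same approach as the paper's: both construct a complete bipartite graph with $n$ point-vertices on one side and $s=n/k$ slot-vertices per cluster on the other, assign edge weights $\|\bfc_i-\bfx_j\|_p$, and observe a weight-preserving correspondence between perfect matchings and equal $k$-clusterings, reducing to polynomial-time \probPM. Your added remarks on integrality/scaling of the weights are a reasonable point of care that the paper glosses over, but they do not change the argument.
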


\begin{proof}
Given $\bf X$ and $\bfc_1,\ldots,\bfc_k$, we construct the bipartite graph $G$ as follows. Let $s=\frac{n}{k}$ .
\begin{itemize}
\item For each $i\in\{1,\ldots,k\}$,  we construct a set of $s$ vertices $V_i=\{v_1^i,\ldots,v_s^i\}$ corresponding to the median $c_i$. Denote $V=\bigcup_{i=1}^kV_i$. 
\item For each $i\in\{1,\ldots,n\}$, construct a vertex $u_i$ corresponding to the vector $\bfx_i$ of $\bfX$ and make $u_i$ adjacent to the vertices of $V$. Denote $U=\{u_1,\ldots,u_n\}$.
\end{itemize}
We define the edge weights as follows.
\begin{itemize}
\item For every $i\in\{1,\ldots,n\}$ and $j\in \{1,\ldots,k\}$, set $w(u_iv_h^j)=||\bfc_j-\bfx_i||_p$ for $h\in\{1,\ldots,s\}$, that is, the weight of all edges joining $u_i$ corresponding to $\bfx_i$ with the vertices of $V_j$ corresponding to the median $\bfc_j$ are the same and coincide with the $\ell_p$ distance between $\bfx_i$ and $\bfc_j$.
\end{itemize}
Observe that $G(U,V)$ is a complete bipartite graph, where $U$ and $V$ form the bipartition. Note also that $|U|=|V|=n$.

Notice that we have the following one-to-one correspondence between  perfect matchings of $G$ and $k$-clusterings of $\bfX$. 
 In the forward direction, assume that $M$ is a perfect matching of $G$. We construct the clustering $\{X_1,\ldots,X_k\}$ as follows. 
 For every $h \in \{1,\ldots,n\}$, $u_h$ is saturated by $M$ and, therefore, there are $i_h\in\{1,\ldots,k\}$ and $j_h \in \{1,\ldots s\}$ such that edge $u_hv^{i_h}_{j_h}\in M$. 
We cluster the vectors of $\bfX$ according to $M$. Formally, we place $x_h$ in $X_{i_h}$ for each $h\in\{1,\ldots,n\}$. 
Clearly, $\{X_1,\ldots,X_k\}$ is a partition of $\{\bfx_1,\ldots,\bfx_n\}$ and $|X_i|=s$ for all $i\in\{1,\ldots,k\}$. By the definition of weights of the edges of $G$, 
$\cost_p(X_1,\ldots,X_k,c_1,\ldots,c_k)=w(M)$.  For the reverse direction, consider an equal $k$-clustering $\{X_1,\ldots,X_k\}$  of  $\bfX$.  Let $i \in \{1,\ldots,k\}$. 
Consider the cluster $X_i$ and assume that  $X_i=\{j_1,\ldots,j_s\}$.
Denote by  $M_i=\{u_{j_1}v_1^i,\ldots,u_{j_{s}}v_s^i\}$. Clearly, $M_i$ is a matching saturating the  vertices of $V_i$. 
We construct $M_i$ for every $i\in\{1,\ldots,k\}$ and set  $M=\bigcup_{i=1}^{k}M_i$. Since $\{X_1,\ldots,X_k\}$ is a partition of $\{1,\ldots,n\}$, $M$ is a matching saturating every vertex of $U$. 
By the definition of the weight of edges, $w(M)=\cost_p(X_1,\ldots,x_k,c_1,\ldots,c_k)$. Thus, finding a $k$-clustering  $\{X_1,\ldots,X_k\}$ that minimizes 
$\cost_p(X_1,\ldots,x_k,c_1,\ldots,c_k)$ is equivalent to computing a perfect matching of minimum weight in $G$.  
Then, because a perfect matching of minimum weight in $G$ can be found in polynomial time~\cite{Kuhn55,LovaszP09}, a $k$-clustering of minimum cost can be found in polynomial time.
This completes the proof of the lemma.
\end{proof}

\section{Lossy Kernel}\label{sec:kernelization} 

In this section, we prove Theorem~\ref{thm:lossykernel1} by establishing a  $2$-approximate polynomial kernel for  \probPEClust. 
 In Subsection~\ref{sec:tech}, we provide some auxiliary results, and in Subsection~\ref{subsec:2-appr},
we prove the main results. 
 Throughout this section we assume that $p\geq 0$ defining the  $\ell_p$-norm is a fixed constant.

\subsection{Technical lemmata}\label{sec:tech}
 We start by proving the following results about medians of clusters when their size is sufficiently big with respect to the budget.

\begin{lemma}\label{lem:big-clust1}
Let $\{X_1,\ldots,X_k\}$ be an equal $k$-clustering of a collection of points $\bfX=\{\bfx_1,\ldots,\bfx_n\}$ of $\mathbb{Z}^d$ 
of cost at most $B\in \mathbb{Z}_{\geq 0}$, and let $s=\frac{n}{k}$.
Then each cluster $X_i$ for $i\in\{1,\ldots,k\}$  contains at least $s-2B$ identical points. 
\end{lemma}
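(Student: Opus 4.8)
The plan is to argue cluster by cluster. Fix an index $i\in\{1,\ldots,k\}$ and let $\bfc_i\in\mathbb{R}^d$ be an optimum median of $X_i$. Since the clustering $\{X_1,\ldots,X_k\}$ has cost at most $B$ and the costs of the individual clusters are nonnegative, we get $\cost_p(X_i)=\cost_p(X_i,\bfc_i)=\sum_{\bfx\in X_i}\|\bfc_i-\bfx\|_p\leq B$. I would then split $X_i$ into \emph{close} points, those $\bfx\in X_i$ with $\|\bfc_i-\bfx\|_p<\frac12$, and \emph{far} points, the remaining ones, and prove two things: that there are at most $2B$ far points, and that all close points are equal to one another.

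The bound on far points is immediate: each far point contributes at least $\frac12$ to the sum $\sum_{\bfx\in X_i}\|\bfc_i-\bfx\|_p\leq B$, so there are at most $2B$ of them, and hence at least $s-2B$ close points.

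For the close points, the key facts are the triangle inequality for $\|\cdot\|_p$ — which holds for every $p\geq 1$ and also for the Hamming metric in the case $p=0$ — together with the observation that any two distinct vectors $\bfx,\bfy\in\mathbb{Z}^d$ satisfy $\|\bfx-\bfy\|_p\geq 1$ (for $p\geq 1$ because some coordinate of $\bfx-\bfy$ has absolute value at least $1$; for $p=0$ because a nonzero integer vector has at least one nonzero coordinate). Consequently, if $\bfx,\bfy\in X_i$ are both close, then $\|\bfx-\bfy\|_p\leq\|\bfc_i-\bfx\|_p+\|\bfc_i-\bfy\|_p<1$, which forces $\bfx=\bfy$. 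Thus all close points of $X_i$ are copies of a single vector, and as there are at least $s-2B$ of them, the cluster $X_i$ contains at least $s-2B$ identical points. (When $s\leq 2B$ the statement is vacuous.) Repeating the argument for every $i$ finishes the proof.

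This argument is short and I do not anticipate a genuine obstacle; the only points needing a little care are using a strict threshold ($<\frac12$ rather than $\leq\frac12$), so that the sum of two such values is strictly below $1$, and checking the triangle inequality as well as the ``distinct integer vectors are at $\ell_p$-distance at least $1$'' fact uniformly in the $p=0$ case in addition to $p\geq 1$.
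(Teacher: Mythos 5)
Your proof is correct. It proves the same statement as the paper's Lemma~\ref{lem:big-clust1} but with a cleaner organization. The paper argues by contradiction: it fixes a point $\bfx_{i_0}\in X_i$ at minimum distance from the optimum median $\bfc_i$, notes that if no value occurs $s-2B$ times there must be at least $2B+1$ points distinct from $\bfx_{i_0}$, and then splits into two cases according to whether $\bfc_i=\bfx_{i_0}$, using the triangle inequality together with the minimality of $\bfx_{i_0}$ to show each such point contributes at least $\tfrac12$ to the cost. Your argument dispenses with both the contradiction and the choice of a closest point: you partition $X_i$ directly by the threshold $\|\bfc_i-\bfx\|_p<\tfrac12$ versus $\geq\tfrac12$, bound the far side by $2B$ via the budget, and show the near side is a single repeated vector via the triangle inequality and the fact that distinct integer vectors are at $\ell_p$-distance at least $1$. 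Both proofs rest on exactly the same two ingredients (the triangle inequality, valid for $p=0$ and $p\geq 1$, and integrality giving distance $\geq 1$ between distinct points), so they are close in spirit; what your version buys is the elimination of the case analysis on whether the median coincides with a data point, since the $\tfrac12$-threshold handles both situations uniformly.
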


\begin{proof}
The claim is trivial if $s \leq 2B+1$.  Let $s \geq 2B+2$.
Assume to the contrary that a cluster $X_i$ has at most $s-2B-1$ identical points  for some $i \in \{1,\ldots,k\}$.
Let $\bfc_1,\ldots,\bfc_k$ be optimum medians for the clusters $X_1,\ldots,X_k$, respectively. Then we have that 
$\cost_p(X_1,\ldots,X_k)=\cost_p(X_1,\ldots,X_k,\bfc_1,\ldots,\bfc_k)$. 

Let $\bfx_{i_0} \in X_i$ be a point at minimum distance from $\bfc_i$.
Since there are at most $s-2B-1$ points in $X_i$ which are equal to $x_{i_0}$,
there are  $t=2B+1$ points $\bfx_{i_1},\ldots,\bfx_{i_t} \in X_i$ distinct from $x_{i_0}$.
Observe that 
\begin{equation}{\label{eqn:medianscoincideswithpoints1}}
\sum_{\bfx_h \in X_i}||\bfc_i-\bfx_h||_p \geq \sum_{j =0}^t||\bfc_i-\bfx_{i_j}||_p=||\bfc_i-\bfx_{i_0}||_p + \sum_{j=1}^{t}||\bfc_i-\bfx_{i_j}||_p.
\end{equation} 
We have two possibilities: either  $\bfc_i = \bfx_{i_0}$ or $\bfc_i \neq \bfx_{i_0}$.

If $\bfc_i = \bfx_{i_0}$, then $||\bfc_i-\bfx_{i_0}||_p=0$ and $||\bfc_i-\bfx_{i_j}||_p \geq 1$ for $j \in \{1,\ldots,t\}$, because $\bfx_{i_0},\bfx_{i_1},\ldots,\bfx_{j_t}$ have integer coordinates and $\bfx_{i_0}\neq \bfx_{i_j}$ for $j \in \{1,\ldots,t\}$. 
 Then from~(\ref{eqn:medianscoincideswithpoints1}), we get 
\begin{align*}
\sum_{\bfx_h \in X_i}||\bfc_i-\bfx_h||_p \geq \sum_{j =1}^t||\bfc_i-\bfx_{i_j}||_p \geq t = 2B+1>B,
\end{align*} 
 which is a contradiction with $\cost_p(X_1,\ldots,X_k)\leq B$.

If $\bfc_i \neq \bfx_{i_0}$, then $||\bfc_i-\bfx_{i_0}||_p > 0$. 
Because the points have integer coordinates and by the triangle inequality,
\begin{equation}{\label{eqn:medianscoincideswithpoints2}}
1 \leq ||\bfx_{i_0}-\bfx_{i_j}||_p \leq ||\bfx_{i_0}-\bfc_i||_p+||\bfx_{i_j}-\bfc_{i}||_p 
\end{equation}  
for every $j\in\{1,\ldots,t\}$.
Since  $\bfx_{i_0}$ is a point of $X_i$ at minimum distance from $\bfc_i$, 
\begin{equation}{\label{eqn:medianscoincideswithpoints3}}
||\bfx_{i_0}-\bfc_i||_p+||\bfx_{i_j}-\bfc_{i}||_p \leq 2||\bfx_{i_j}-\bfc_{i}||.
\end{equation}  
From~(\ref{eqn:medianscoincideswithpoints2}) and (\ref{eqn:medianscoincideswithpoints3}), we get 
$||\bfx_{i_j}-\bfc_{i}|| \geq \frac{1}{2}$ for $j\in\{1,\ldots,t\}$. 
Thus from~(\ref{eqn:medianscoincideswithpoints1}), we get 
\begin{align*}{\label{eqn2:medianscoincideswithpoints}}
\sum_{\bfx_h \in X_i}||\bfc_i-\bfx_h||_p \geq & \sum_{j =0}^t||\bfc_i-\bfx_{i_j}||_p=||\bfc_i-\bfx_{i_0}||_p + \sum_{j=1}^{t}||\bfc_i-\bfx_{i_j}||_p\\
 >&\sum_{j=1}^{t}||\bfc_i-\bfx_{i_j}||_p \geq \frac{1}{2}t =\frac{1}{2}(2B+1)> B,
\end{align*} 
which is a contradiction with $\cost_p(X_1,\ldots,X_k)\leq B$.
This completes the proof.
\end{proof}

\begin{lemma}\label{lem:big-clust2}
Let $\{X_1,\ldots,X_k\}$ be an equal $k$-clustering of a collection of points $\bfX=\{\bfx_1,\ldots,\bfx_n\}$ of $\mathbb{Z}^d$ 
of cost at most $B\in \mathbb{Z}_{\geq 0}$, and let $s=\frac{n}{k}\geq 4B+1$. Let also $\bfc_1,\ldots,\bfc_k \in \mathbb{R}^d$ be optimum medians for $X_1,\ldots,X_k$, respectively. 
Then for every $i \in \{1,\ldots,k\}$,  $\bfc_i=\bfx_{j}$ for a unique $\bfx_j \in \bfX_i$ such that $\bfX_i$ contains at least $s-2B$ points identical to $\bfx_j$. 
\end{lemma}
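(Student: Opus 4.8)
The plan is to combine Lemma~\ref{lem:big-clust1} with a short exchange argument. Set $s=\frac{n}{k}$ and fix $i\in\{1,\ldots,k\}$. By Lemma~\ref{lem:big-clust1}, the cluster $X_i$ contains at least $s-2B$ identical points; let $\bfy\in\mathbb{Z}^d$ denote this common value. Since $s\geq 4B+1$ we have $2(s-2B)>s$, so $\bfy$ is in fact the \emph{unique} value of $\mathbb{Z}^d$ occurring at least $s-2B$ times among the points of $X_i$: two distinct such values would force $X_i$ to contain more than $s$ points. This settles the uniqueness part and isolates the only candidate for $\bfc_i$; it remains to show $\bfc_i=\bfy$.

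Let $m\geq s-2B$ be the number of points of $X_i$ equal to $\bfy$, and suppose for contradiction that $\bfc_i\neq\bfy$. I would compare $\cost_p(X_i,\bfc_i)$ with $\cost_p(X_i,\bfy)$. The $m$ copies of $\bfy$ contribute $0$ to the latter and $m\|\bfc_i-\bfy\|_p$ to the former, while for each of the remaining $s-m\leq 2B$ points $\bfx\in X_i$ the triangle inequality gives $\|\bfy-\bfx\|_p\leq \|\bfc_i-\bfy\|_p+\|\bfc_i-\bfx\|_p$. Summing these bounds over the non-$\bfy$ points yields
\[
\cost_p(X_i,\bfc_i)-\cost_p(X_i,\bfy)\;\geq\; m\|\bfc_i-\bfy\|_p-(s-m)\|\bfc_i-\bfy\|_p \;=\; (2m-s)\,\|\bfc_i-\bfy\|_p.
\]
Now $2m-s\geq 2(s-2B)-s=s-4B\geq 1$ because $s\geq 4B+1$, and $\|\bfc_i-\bfy\|_p>0$ since $\bfc_i\neq\bfy$ and $\bfy\in\mathbb{Z}^d$. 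Hence $\cost_p(X_i,\bfc_i)>\cost_p(X_i,\bfy)$, contradicting the assumption that $\bfc_i$ is an optimum median of $X_i$. Therefore $\bfc_i=\bfy$; as $\bfy$ is the value of a point of $X_i$ we may write $\bfc_i=\bfx_j$ for some $\bfx_j\in X_i$, this value is unique by the observation above, and $X_i$ contains at least $s-2B$ points identical to it.

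I do not anticipate a real obstacle: the two points needing care are (a) arguing that the ``$(s-2B)$-majority'' value is unique — precisely the place where the stronger hypothesis $s\geq 4B+1$ is used rather than the weaker $s\geq 2B+1$ sufficient for Lemma~\ref{lem:big-clust1} — and (b) keeping the bookkeeping of $m$ against $s-m$ correct so that the coefficient $2m-s$ comes out strictly positive.
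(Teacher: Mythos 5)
Your proposal is correct and follows essentially the same approach as the paper's proof: invoke Lemma~\ref{lem:big-clust1} for the existence of the majority value, use $2(s-2B)>s$ for uniqueness, and then a triangle-inequality exchange comparing $\cost_p(X_i,\bfc_i)$ with $\cost_p(X_i,\bfy)$ (the paper writes $|S|-|T|$ where you write $2m-s$, which is the same quantity). Your presentation, directly computing the difference of costs, is a slightly more streamlined rendering of the identical argument.
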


\begin{proof}
Consider a cluster $X_i$ with the median $\bfc_i$ for arbitrary $i \in \{1,\ldots,k\}$.
Since $s \geq 4B+1$, then by Lemma~\ref{lem:big-clust1}, there is $\bfx_j\in X_i$ such that $X_i$ contains at least 
$s-2B$ points identical to $\bfx_j$.  Notice that $\bfx_j$ is unique, because 
$X_i$ can contain at most $s-(s-2B)=2B$ distinct from $\bfx_j$ points, and since $s\geq 4B+1$, $s-2B\geq 2B+1>2B$.  We show that $\bfc_i=\bfx_j$. 

The proof is by contradiction. Assume that $\bfc_i \neq \bfx_j$. Let 
$S\subseteq\{1,\ldots,n\}$ be the set of indices of the points $\bfx_h\in X_i$ that coincide with $\bfx_j$, and denote by 
$T$ be the set of indices of the remaining points in $X_i$. 
We know that $|T| \leq 2B <|S|$ because $s \geq 4B+1$ and $|S| \geq 2B+1$.
Then 
\begin{equation}\label{eqn:equation1}
\begin{aligned} 
\cost_p(X_i)=&\cost_p(X_i,\bfc_i)=\sum_{h\in X_i}||\bfc_i-\bfx_h||_p=\sum_{h\in S}||\bfc_i-\bfx_h||_p+\sum_{h \in T}||\bfc_i-\bfx_{h}||_p\\ 
= &(|S|-|T|)||\bfc_i-\bfx_j||_p +\sum_{h \in T}(||\bfc_i-\bfx_j||+||\bfc_i-\bfx_{h}||_p). 
\end{aligned}
\end{equation}
On using the triangle inequality, we get 
\begin{equation}{\label{eqn:equation2}}
 (|S|-|T|)||\bfc_i-\bfx_j||_p +\sum_{h\in T}(||\bfc_i-\bfx_j||_p+||\bfc_i-\bfx_{h}||_p)
 \geq (|S|-|T|)||\bfc_i-\bfx_j||_p+\sum_{h\in T}||\bfx_j-\bfx_{h}||_p.
\end{equation}
We know that $(|S|-|T|)||\bfc_i-\bfx_j||_p>0$ because $|S|>|T|$ and $\bfc_i \neq\bfx_j$.
Then by~(\ref{eqn:equation2}), we have
\begin{equation}\label{eqn:equation3}
(|S|-|T|)||\bfc_i-\bfx_j||_p+\sum_{h \in T}||\bfx_j-\bfx_{h}||_p >\sum_{h\in T}||\bfx_j-\bfx_{h}||_p.
\end{equation}
Combining (\ref{eqn:equation1})--(\ref{eqn:equation3}), we conclude that $\cost_p(X_i)>\sum_{h\in T}||\bfx_j-\bfx_{h}||_p$. 
Let $\bfc'_i=\bfx_j$.  
Then
\begin{align*}
\cost_p(X_i,\bfc_i')=&\sum_{h\in X_i}||\bfc'_i-\bfx_h||_p=\sum_{h\in S}||\bfc'_i-\bfx_h||_p+\sum_{h \in T}||\bfc'_i-\bfx_{h}||_p=\sum_{h \in T}||\bfc'_i-\bfx_{h}||_p\\
<&~\cost_p(X_i)
\end{align*}
which contradicts that $\bfc_i$ is an optimum median for $X_i$. This concludes the proof.
\end{proof}

We use the following lemma to identify medians.

\begin{lemma}\label{lem:init-mean}
Let $\{X_1,\ldots,X_k\}$ be an equal $k$-clustering of a collection of points $\bfX=\{\bfx_1,\ldots,\bfx_n\}$ of $\mathbb{Z}^d$ 
of cost at most $B\in \mathbb{Z}_{\geq 0}$, and let $s=\frac{n}{k}\geq 4B+1$. suppose that $Y\subseteq \bfX$ is a collection of at least $B+1$ identical points of $\bfX$. 
Then there is $i\in\{1,\ldots,k\}$ such that an optimum median of $X_i$ coincides with $\bfx_j$ for $\bfx_j\in Y$.
\end{lemma}   
 
\begin{proof}
Let $\bfc_1,\ldots.\bfc_k$ be optimum medians of $X_1,\ldots,X_k$, respectively.  
Since $s\geq 4B+1$, then by Lemma~\ref{lem:big-clust2}, for every $i \in \{1,\ldots,k\}$,  $\bfc_i$  coincides with some element $\bfx_{h}$ of the cluster $X_i$.
For the sake of contradiction, assume that $\bfc_1,\ldots,\bfc_k$  are distinct from $\bfx_j \in Y$. This means that $\|\bfx_j-\bfc_i\|_p\geq 1$, because  the coordinates of the points of $\bfX$ are integer. 
Then
\begin{equation*}
\cost_p(X_1,\ldots,X_k)=\sum_{i=1}^k\cost_p(X_i,\bfc_i) \geq \sum_{i=1}^k\sum_{\bfx_h \in Y \cap X_i}||\bfc_i-\bfx_h||_p\geq  \sum_{i=1}^k|X_i\cap Y|=|Y|\geq B+1 >B,
\end{equation*}
 contradicting that $\cost_p(X_1,\ldots,X_k)\leq B$. This proves the lemma. 
\end{proof}

We use our next lemma to upper bound the clustering cost if we collect $s=\frac{n}{k}$ identical points in the same cluster.

\begin{lemma}\label{lem:exchange}   
Let $\{X_1,\ldots,X_k\}$ be an \emph{equal $k$-clustering} of a collection of points $\bfX=\{\bfx_1,\ldots,\bfx_n\}$ of $\mathbb{Z}^d$, and let $\bfc_1,\ldots,\bfc_k\in \mathbb{R}^d$. 
Suppose that $S$ is a collection of $s=\frac{n}{k}$ identical points of $\bfX$ and $\bfx_j\in S$. 
Then there is an equal $k$-clustering $\{X_1',\ldots,X_k'\}$ of $\bfX$ with $X_1'=S$ such that 
$$\cost_p(X_1',\ldots,X_k',\bfc_1',\ldots,\bfc_k')\leq \cost_p(X_1,\ldots,X_k,\bfc_1,\ldots,\bfc_k)+s||\bfc_1-\bfx_j||_p,$$
where $\bfc_1'=\bfx_j$ and $\bfc_h'=\bfc_h$ for $h\in\{2,\ldots,k\}$.
\end{lemma}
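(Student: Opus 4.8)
The plan is to exhibit an explicit re-clustering that moves all points of $S$ into one cluster and controls the cost increase. Since $S$ consists of $s = \frac{n}{k}$ identical points scattered among the clusters $X_1,\ldots,X_k$, and also $|X_1| = s$, there is a natural "swap": for each point of $S$ currently sitting in some cluster $X_h$, we want to move it into the new cluster $X_1' := S$, and in exchange move out of $X_1$ a point for each position we free up, so that all clusters keep size exactly $s$.

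Concretely, I would argue as follows. Let $S \cap X_h$ have size $s_h$ for $h \in \{1,\ldots,k\}$, so $\sum_h s_h = s$. For the cluster $X_1$, note $|X_1 \setminus S|$ and $|X_1 \cap S|$; we need to send away exactly $s - s_1$ points of $X_1 \setminus S$ to fill the holes created in $X_2,\ldots,X_k$ when their copies of $S$ are removed. Since $X_1 \setminus S$ has exactly $s - s_1$ points (because $|X_1| = s$ and $|X_1 \cap S| = s_1$), this is a perfect bijection: define a bijection $\phi$ pairing each point of $X_1 \setminus S$ with a point of $S$ lying in some $X_h$, $h \geq 2$. Now form $X_1' = S$, and for $h \geq 2$ form $X_h'$ by removing $S \cap X_h$ from $X_h$ and inserting the $\phi$-preimages (points of $X_1 \setminus S$) that were paired with those removed points; all clusters then have size $s$, and $\{X_1',\ldots,X_k'\}$ is an equal $k$-clustering with $X_1' = S$. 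Set $\bfc_1' = \bfx_j$ and $\bfc_h' = \bfc_h$ for $h \geq 2$ as in the statement.

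For the cost bound, compare $\cost_p(X_1',\ldots,X_k',\bfc_1',\ldots,\bfc_k')$ term by term against the original. For a point $\bfx \in S$ that moved from $X_h$ ($h \geq 2$) into $X_1'$: its old contribution was $\|\bfc_h - \bfx\|_p$ and its new contribution is $\|\bfx_j - \bfx\|_p = 0$ since $\bfx = \bfx_j$ (all points of $S$ are identical); so this term does not increase. For a point $\bfx \in X_1 \setminus S$ that moved from $X_1$ into some $X_h'$ ($h \geq 2$): its old contribution was $\|\bfc_1 - \bfx\|_p$ and its new contribution is $\|\bfc_h - \bfx\|_p$. Bounding the new term by the triangle inequality through $\bfx_j$: $\|\bfc_h - \bfx\|_p \le \|\bfc_h - \bfx_j\|_p + \|\bfx_j - \bfx\|_p$. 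Here $\bfx_j = \bfx_{\text{(the copy of $S$ in $X_h$ paired with $\bfx$ via $\phi$)}}$, so $\|\bfc_h - \bfx_j\|_p$ equals the old contribution of that $S$-point in $X_h$ — but we already observed that old contribution was replaced by $0$ in the new clustering, so we may "charge" $\|\bfc_h - \bfx_j\|_p$ to the slack freed there. Meanwhile $\|\bfx_j - \bfx\|_p = \|\bfc_1' - \bfx\|_p$ is simply the cost $\bfx$ would incur against $\bfc_1' = \bfx_j$; summing $\|\bfx_j - \bfx\|_p$ over all $\bfx \in X_1 \setminus S$ plus the $s_1$ points of $X_1 \cap S$ (which contribute $0$) gives at most $\sum_{\bfx \in X_1}\|\bfx_j - \bfx\|_p \le \sum_{\bfx \in X_1}(\|\bfx_j - \bfc_1\|_p + \|\bfc_1 - \bfx\|_p) = s\|\bfc_1 - \bfx_j\|_p + \cost_p(X_1,\bfc_1)$. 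Adding the unchanged terms for points of $X_h$ ($h \geq 2$) not involved in any swap, everything telescopes to $\cost_p(X_1,\ldots,X_k,\bfc_1,\ldots,\bfc_k) + s\|\bfc_1 - \bfx_j\|_p$, which is the claimed bound.

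The main obstacle I anticipate is bookkeeping the charging argument cleanly: one must make sure that each unit of "slack" (an $S$-point whose cost dropped to $0$) is charged at most once, and that the $\|\bfc_h - \bfx_j\|_p$ terms introduced by the triangle inequality for the outgoing $X_1 \setminus S$ points are exactly matched to these slacks via the bijection $\phi$. A cleaner route that sidesteps the per-point charging is to bound globally: $\cost_p(X_h', \bfc_h) \le \cost_p(X_h \setminus S, \bfc_h) + \sum_{\bfx \in (X_1 \setminus S)\text{ moved to }X_h'} \|\bfc_h - \bfx\|_p$, then apply the triangle inequality through $\bfx_j$ to each new term and recognize $\sum_h \sum_{\bfx_j \in S \cap X_h}\|\bfc_h - \bfx_j\|_p$ as part of $\cost_p(X_1,\ldots,X_k,\bfc_1,\ldots,\bfc_k)$ (namely the contribution of $S$), while $\sum_{\bfx \in X_1 \setminus S}\|\bfx_j - \bfx\|_p \le s\|\bfc_1-\bfx_j\|_p + \cost_p(X_1,\bfc_1)$ as above. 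Summing over $h$ and adding $\cost_p(X_1',\bfc_1') = 0$ yields the inequality. Either way the proof is short; the care is purely combinatorial in defining the exchange and accounting for each term once.
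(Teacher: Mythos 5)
Your proposal is correct and follows essentially the same route as the paper: form the same pairwise exchange between the $S$-points outside $X_1$ and the non-$S$ points of $X_1$, then bound the cost change via triangle inequalities. The only cosmetic difference is the order of intermediate points in the triangle-inequality chain — the paper routes the moved-out points through $\bfc_1$ and then uses a reverse triangle inequality to absorb the old $S$-point contributions, obtaining the tighter intermediate bound $t\|\bfc_1-\bfx_j\|_p$ before weakening to $s\|\bfc_1-\bfx_j\|_p$, whereas you route through $\bfx_j$ first (which cancels the $S$-contributions exactly) and then bound $\sum_{\bfx\in X_1}\|\bfx_j-\bfx\|_p$ through $\bfc_1$, arriving directly at $s\|\bfc_1-\bfx_j\|_p+\cost_p(X_1,\bfc_1)$; both bookkeepings are sound and yield the stated inequality.
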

    
 \begin{proof}
 The claim is trivial if $S=X_1$, because we can set $X_i'=X_i$ for $i\in\{1,\ldots,k\}$. Assume that this is not the case and there are elements of $S$ that are not in $X_1$, and denote by $\bfx_{i_1},\ldots,\bfx_{i_t}$ these elements. We assume that $\bfx_{i_h} \in X_{i_h'}$, for $h \in \{1,\ldots,t\}$ for $i_h' \geq 2$. Because $|S|=s$, there are $\bfx_{j_1},\ldots,\bfx_{j_t} \in X_1$ such that $\bfx_{j_1},\ldots,\bfx_{j_t} \notin S$. We construct $X_1',\ldots,X_k'$ from $X_1,\ldots,X_k$ by exchanging the points $\bfx_{j_h}$ and $\bfx_{i_h}$ between $X_1$ and $X_{i_h'}$
for every $h \in \{1,\ldots,t\}$. Notice that $|X_1'|=\cdots=|X_k'|$, because the exchanges do not modify the sizes of the clusters. Thus,  $\{X_1',\ldots,X_k'\}$ is an equal $k$-clustering. 
We claim that $\{X_1',\ldots,X_k'\}$ satisfies the required property.

We have that 
\begin{equation}\label{eqn:diff}
\begin{aligned}
&\cost(X'_1,\ldots,X'_k,\bfc'_1,\ldots,\bfc'_k) - \cost(X_1,\ldots,X_k,\bfc_1,\ldots,\bfc_k)\\
&=\sum_{i=1}^{k}\sum_{\bfx_h \in X'_i}^{}||\bfx_h-\bfc'_i||_p-\sum_{i=1}^{k}\sum_{\bfx_h \in X_i}^{}||\bfx-\bfc_i||_p\\
&= \sum_{\bfx_h \in X'_1}||\bfx_h-\bfc'_1||_p-\sum_{\bfx_h \in X_1}||\bfx_h-\bfc_1||_p+\sum_{i=2}^{k}\big(\sum_{\bfx_h \in X'_i}||\bfx_h-\bfc'_i||_p -\sum_{\bfx_h \in X_i}||\bfx_h-\bfc_i||_p\big).
\end{aligned}
\end{equation}
Note that $\sum_{\bfx_h \in X'_1}||\bfx_h-\bfc'_1||_p=0$ and $\sum_{\bfx_h \in X_1}||\bfx_h-\bfc_1||_p \geq \sum_{h=1}^{t}||\bfx_{j_h}-\bfc_1||_p$.
Also by the construction of $X'_1,\ldots,X'_k$ and because $\bfc_{i}=\bfc'_{i}$ for $i \in \{2,\ldots,k\}$, we have that
 \begin{align*}
 \sum_{i=2}^{k}\big(\sum_{\bfx_h \in X'_i}||\bfx_h-\bfc'_i||_p -\sum_{\bfx_h \in X_i}||\bfx_h-\bfc_i||_p\big)=&
 \sum_{h=1}^{t}||\bfx_{j_h}-\bfc'_{i_h}||_p -\sum_{h=1}^t||\bfx_{i_h}-\bfc_{i_h}||_p\\
 =&\sum_{h=1}^{t}||\bfx_{j_h}-\bfc_{i_h}||_p -\sum_{h=1}^t||\bfx_{i_h}-\bfc_{i_h}||_p.
 \end{align*}
 Then extending (\ref{eqn:diff}) and applying the triangle inequality twice, we obtain that 
 \begin{align*}
&\cost(X'_1,\ldots,X'_k,\bfc'_1,\ldots,\bfc'_k) - \cost(X_1,\ldots,X_k,\bfc_1,\ldots,\bfc_k)\\ 
& \leq - \sum_{h=1}^{t}||\bfx_{j_h}-\bfc_1||_p+\sum_{h=1}^{t}||\bfx_{j_h}-\bfc_{i_h}||_p 
  -\sum_{h=1}^{t}||\bfx_{i_h}-\bfc_{i_h}||_p \\
  & = \sum_{h=1}^{t}\big(-||\bfx_{j_h}-\bfc_1||_p+||\bfx_{j_h}-\bfc_{i_h}||_p-||\bfx_{i_h}-\bfc_{i_h}||_p\big)
 \leq \sum_{h=1}^{t}\big(||\bfx_{i_h}-\bfc_{i_h}||_p-||\bfc_1-\bfc_{i_h}||_p\big) \\
 & \leq \sum_{h=1}^t||\bfx_{i_h}-\bfc_1||_p  \leq t||\bfx_j-\bfc_1||_p \leq s||\bfx_j-\bfc_1||_p
 \end{align*}
 as required by the lemma.
 \end{proof}

Our next lemma shows that we can solve   \probPEClust in polynomial time  if the cluster size is sufficiently big with respect to the budget.

\begin{lemma}\label{lem:polynomialtimealgorithm}
There is a polynomial time algorithm that, given a collection  $\bfX=\{\bfx_1,\ldots,\bfx_n\}$ of $n$ points of $\mathbb{Z}^d$, a positive integer $k$ such that $n$ is divisible by $k$, and a nonnegative integer $B$ such that $\frac{n}{k}\geq 4B+1$, either computes $\opt(X,k)\leq B$ and produces an equal $k$-clustering of minimum cost or correctly concludes that $\opt(\bfX,k)>B$.
\end{lemma}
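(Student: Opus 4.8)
The plan is to exploit that, when $s=\tfrac{n}{k}\ge 4B+1$, any equal $k$-clustering of cost at most $B$ is so ``rigid'' that the multiset of its medians is uniquely determined by $\bfX$ alone; once the medians are known, Lemma~\ref{lem:means-clusters} finishes the job via a minimum weight perfect matching. So suppose $\opt(\bfX,k)\le B$ and let $\{X_1^*,\ldots,X_k^*\}$ be an optimal equal $k$-clustering with optimum medians $\bfc_1^*,\ldots,\bfc_k^*$. By Lemma~\ref{lem:big-clust2} every $\bfc_i^*$ is a point of $\bfX$ (hence has integer coordinates) and $X_i^*$ contains at least $s-2B$ copies of $\bfc_i^*$. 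Since $\bfc_i^*\in\mathbb{Z}^d$, every point of $X_i^*$ distinct from $\bfc_i^*$ contributes at least $1$ to the cost, so if $b_i$ denotes the number of such points then $\sum_{i=1}^k b_i\le\cost_p(X_1^*,\ldots,X_k^*)\le B$.

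The key step is to turn this into an \emph{exact count} of how many clusters use each value as a median. For a vector $\bfv$, let $m_\bfv$ be the number of points of $\bfX$ equal to $\bfv$ and let $a_\bfv=|\{i:\bfc_i^*=\bfv\}|$. Counting copies of $\bfv$ across clusters: a cluster with median $\bfv$ contains exactly $s-b_i$ copies of $\bfv$, while a cluster with a different median contains at most $b_i$ copies of $\bfv$ (each being a misplaced point of that cluster). Summing, $m_\bfv = a_\bfv\, s - D + E$ with $D,E\ge 0$ and $D+E\le\sum_i b_i\le B$, whence $|m_\bfv - a_\bfv s|\le B$, i.e.
$$a_\bfv\in\Big[\tfrac{m_\bfv-B}{s},\ \tfrac{m_\bfv+B}{s}\Big].$$
This interval has length $\tfrac{2B}{s}\le\tfrac{2B}{4B+1}<1$, so it contains at most one integer, which must then be $a_\bfv$. (In particular a value can be a median only if $m_\bfv\ge s-B\ge B+1$, consistent with Lemma~\ref{lem:init-mean}.) Hence the algorithm can, for each distinct value $\bfv$ occurring in $\bfX$, compute the unique integer $\hat a_\bfv$ in this interval; if for some $\bfv$ there is no such integer, or if $\sum_\bfv\hat a_\bfv\neq k$, it reports $\opt(\bfX,k)>B$. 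Otherwise it forms the multiset $C$ containing each value $\bfv$ with multiplicity $\hat a_\bfv$, so $|C|=k$; and, crucially, if $\opt(\bfX,k)\le B$ then $\hat a_\bfv=a_\bfv$ for all $\bfv$, so $C=\{\bfc_1^*,\ldots,\bfc_k^*\}$.

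Now list $C$ as $\bfc_1,\ldots,\bfc_k$ and use Lemma~\ref{lem:means-clusters} to compute in polynomial time an equal $k$-clustering $\{X_1,\ldots,X_k\}$ minimizing $\sum_{i=1}^k\cost_p(X_i,\bfc_i)$; the algorithm outputs $\{X_1,\ldots,X_k\}$ together with $c^*:=\cost_p(X_1,\ldots,X_k)$ if $c^*\le B$, and reports $\opt(\bfX,k)>B$ otherwise. For correctness: $\{X_1,\ldots,X_k\}$ is a legitimate equal $k$-clustering and $\cost_p(X_1,\ldots,X_k)\le\sum_i\cost_p(X_i,\bfc_i)$, so $c^*\le B$ already gives $\opt(\bfX,k)\le B$. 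Conversely, if $\opt(\bfX,k)\le B$ then $C=\{\bfc_1^*,\ldots,\bfc_k^*\}$, and after relabeling the clusters of the optimal clustering so that the $i$-th one has median $\bfc_i$, that clustering is one of those over which Lemma~\ref{lem:means-clusters} minimizes; hence $\sum_i\cost_p(X_i,\bfc_i)\le\opt(\bfX,k)$, while trivially $\opt(\bfX,k)\le\cost_p(X_1,\ldots,X_k)\le\sum_i\cost_p(X_i,\bfc_i)$, so all three coincide with $\opt(\bfX,k)\le B$ and the output is correct. Bucketing points by value, the $\mathcal{O}(n)$ interval tests, and the matching of Lemma~\ref{lem:means-clusters} all run in polynomial time. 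The step I expect to be the real obstacle is pinning down the multiplicities $a_\bfv$: the misplaced points must be charged to clusters without over-counting, and it is precisely the slack $s\ge 4B+1$ that forces the admissible interval for $a_\bfv$ to have length below $1$ (with only $s\ge 2B+1$, say, it could contain two integers and there would be no single matching instance to solve); the reduction to Lemma~\ref{lem:means-clusters} and the relabeling that makes it witness $\opt(\bfX,k)$ are then routine.
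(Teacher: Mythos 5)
Your proof is correct, but it identifies the multiset of medians by a genuinely different argument than the paper. The paper first exhaustively applies a reduction rule that strips off blocks of $s=\tfrac{n}{k}$ identical points (decrementing $k$ each time, justified via Lemma~\ref{lem:exchange}), after which every maximal group of identical points has size at most $s-1$; it then uses Lemma~\ref{lem:big-clust2} to argue the optimum medians must be pairwise distinct, and Lemma~\ref{lem:init-mean} to characterize them as exactly the values occurring at least $B+1$ times, and finally applies Lemma~\ref{lem:means-clusters}. You instead skip the reduction entirely and determine, for each value $\bfv$, the number $a_\bfv$ of clusters with median $\bfv$ by the counting identity $m_\bfv = a_\bfv s - D + E$ with $D+E\le B$, observing that the resulting window $\bigl[\tfrac{m_\bfv-B}{s},\tfrac{m_\bfv+B}{s}\bigr]$ has length $\tfrac{2B}{s}<1$ and so contains at most one integer precisely because $s\ge 4B+1$. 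This naturally accommodates $a_\bfv\ge 2$, so distinctness of medians is never needed, and it replaces the greedy removal step and Lemma~\ref{lem:init-mean} by elementary arithmetic (though it still relies on Lemma~\ref{lem:big-clust2} to place each median at an input point with integer coordinates, which is what makes each misplaced point cost at least $1$). The paper's route has the virtue of reusing the same reduction-rule machinery that later appears in the kernelization, whereas yours is arguably a shorter and more self-contained way to see where the $4B+1$ threshold comes from; both then finish identically via the minimum-weight perfect matching of Lemma~\ref{lem:means-clusters}.
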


\begin{proof}
Let $\bfX=\{\bfx_1,\ldots,\bfx_n\}$ be a collection of $n$ points of $\mathbb{Z}^d$ and let  $k$ be a positive integer such that $n$ is divisible by $k$, and suppose 
that  $s=\frac{n}{k}\geq 4B+1$ for a nonnegative integer $B$.

First, we exhaustively apply the following reduction rule. 

\begin{redrule}\label{red:ident}
If $\bfX$ contains a collection of $s$ identical points $S$, then set $\bfX:=\bfX\setminus S$ and $k:=k-1$.
\end{redrule}

To argue that the rule is safe, let $\bfX'=\bfX\setminus S$, where $S$ is a collection of $s$ identical points of $\bfX$, and let $k'=k$. Clearly, $\bfX'$ contains $n'=n-s$ points and $\frac{n'}{k'}=s$. 
If $\{X_1',\ldots,X_{k'}'\}$ is an equal $k'$-clustering of $\bfX'$, then $\{S,X_1',\ldots,X_{k'}'\}$ is an equal $k$-clustering of $\bfX$. Note that $\cost_p(S)=0$, because the elements of $S$ are identical.  Then
$\cost_p(S,X_1',\ldots,X_{k'}')=\cost_p(X_1',\ldots,X_{k'}')$. Therefore, $\opt(\bfX,k)\leq \opt(\bfX',k')$. We show that if $\opt(\bfX,k)\leq B$, then $\opt(\bfX,k)\geq \opt(\bfX',k')$. 

Suppose that $\{X_1,\ldots,X_k\}$ is an equal $k$-clustering of $\bfX$ with $\cost_p(X_1,\ldots,X_k)=\opt(X,k)\leq B$. Denote by $\bfc_1,\ldots,\bfc_k$ optimum medians of $X_1,\ldots,X_k$, respectively. Because $|S|=s\geq 4B+1\geq B+1$, there is a cluster whose optimum median is $\bfx_j$ for $\bfx_j\in S$. We assume without loss of generality that $X_1$ is such a cluster and  $\bfc_1=\bfx_j$. 
By Lemma~\ref{lem:exchange}, there is a $k$-clustering $\{S,X_2',\ldots,X_{k}'\}$ of $\bfX$ such that 
$\cost_p(S,X_2',\ldots,X_k',\bfc_1',\ldots,\bfc_k')\leq \cost_p(X_1,\ldots,X_k,\bfc_1,\ldots,\bfc_k)+s||\bfc_1-\bfx_j||_p$, 
where $\bfc_1'=\bfx_j$ and $\bfc_h'=\bfc_h$ for $h\in\{2,\ldots,k\}$. Because $\bfc_1=\bfx_j$, we conclude
that 
$\cost_p(X_2',\ldots,X_k')=\cost_p(S,X_2',\ldots,X_k',\bfc_1',\ldots,\bfc_k')\leq \cost_p(X_1,\ldots,X_k,\bfc_1,\ldots,\bfc_k)=\opt(X,k)$. 
Since $\{X_2',\ldots,X_k'\}$ is a $k'$-clustering of $\bfX'$, we have that $\opt(\bfX',k')\leq \cost_p(X_2',\ldots,X_k')\leq \opt(X,k)$ as required. 

We obtain that either $\opt(\bfX,k)=\opt(\bfX',k')\leq B$ or $\opt(\bfX,k)>B$ and $\opt(\bfX',k')> B$. 
Notice also that, given an optimum equal $k'$-clustering of $\bfX'$, we can construct the optimum $k$-clustering of $X$, by making $S$ a cluster.
Thus, it is sufficient to prove the lemma for the collection of points obtained by the exhaustive application of  
Reduction Rule~\ref{red:ident}. Note that if this collection is empty, then $\opt(\bfX,k)=0$ and the lemma holds. 
  This allows us to assume from now that $\bfX$ is nonempty and has no $s$ identical points. 

Suppose that $\{X_1,\ldots,X_k\}$ be an equal $k$-clustering with $\cost_p(X_1,\ldots,X_k)=\opt(\bfX,k)\leq B$. By Lemma~\ref{lem:big-clust2}, we have that  for every $i \in \{1,\ldots,k\}$, the optimum median $\bfc_i$ for $X_i$ is unique and $\bfc_i=\bfx_{j}$ for  $\bfx_j \in \bfX_i$ such that $\bfX_i$ contains at least $s-2B$ points identical to $\bfx_j$. Notice that $\bfc_1,\ldots,\bfc_k$ are pairwise distinct, because a collection of identical points cannot be split between distinct clusters in such a way that each of these cluster would contain at least $s-2B$ points. This holds because  
any collection of identical points of $\bfX$ contains at most $s-1$ elements and $2(s-2B)>s$ as $s\geq 4B+1$. By Lemma~\ref{lem:init-mean}, we have that if $\bfX$ contains a collection of identical points $S$ of size $B+1\leq s-2B$, then one of the optimum median should be equal to a point from $S$. 

These observations allow us to construct  (potential) medians $\bfc_1,\ldots,\bfc_t$ as follows: we iteratively compute inclusion maximal collections $S$ of identical points of $\bfX$ and if $|S|\geq B+1$, we set the next median $\bfc_i$ be equal to a point of $S$. If the number of constructed potential medians $t\neq k$, we conclude that $\bfX$ has no equal $k$-clustering of cost at most $B$. Otherwise, if $t=k$, we have that $\bfc_1,\ldots,\bfc_k$ should be optimum medians for an equal $k$-clustering of minimum cost if $\opt(\bfX,k)\leq B$. 

Then we compute in polynomial time an equal $k$-clustering $\{X_1,\ldots,X_k\}$ of $\bfX$ that minimizes $\cost_p(X_1,\ldots,X_k,\bfc_1,\ldots,\bfc_k)$  using Lemma~\ref{lem:means-clusters}.
If $\cost_p(X_1,\ldots,X_k,\bfc_1,\ldots,\bfc_k)>B$, then we conclude that $\opt(\bfX,k)>B$. Otherwise, we have that $\opt(\bfX,k)=\cost_p(X_1,\ldots,X_k,\bfc_1,\ldots,\bfc_k)$ and $\{X_1,\ldots,X_k\}$ is an equal $k$-clustering of minimum cost.
\end{proof}

Our next aim is to show that we can reduce the dimension and the absolute values of the coordinates of the points if $\opt(X,k)\leq B$. 
To achieve this, we mimic some ideas of the kernelization algorithm of Fomin et al. in~\cite{FominGP20} for the related clustering problem. 
However, they considered only points from $\{0,1\}^d$ and the Hamming norm. 

 \begin{lemma}{\label{lem:coordinatereduction1}}
 There is a polynomial time algorithm that, given a collection  $\bfX=\{\bfx_1,\ldots,\bfx_n\}$ of $n$ points of $\mathbb{Z}^d$, a positive integer $k$ such that $n$ is divisible by $k$, and a nonnegative integer $B$, either  correctly concludes that $\opt(\bfX,k)>B$ or computes a collection of $n$ points 
  $\bfY=\{\bfy_1,\ldots,\bfy_n\}$ of $\mathbb{Z}^{d'}$ such that the following holds:
\begin{enumerate}[label=(\roman*)]
\item For every partition $\{I_1,\ldots,I_k\}$ of $\{1,\ldots,n\}$ such that $|I_1|=\cdots=|I_k|=\frac{n}{k}$, either $\cost_p(X_1,\ldots,X_k)>B$ and  $\cost_p(Y_1,\ldots,Y_k)>B$ or 
  $\cost_p(X_1,\ldots,X_k)=\cost_p(Y_1,\ldots,Y_k)$, where $X_i=\{\bfx_h\mid h\in I_i\}$ and $Y_i=\{\bfy_h\mid h\in I_i\}$ for every $i\in\{1,\ldots,k\}$. 
 \item $d' = \mathcal{O}(kB^{p+1})$.
\item  $|\bfy_i[h]| = \mathcal{O}(kB^2)$ for $h \in \{1,\ldots,d'\}$ and $i \in\{1,\ldots,n\}$.
\end{enumerate}
\end{lemma}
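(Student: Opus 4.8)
The plan is to realise the reduction as a short pipeline of transformations, each of which is \emph{$B$-safe}: for every equal partition it leaves the cost unchanged whenever the cost is at most $B$, and keeps the cost above $B$ otherwise. A composition of $B$-safe transformations is $B$-safe, so it suffices to exhibit such a pipeline ending in an instance of the claimed size, reporting $\opt(\bfX,k)>B$ whenever some transformation turns out not to be applicable. If $s=n/k=1$, every partition costs $0$ and we output any $\bfY$ with all points equal; so assume $s\ge 2$. \emph{Step 1 (constant coordinates):} delete every coordinate on which all $n$ points agree; this is exactly cost-preserving for every partition, since such a coordinate is matched by any optimal median at cost $0$ independently of the rest. \emph{Step 2 (magnitudes):} in a coordinate with distinct values $w_1<\cdots<w_t$, if two points of a cluster differ there by more than $B$ then the cluster already costs at least that difference, since $\|\bfc-\bfx\|_p+\|\bfc-\bfx'\|_p\ge|\bfx[i]-\bfx'[i]|$ for every median $\bfc$; hence in any clustering of cost $\le B$ each cluster stays inside a single maximal run of the $w_j$ whose consecutive gaps are $\le B$, and we may rescale every gap between two such runs down to exactly $B+1$ — a $B$-safe move, because a cluster confined to one run in each coordinate only undergoes a coordinatewise translation by a constant, while a cluster straddling two runs keeps two points more than $B$ apart. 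Finally, if $\opt(\bfX,k)\le B$ then for an optimal clustering and each coordinate $i$ a standard computation gives $t\le B+k$ (each cluster with $q$ distinct values in coordinate $i$ forces at least $q-1$ of the cost there, and the total is $\le B$); so if some coordinate has more than $B+k$ distinct values we report $\opt(\bfX,k)>B$, and otherwise after recentring every coordinate has absolute value $\mathcal{O}((B+k)B)=\mathcal{O}(kB^2)$, which is (iii). (For $p=0$ magnitudes are irrelevant and we simply relabel the $\le B+k$ values of each coordinate into $\{0,\dots,B+k-1\}$.)

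\emph{Step 3 (dimension)} is the heart of the proof and I would split it in two. First, \emph{duplicate merging:} if a maximal class of pairwise-identical coordinate vectors has more than $B^p$ members, keep exactly $B^p+1$ of them. This is $B$-safe: if some cluster's points do not all agree on these coordinates, two of them take integer values $a\ne b$ there, so for any median $\bfc$ these coordinates alone contribute at least $r^{1/p}|a-b|\ge r^{1/p}>B$ with $r\ge B^p+1$ copies present (using that a convex function invariant under permuting those coordinates is minimised where they are all equal), so the cluster costs $>B$ both before and after; and if all of a cluster's points agree on these coordinates, an optimal median matches them, the coordinates contribute $0$, and the number of copies is immaterial. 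After this $d\le (B^p+1)\cdot(\#\text{distinct coordinate vectors})$. Second, one bounds the number of coordinates that can matter by exploiting the structure of cheap clusterings: by Lemma~\ref{lem:big-clust1}, in any clustering of cost $\le B$ each cluster $X_j$ is at least $s-2B$ copies of one point $\mathbf{q}_j\in X_j$ together with at most $2B$ further points, each within $\ell_p$-distance $2B$ of $\mathbf{q}_j$ and hence differing from $\mathbf{q}_j$ in at most $(2B)^p$ coordinates; thus $X_j$ is non-homogeneous in at most $(2B)^{p+1}$ coordinates, and summing over $j$ all but $\mathcal{O}(kB^{p+1})$ coordinates are homogeneous within every cluster of a fixed optimal clustering. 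One then argues, via the ``close-pairs'' graph on $\{1,\dots,n\}$ (an edge whenever two points differ in at most $B^p$ coordinates, so that every cost-$\le B$ cluster is a size-$s$ clique inside one connected component, and a component whose size is not a positive multiple of $s$ certifies $\opt>B$) together with a small family of ``separator'' coordinates that force each component to stay unmixed, that the coordinates passive in every cheap clustering can be discarded; this brings $d$ down to $\mathcal{O}(kB^{p+1})$, and if that target is not reached the structural bounds have failed and we report $\opt(\bfX,k)>B$. This gives (ii).

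\textbf{Main obstacle.} The delicate point is this last bound on the number of coordinates. Duplicate merging by itself does not suffice — for instance $\Theta(k^{k})$ pairwise-distinct coordinates, each constant on the blocks of one fixed equal clustering, form a yes-instance of cost $0$ with $\opt=0$ — so one must additionally show that coordinates passive in every cheap clustering can be removed (or coalesced with a bounded canonical family) \emph{without making any expensive clustering cheap}; this second requirement is what forces the separator construction and is also what introduces the dependence on $k$. Moreover the argument must be carried out for an arbitrary constant $p$ and integer coordinates, which is harder than the binary-Hamming setting because for $p\ge 2$ the $\ell_p$-cost of a cluster does not decompose over coordinates, so one cannot reason gap-by-gap. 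It is here that the proof must adapt the kernelization technique of Fomin et al.~\cite{FominGP20}, and this is the step I expect to require the most care.
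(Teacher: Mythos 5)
Your Steps 1 and 2 are sound (modulo a careless bound discussed below) and would indeed deliver (iii), though the paper achieves the same coordinate bound differently, as a by-product of a per-part argument. The substantial gap is in Step~3 — the dimension reduction — and you are right to flag it, but the concrete device the paper uses is worth spelling out, because it is precisely the ``separator'' mechanism you are reaching for.

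The paper first greedily partitions $\bfX$ into parts $S_1,\dots,S_t$: start a new part with any unassigned point, and repeatedly absorb any point within $\ell_p$-distance $B$ of something already in the part. By the triangle inequality any cluster of cost $\le B$ lies entirely inside a single part; hence $t>k$ certifies $\opt(\bfX,k)>B$. From $t\le k$ and Lemma~\ref{lem:big-clust1} (at most $2B+1$ distinct points per cheap cluster) each part has at most $k(2B+1)$ distinct points, else again $\opt>B$. Now comes the key trick: for each part $S_i$ pick a set $R_i$ of $\ell=\max_i\ell_i$ coordinate indices containing all coordinates on which $S_i$ is non-constant, \emph{project each part onto its own $R_i$}, recentre, and append a single ``tag'' coordinate with value $(i-1)(B+1)$ for points of part $i$. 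The per-part projection preserves within-part $\ell_p$-costs exactly (the dropped coordinates are constant on the part, so contribute nothing with an appropriate median), and the tag ensures any two points from different parts remain more than $B$ apart in the new instance — so a cluster that straddles parts is expensive both before and after. This is what lets you discard coordinates that are ``passive'' for some cheap clustering without risking that an expensive clustering becomes cheap: you do not need a global family of separator coordinates, one extra dimension with $k$ levels suffices because there are only $t\le k$ parts. The bound $\ell=\Oh(kB^{p+1})$ then follows from a chain argument: ordering the $\le k(2B+1)$ distinct points of $S_i$ by insertion time, each new one is within $\ell_p$-distance $B$ of a predecessor, hence differs from it in at most $B^p$ coordinates, so $S_i$ has at most $k(2B+1)B^p$ non-constant coordinates.

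Two smaller points. Your duplicate-merging threshold $B^p+1$ fails for $p=0$: $B^0+1=2$ copies of a coordinate contribute only $2$ to the Hamming distance, which need not exceed $B$; you would want $B+1$ copies, i.e.\ $B^{\max(1,p)}+1$. And your close-pairs graph puts an edge when two points differ in at most $B^p$ coordinates, which does not control $\ell_p$-distance (two points can differ in one coordinate by an arbitrarily large amount); the right criterion — and the one the paper uses to build $S_1,\dots,S_t$ — is $\ell_p$-distance at most $B$. Neither of these affects the overall verdict: the proposal correctly locates the hard step, but the construction that actually closes it (per-part projection plus a tag coordinate) is missing and is genuinely different from the separator/close-pairs sketch you outline.
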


\begin{proof}
Let   $\bfX=\{\bfx_1,\ldots,\bfx_n\}$ be a collection of $n$ points of $\mathbb{Z}^d$ and let $k$ be a positive integer such that $n$ is divisible by $k$. Let also $B$ be a nonnegative integer. 

We iteratively construct the partition $S=\{S_1,\ldots,S_t\}$ of $\{\bfx_1,\ldots,\bfx_n\}$ using the following greedy algorithm. 
Let $j \geq 1$ be an integer and suppose that the sets $S_0,\ldots,S_{j-1}$ are already constructed assuming that $S_0=\emptyset$.
Let $\bfZ=\{\bfx_1,\ldots,\bfx_n\} \setminus \cup_{i=0}^{j-1}S_i$. 
If $\bfZ = \emptyset$, then the construction of $S$ is completed.
If $\bfZ \neq \emptyset$, we construct $S_j$ as follows: 
\begin{itemize}
\item set $S_j:=\{\bfx_h\}$ for arbitrary $\bfx_h \in \bfZ$ and set $\bfZ:=\bfZ \setminus \{\bfx_h\}$,
 \item while there is $\bfx_r \in \bfZ$ such that $||\bfx_r-\bfx_{r'}||_p\leq B$ for some $\bfx_{r'} \in S_j$,
  set $S_j:=S_j\cup \{\bfx_r\}$ and set $\bfZ=\bfZ \setminus \{\bfx_r\}$.
\end{itemize}

The crucial property of the partition $S$ is that every cluster of an equal $k$-clustering of cost at most $B$ is entirely in some part of the partition.

\begin{claim}{\label{claim:clustercontainsinpartition}}
Let $\{X_1,\ldots,X_k\}$ be an equal $k$-clustering of $\bfX$ of cost at most $B$. Then for every $i \in \{1,\ldots,k\}$ there is $j \in \{1,\ldots,t\}$ such that $X_i \subseteq S_j$.
\end{claim}

\begin{proof}[Proof of Claim~\ref{claim:clustercontainsinpartition}]
Denote $\bfc_1,\ldots,\bfc_k \in \mathbb{R}^d$ the optimum medians for the clusters $X_1,\ldots,X_k$, respectively.
 Assume to the contrary that there is a cluster $X_i$ such that $\bfx_u,\bfx_v \in X_i$ with $\bfx_u$ and $\bfx_v$ in distinct collections of the partition $\{S_1,\ldots,S_t\}$. Then $||\bfx_u-\bfx_v||_p > B$ by the construction of $S_1,\ldots,S_t$ and 
\begin{equation*} 
 \cost_p(X_1,\ldots,X_k)\geq \cost_p(X_i)=\cost_p(X_i,\bfc_i)\geq \|\bfc_i-\bfx_u\|_p+\|\bfc_i-\bfx_v\|_p\geq \|\bfx_u-\bfx_v|\|_p > B
 \end{equation*} 
 contradicting that $\cost_p(X_1,\ldots,X_k)\leq B$.
\end{proof}

From the above Claim~\ref{claim:clustercontainsinpartition}, we have that if $t>k$, then $\bfX$ has no equal $k$-clustering of cost at most $B$, that is, $\opt(X,B)>B$. In this case, we return this answer and stop. From now on, we assume that this is not the case and $t \leq k$. 

By Lemma~\ref{lem:big-clust1}, at least $\frac{n}{k}-2B$ points in every cluster of an equal  $k$-clustering of cost at most $B$ are identical. Thus, if $\{X_1,\ldots,X_k\}$ is an equal $k$-clustering of cost at most $B$, then for each $i\in\{1,\ldots,k\}$, $X_i$ contains at most $2B+1$ distinct points. By Claim~\ref{claim:clustercontainsinpartition}, we obtain that for every $i\in\{1,\ldots,t\}$, $S_i$ should contain at most $k(2B+1)$ distinct points if $\bfX$ admits an equal $k$-clustering of cost at most $B$. Then for each $i\in\{1,\ldots,t\}$, we compute the number of distinct points in $S_i$ and if this number is bigger than  $k(2B+1)$, we conclude that  $\opt(\bfX,k)>B$. In this case we return this answer and stop. From now, we assume that this is not the case and each $S_i$ for $i\in\{1,\ldots,t\}$ contains at most $k(2B+1)$ distinct points. 

For a collection of points $Z\subseteq \bfX$, we say that a coordinate $h\in\{1,\ldots,d\}$ is \emph{uniform} for $Z$ if $\bfx_j[h]$ is the  same for all $\bfx_h\in Z$ and $h$ is  \emph{nonuniform} otherwise. 

Let $\ell_i$ be the number of nonuniform coordinates for $S_i$ for $i \in \{1,\ldots,t\}$, and let $\ell=\max_{1 \leq i \leq t}\ell_i$. For each $i\in\{1,\ldots,t\}$, we select a set of indices 
$R_i\subseteq \{1,\ldots,d\}$ of size $\ell$ such that $R_i$ contains all nonuniform coordinates for $S_i$. Note that $R_i$ may be empty if $\ell=0$. 
 We also define a set of coordinates $T_i=\{1,\ldots,d\}\setminus R_i$, for $i \in \{1,\ldots,t\}$.

For every $i \in \{1,\ldots,n\}$ and $j \in \{1,\ldots,t\}$ such that $  x_i \in S_j$, we define an  $(\ell+1)$-dimensional point $\bfx'_i$, where
$\bfx'_i[1,\ldots,\ell]=\bfx_i[R_j]$ and $\bfx'_i[\ell+1]=(j-1)(B+1)$. This way we obtain a collection of points 
 $\bf X'=\{\bfx'_1,\ldots,\bfx'_n\}$. For every  $j \in \{1,\ldots,t\}$, we define $S'_j=\{\bfx'_h \mid \bfx_h \in S_j\}$, that is, we construct the partition $S'=\{S'_1,\ldots,S'_t\}$ of $\{\bfx'_1,\ldots,\bfx'_n\}$ corresponding to $S$.

For each $i\in\{1,\ldots,t\}$, we do the following:
\begin{itemize}
    \item For each $h \in \{1,\ldots,\ell\}$, we find $M_h^{(i)}=\min\{\bfx'_j[h] \mid \bfx'_j \in S'_i\}$.
    \item For every $\bfx_j' \in S_i'$, we define a new point $\bfy_j$ by setting  $\bfy_j[h]=\bfx'_j[h]-M_h^{(i)}$ for $h \in \{1,\ldots,\ell\}$ and $\bfy_j[\ell+1]=\bfx'_j[\ell+1]=(j-1)(B+1)$.
\end{itemize}
This way, we construct the collection $\bfY=\{\bfy_1,\ldots,\bfy_n\}$ of points from $\mathbb{Z}^{\ell+1}$. Our algorithm return this collection of the points.

It is easy to see that the described algorithm runs in polynomial time. We show that  if the algorithm outputs $\bfY$, then this collection of the points satisfies conditions (i)--(iii) of the lemma. 

\medskip
To show (i), let $\{I_1,\ldots,I_k\}$ 
be a partition  of $\{1,\ldots,n\}$ such that $|I_1|=\cdots=|I_k|=\frac{n}{k}$, and let $X_i=\{\bfx_h\mid h\in I_i\}$ and $Y_i=\{\bfy_h\mid h\in I_i\}$ for every $i\in\{1,\ldots,k\}$. 
We show that 
either $\cost_p(X_1,\ldots,X_k)>B$ and  $\cost_p(Y_1,\ldots,Y_k)>B$ or $\cost_p(X_1,\ldots,X_k)=\cost_p(Y_1,\ldots,Y_k)$. 

Suppose that  $\cost_p(X_1,\ldots,X_k)\leq B$. Consider $i\in\{1,\ldots, k\}$ and denote by $\bfc_i$ the optimum median for $X_i$. By Claim~\ref{claim:clustercontainsinpartition}, there is $j\in\{1,\ldots,t\}$ such that $X_i\subseteq S_j$.  We define $\bfc_i'\in\mathbb{R}^{\ell+1}$ by setting $\bfc_i'[1,\ldots,\ell]=\bfc_i[R_j]$ and $\bfc_i'[\ell+1]=(j-1)(B+1)$. Further, we consider $\bfc_i''\in \mathbb{R}^{\ell+1}$ such that $\bfc_i''[h]=\bfc_i'[h]-M_h^{(j)}$ for $h\in\{1,\ldots,\ell\}$ and $\bfc_i''[\ell+1]=(j-1)(B+1)$.
Then by the definitions of $\bfX_i'$ and $\bfY_i$, we have that 
\begin{equation*}
\cost_p(X_i)=\cost_p(X_i,\bfc_i)=\cost_p(X_i',\bfc_i')=\cost_p(Y_i,\bfc_i'')\geq\cost_p(Y_i).
\end{equation*}
This implies that $\cost_p(X_1,\ldots,X_k)\geq \cost_p(Y_1,\ldots,Y_k)$. 

For the opposite direction, assume that $\cost_p(X_1,\ldots,X_k)\leq B$.
Similarly to $S'$, for every  $j \in \{1,\ldots,t\}$, we define $S''_j=\{\bfy_h \mid \bfx_h \in S_j\}$, that is, we construct the partition $S''=\{S''_1,\ldots,S''_t\}$ of $\bfY$ corresponding to $S$. 
We claim that for each $i\in\{1,\ldots,k\}$, there is $j\in\{1,\ldots,t\}$ such that $Y_i\subseteq S_j$.

The proof is by contradiction and is similar to the proof of Claim~\ref{claim:clustercontainsinpartition}. Assume that 
there is $i\in\{1,\ldots,k\}$ such that there are $\bfy_u,\bfy_v \in Y_i$ belonging to distinct sets of $S''$.   
Then $||\bfy_u-\bfy_v||_p\geq |\bfy_u[\ell+1]-\bfy_v[\ell+1]| > B$ by 
the construction of $S_1'',\ldots,S_t''$. Then 
\begin{equation*} 
 \cost_p(Y_1,\ldots,Y_k)\geq \cost_p(Y_i)=\cost_p(Y_i,\bfc_i)\geq \|\bfc_i-\bfx_u\|_p+\|\bfc_i-\bfx_v\|_p\geq \|\bfx_u-\bfx_v|\|_p > B,
 \end{equation*} 
 where $\bfc_i$ is an optimum median of $Y_i$. However, this contradicts that $\cost_p(Y_1,\ldots,Y_k)\leq B$.
 
 Consider $i\in\{1,\ldots, k\}$ and let $\bfc_i''\in\mathbb{R}^{\ell+1}$ an optimum median for $Y_i$. Let also $j\in\{1,\ldots,t\}$ be such that $Y_i\subseteq S_j$. 
 Notice that $\bfc_i''[\ell+1]=(j-1)(B+1)$ by the definition of $S_j$. We define 
 $\bfc_i'\in \mathbb{R}^{\ell+1}$ by setting $\bfc_i'[h]=\bfc_i''[h]+M_h^{(j)}$ for $h\in\{1,\ldots,\ell\}$ and $\bfc_i'[\ell+1]=\bfc_i''[\ell+1]=(j-1)(B+1)$. Then we define 
 $\bfc_i\in \mathbb{R}^d$, we setting $\bfc_i[R_j]=\bfc_i'[1,\ldots,\ell]$ and $\bfc_i[T_j]=\bfx_h[T_j]$ for arbitrary $\bfx_h\in S_j$. Because the coordinates in $T_j$ are uniform for $S_j$, the values in each coordinate $h\in T_j$ of the coordinates of the points of $X_i$ are the same. This implies that 
 \begin{equation*}
\cost_p(X_i)\leq\cost_p(X_i,\bfc_i)=\cost_p(X_i',\bfc_i')=\cost_p(Y_i,\bfc_i'')=\cost_p(Y_i).
\end{equation*} 
Hence, $\cost_p(X_1,\ldots,X_k)\leq \cost_p(Y_1,\ldots,Y_k)$.  
This completes the proof of (i).
 
 \medskip
 To show (ii), we prove that $\ell\leq kB^p(2B+1)$. For this, we show that $\ell_i\leq kB^p(2B+1)$ for every $i\in\{1,\ldots,t\}$. Consider $i\in\{1,\ldots,t\}$. Recall that $S_i$ contains at most $k(2B+1)$ distinct points.  Denote by $\bfx_{j_1},\ldots,\bfx_{j_r}$ the distinct points in $X_i$ and assume that they are numbered in the order in which they are included in $S_i$ by the greedy procedure constructing this set. 

Let $Z_q=\{\bfx_{j_1},\ldots,\bfx_{j_q}\}$ for $q\in\{1,\ldots,r\}$. We claim that $Z_q$ has at most $(q-1)B^p$ nonuniform coordinates for each $q\in\{1,\ldots,r\}$. The proof is by induction. The claim is trivial if $q=1$. Let $q>1$ and assume that the claim is fulfilled for $Z_{q-1}$. By the construction of $S_i$, $\bfx_{j_q}$ is at distance at most $B$ from $\bfx_{j_h}$ for some $h\in\{1,\ldots,q-1\}$. Then because 
$\|\bfx_{j_q}-\bfx_{j_h}\|_p\leq B$, we obtain that the points $\bfx_{i_q}$ and $\bfx_{i_h}$ differ in at most $B^p$ coordinates by the definition of the $\ell_p$-norm. Then because $Z_{q-1}$ has at most $(q-2)B^p$ nonuniform coordinates, $Z_q$ has at most $(q-1)B^p$ nonuniform coordinates as required.

Because the number of nonuniform coordinates for $S_i$ is the same as the number of nonuniform coordinates for $Z_r$ and $r\leq k(2B+1)$, we obtain that $\ell_i\leq kB^p(2B+1)$. Then
$\ell=\max_{1\leq i\leq t}\ell_i\leq  kB^p(2B+1)$. Because the points of $\bfY$ are in $\mathbb{Z}^{\ell+1}$, we have the required upper bound for the dimension. This concludes the proof of (ii).
 
Finally, to show (iii), we again exploit the property that every $S_i$ contains at most $k(2B+1)$ distinct points. Let $i\in\{1,\ldots,t\}$ and $h\in\{1,\ldots,d\}$ and denote by $\bfx_{j_1},\ldots,\bfx_{j_r}$ the distinct points in $X_i$.  Let $h\in\{1,\ldots,d\}$. We can assume without loss of generality that $\bfx_{j_1}[h]\leq\cdots\leq \bfx_{j_r}[h]$. 
We claim that $\bfx_{j_r}[h]-\bfx_{j_1}[h]\leq B(k(2B+1)-1)$. This is trivial if $r=1$. Assume that $r>1$. Observe that $\bfx_{j_q}[h]-\bfx_{j_{q-1}}[h]\leq B$ for $q\in\{2,\ldots,r\}$.
Otherwise, if there is $q\in\{2,\ldots,r\}$ such that $\bfx_{j_q}[h]-\bfx_{j_{q-1}}[h]> B$, then the distance from any point in $\{\bfx_{j_1},\ldots,\bfx_{j_{q-1}}\}$ to any point in $\{\bfx_{j_q},\ldots,\bfx_{j_{r}}\}$ is more than $B$ but this contradicts that these points are the distinct points of $S_i$. Then because $\bfx_{j_q}[h]-\bfx_{j_{q-1}}[h]\leq B$ for $q\in\{2,\ldots,r\}$ and $r\leq k(2B+1)$, we obtain that 
 $\bfx_{j_r}[h]-\bfx_{j_1}[h]\leq B(k(2B+1)-1)$. 
 
Then, by the definition of $\bfx_1',\ldots,\bfx_n'$, we obtain that for every $\bfx_{q}',\bfx_r'\in S_i'$ for some $i\in\{1,\ldots,t\}$ and every $h\in\{1,\ldots,\ell\}$, $|\bfx_q'[h]-\bfx_r'[h]|\leq B(k(2B+1)-1)$. By the definition of $M_h^{(i)}$ for $i\in\{1,\ldots,t\}$, we obtain that $|\bfy_j[h]|\leq B(k(2B+1)-1)$ for every $j\in\{1,\ldots,n\}$ and  every $h\in\{1,\ldots,\ell\}$. Because $|\bfy_j[\ell+1]|\leq (k-1)(B+1)$, we 
have that  $|\bfy_i[h]|\leq B(k(2B+1)-1)$ for $h \in \{1,\ldots,d'\}$ and $i \in\{1,\ldots,n\}$. This completes the proof of (iii) and the proof of the lemma.
\end{proof}

Finally in this subsection, we show the following lemma that is used to upper bound the additional cost incurred by the greedy clustering of blocks of identical points. 

\begin{lemma}\label{lem:bound-opt}
Let  $\bfX=\{\bfx_1,\ldots,\bfx_n\}$ be a collection of $n$ points of $\mathbb{Z}^d$ and set $k$ be  a positive integer such that $n$ is divisible by $k$. Suppose that $S_1,\ldots,S_t$ are disjoint collections of identical points of $\bfX$ such that $|S_1|=\cdots=|S_t|=\frac{n}{k}$ and $\bfY=\bfX\setminus \big(S_1\cup\cdots\cup S_t\big)$. Then $\opt(\bfY,k-t)\leq2\cdot \opt(\bfX,k)$.
\end{lemma}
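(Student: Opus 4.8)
The plan is to start from an optimal equal $k$-clustering $\{X_1,\ldots,X_k\}$ of $\bfX$ with optimum medians $\bfc_1,\ldots,\bfc_k$, and transform it into an equal $(k-t)$-clustering of $\bfY$ whose cost is at most $2\cdot\opt(\bfX,k)$. First I would repeatedly apply Lemma~\ref{lem:exchange}: since each $S_i$ has exactly $s=\tfrac nk$ identical points, I can peel off $S_1,\ldots,S_t$ one at a time, making each $S_i$ into its own cluster whose median is the common point $\bfx_{j_i}\in S_i$, at an extra cost of at most $s\|\bfc-\bfx_{j_i}\|_p$ at the $i$-th step, where $\bfc$ is the current median of the cluster that $S_i$ is carved out of. Because the carved-out clusters $S_i$ themselves have cost $0$, after processing all $t$ of them we obtain an equal $k$-clustering $\{S_1,\ldots,S_t,Y_1,\ldots,Y_{k-t}\}$ of $\bfX$ with $\{Y_1,\ldots,Y_{k-t}\}$ a valid equal $(k-t)$-clustering of $\bfY$, and
\[
\cost_p(Y_1,\ldots,Y_{k-t}) \leq \opt(\bfX,k) + \sum_{i=1}^t s\cdot \|\bfc^{(i)}-\bfx_{j_i}\|_p,
\]
where $\bfc^{(i)}$ denotes the median, in the clustering current at step $i$, of the cluster from which $S_i$ is extracted. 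So $\opt(\bfY,k-t)\le\opt(\bfX,k)+\sum_{i=1}^t s\|\bfc^{(i)}-\bfx_{j_i}\|_p$, and it remains to bound the correction term by $\opt(\bfX,k)$.

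The main obstacle is precisely controlling this sum $\sum_{i=1}^t s\|\bfc^{(i)}-\bfx_{j_i}\|_p$: a single cluster $X_\ell$ of the original clustering may intersect several blocks $S_i$, and the medians $\bfc^{(i)}$ change as we perform the exchanges, so one cannot naively charge each term to a disjoint chunk of the original cost. The way I would handle this is to go back to the \emph{original} clustering and, for each $i$, let $X_{\ell(i)}$ be the original cluster that contains the most points of $S_i$ (or any original cluster meeting $S_i$), with original optimum median $\bfc_{\ell(i)}$; I would argue, using the triangle inequality and the fact established in the proof of Lemma~\ref{lem:polynomialtimealgorithm} that at most $s-1$ of the $S_i$-points can sit in any one cluster, that $s\|\bfc^{(i)}-\bfx_{j_i}\|_p$ is comparable to the contribution of $X_{\ell(i)}$'s points to $\cost_p(X_{\ell(i)})$. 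Concretely, $\cost_p(X_{\ell(i)})\ge \sum_{\bfx_h\in X_{\ell(i)}}\|\bfc_{\ell(i)}-\bfx_h\|_p$, and the points of $X_{\ell(i)}$ lying outside $S_i$ number at least $s-|X_{\ell(i)}\cap S_i|\ge 1$, each at distance $\|\bfc_{\ell(i)}-\bfx_{j_i}\|_p - \text{(something)}$ from $\bfc_{\ell(i)}$ by the reverse triangle inequality — but this needs care.

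A cleaner route, which I expect to be the one that actually works, is to reformulate the bound on $\sum_i s\|\bfc^{(i)}-\bfx_{j_i}\|_p$ as the objective of an auxiliary min-cost assignment/flow problem (the ``auxiliary clustering problem formulated as a min-cost flow'' advertised in the introduction, cf.\ \probkClust): think of each block $S_i$ as demanding to be matched to one of the original medians $\bfc_1,\ldots,\bfc_k$ at cost $s\|\bfc_\ell-\bfx_{j_i}\|_p$, with capacity constraints ensuring the assignment is consistent with how the $S_i$ can be packed into the $X_\ell$. One shows this auxiliary optimum is at most $\opt(\bfX,k)$ by exhibiting the feasible solution induced by the original clustering itself: assign $S_i$ to $\bfc_{\ell}$ for an original cluster $X_\ell$ with $X_\ell\cap S_i\neq\emptyset$, and bound the total cost $\sum_i s\|\bfc_{\ell}-\bfx_{j_i}\|_p$ by $\sum_\ell \cost_p(X_\ell)$ using that each original cluster $X_\ell$ contributes $\|\bfc_\ell - \bfx_h\|_p$ for each of its $s$ points, $s-1$ of which can be "charged" against one block. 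Hence $\opt(\bfY,k-t)\le\opt(\bfX,k)+\opt(\bfX,k)=2\opt(\bfX,k)$, which is the claim. I expect the bookkeeping in the min-cost-flow reformulation — in particular setting up capacities so that the original clustering yields a feasible flow whose cost telescopes into $\sum_\ell\cost_p(X_\ell)$ — to be where the real work lies.
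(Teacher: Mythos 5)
Your proposal follows the paper's actual proof quite closely: iterated application of Lemma~\ref{lem:exchange} to carve out $S_1,\ldots,S_t$, followed by an auxiliary min-cost-flow argument to bound the correction term by $\opt(\bfX,k)$. So the high-level route is the same. However, there are two places where your account is materially off and a reader could not reconstruct the argument from it.

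First, the worry that ``the medians $\bfc^{(i)}$ change as we perform the exchanges'' is resolved in the paper before the flow argument ever enters: one fixes \emph{distinct} target indices $(j_1,\ldots,j_t)$ up front, chosen to minimize $\sum_{h=1}^t\|\bfx_{i_h}-\bfc_{j_h}\|_p$, and applies Lemma~\ref{lem:exchange} so that $S_h$ replaces cluster $j_h$. Since each exchange only overwrites the median of the cluster being replaced (all other medians stay $\bfc_\ell$), and the $j_h$ are distinct, every term in the correction sum refers to an \emph{original} median $\bfc_{j_h}$. Without this distinctness choice the iterated bound $\sum_h s\|\bfc^{(h)}-\bfx_{i_h}\|_p$ really would involve moving targets.

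Second, your charging argument for the upper bound is not right as stated. You propose to ``assign $S_i$ to $\bfc_\ell$ for an original cluster $X_\ell$ with $X_\ell\cap S_i\neq\emptyset$'' and bound $\sum_i s\|\bfc_\ell-\bfx_{j_i}\|_p$ by $\sum_\ell\cost_p(X_\ell)$, ``charging $s-1$ against one block.'' This fails: if $|X_\ell\cap S_i|=1$, the cluster $X_\ell$ contributes only one copy of $\|\bfc_\ell-\bfx_{j_i}\|_p$ to $\opt(\bfX,k)$, not $s$ copies, and no local charging recovers the missing factor $s$. What actually makes the flow argument work is that the auxiliary problem allows each $S_i$ to be \emph{split} across medians (partition $\bfZ=S_1\cup\cdots\cup S_t$ into $Z_1,\ldots,Z_k$ with $|Z_j|\le s$); the feasible solution induced by the original clustering is $Z_j=X_j\cap\bfZ$, whose cost is visibly at most $\sum_j\cost_p(X_j)=\opt(\bfX,k)$ --- this is a split assignment, not a matching. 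One then appeals to integrality of min-cost flow with all arc capacities equal to $s$ and total demand $st$ to conclude that there is an \emph{unsplit} optimum, i.e.\ an injective map $h\mapsto j_h$ with each $S_h$ sent entirely to $\bfc_{j_h}$, so the optimum value is exactly $s\sum_h\|\bfx_{i_h}-\bfc_{j_h}\|_p\ge s\cdot\min_{(q_1,\ldots,q_t)}\sum_h\|\bfx_{i_h}-\bfc_{q_h}\|_p$. The passage from the split (upper-bound) side to the unsplit (lower-bound) side via flow integrality is the real content of the claim, and it is exactly the step your sketch leaves out.
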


\begin{proof}
Let $\{X_1,\ldots,X_k\}$ be an optimum equal $k$-clustering of $\bfX$ with optimum medians $\bfc_1,\ldots,\bfc_k$ of $X_1,\ldots,X_k$, respectively, that is,
$\opt(\bfX,k)=\cost_p(X_1,\ldots,X_k)=\cost_p(X_1,\ldots,X_k,\bfc_1,\ldots,\bfc_k)$.
Let $\bfx_{i_h}\in\bfS_h$ for $h\in \{1,\ldots,t\}$. Consider a $t$-tuple of $(j_1,\ldots,j_t)$ of distinct indices from $\{1,\ldots,k\}$ such that 
\begin{equation}\label{eqn:choice}
\|\bfx_{i_1}-\bfc_{j_1}\|_p+\cdots+\|\bfx_{i_t}-\bfc_{j_t}\|_p=\min_{(q_1,\ldots,q_t)}\big(\|\bfx_{i_1}-\bfc_{q_1}\|_p+\cdots+\|\bfx_{i_t}-\bfc_{q_t}\|_p\big), 
\end{equation} 
where the minimum in the right part is taken over all $t$-tuples $(q_1,\ldots,q_t)$ of distinct indices from $\{1,\ldots,k\}$. 
Denote $\ell=k-t$. Iteratively applying Lemma~\ref{lem:exchange} for $S_1,\ldots,S_t$ and the medians $\bfc_{j_1},\ldots,\bfc_{j_t}$, we obtain that there is an equal $\ell$-clustering $\{Y_1,\ldots,Y_\ell\}$ of $\bfY$ such that 
\begin{equation}\label{eqn:iter}
\cost_p(S_1,\ldots,S_t,Y_1,\ldots,Y_\ell)\leq\cost_p(X_1,\ldots,X_k)+s\sum_{h=1}^t \|\bfx_{i_h}-\bfc_{j_h}\|_p.
\end{equation}
Because the points in each $S_i$ are identical, $\cost_p(S_i)=0$ and, therefore, $\cost_p(S_1,\ldots,S_t,Y_1,\ldots,Y_\ell)=\cost_p(Y_1,\ldots,Y_k)$. Then by (\ref{eqn:iter}),
\begin{equation}\label{eqn:upper-bound-one}
\opt(\bfY,\ell)\leq \cost_p(Y_1,\ldots,Y_k)\leq \opt(\bfX,k)+s\sum_{h=1}^t \|\bfx_{i_h}-\bfc_{j_h}\|_p.
\end{equation}
This implies that to prove the lemma, it is sufficient to show that 
\begin{equation}\label{eqn:upper-bound-two}
s\sum_{h=1}^t \|\bfx_{i_h}-\bfc_{j_h}\|_p\leq \opt(\bfX,k). 
\end{equation}

To prove (\ref{eqn:upper-bound-two}), we consider the following auxiliary  clustering problem. Let $\bfZ=S_1\cup\cdots\cup S_t$ and $s=\frac{n}{k}$. The task of the problem is to find a partition $\{Z_1,\ldots,Z_k\}$ of $\bfZ$, where some sets may be empty and $|Z_i|\leq s$ for every  $i\in\{1,\ldots,k\}$, such that 
\begin{equation}\label{eqn:opt-aux}
\sum_{i=1}^t\sum_{\bfx_h\in Z_i}\|\bfx_h-\bfc_i\|_p
\end{equation}
is minimum. In words, we cluster the elements of $\bfZ$ in optimum way into clusters of size at most $s$ using the optimum medians $\bfc_1,\ldots,\bfc_k$ for the clustering $\{X_1,\ldots,X_k\}$. 
Denote by $\opt^*(\bfZ,k)$ the minimum value of (\ref{eqn:opt-aux}). Because in this problem the task is to cluster a subcollection of points of $\bfX$ and we relax the cluster size constraints, we have that 
$\opt^*(\bfZ,k)\leq \opt(\bfX,k)$.  We show the following claim.

\begin{claim}\label{cl:aux-opt}
\begin{equation*}
\opt^*(\bfZ,k)\geq s\cdot\min_{(q_1,\ldots,q_t)}\sum_{h=1}^t\|\bfx_{i_h}-\bfc_{q_h}\|_p,
\end{equation*}
where the minimum is taken over all $t$-tuples $(q_1,\ldots,q_t)$ of distinct indices from $\{1,\ldots,k\}$.
\end{claim}

\begin{proof}[Proof of Claim~\ref{cl:aux-opt}]
We show that the considered auxiliary clustering problem can be reduced to the \textsc{Min Cost Flow} problem (see, e.g., the textbook of Kleinberg and Tardos~\cite{KleinbergT06} for the introduction)\footnote{Equivalently one may use the ILP statement.}. We construct the directed graph $G$ and define the \emph{cost} and \emph{capacity} functions $c(\cdot)$ and $\omega(\cdot)$ on the set of arcs $A(G)$ as follows.
\begin{itemize}
\item Construct two vertices $a$ and $b$ that are the \emph{source} and \emph{target} vertices, respectively.
\item For every $i\in\{1,\ldots,t\}$, construct a vertex $u_i$ (corresponding to $S_i$) and an arc $(a,u_i)$ with $\omega(a,u_i)=0$.
\item For every $j\in \{1,\ldots,k\}$, construct a vertex $v_j$ (corresponding to $Z_j$) and and arc $(v_j,b)$ with $\omega(v_j,b)=0$.
\item For every $h\in\{1,\ldots,t\}$ and every $j\in\{1,\ldots,k\}$, construct an arc $(u_h,v_j)$ and set $\omega(u_i,v_j)=\|\bfx_{i_h}-\bfc_j\|_p$ (recall that $\bfx_{i_h}\in S_h$).
\item For every arc $e$ of $G$, set $c(e)=s$, where $s=\frac{n}{k}$. 
\end{itemize}
Then the volume of a \emph{flow} $f\colon A(G)\rightarrow \mathbb{R}_{\geq 0}$ is $v(f)=\sum_{i=1}^tf(a,u_i)$ and its \emph{cost} is $\omega(f)=\sum_{a\in A(G)}\omega(a)\cdot f(a)$. 
Let $f^*(\cdot)$ be a flow of volume $st$ with minimum cost. We claim that $\omega(f^*)=\opt^*(\bfZ,k)$. 

Assume that $\{Z_1,\ldots,Z_k\}$ is a partition of $\bfZ$ such that $|Z_i|\leq s$ for every  $i\in\{1,\ldots,k\}$ and 
$\opt^*(\bfZ,k)=\sum_{i=1}^t\sum_{\bfx_h\in Z_i}\|\bfx_h-\bfc_i\|_p$. We define the flow $f(\cdot)$ as follows:
\begin{itemize}
\item for every $i\in\{1,\ldots,t\}$, set $f(a,u_i)=s$,
\item for every $i\in\{1,\ldots,t\}$ and $j\in\{1,\ldots,k\}$, set $f(u_i,v_j)=|S_i\cap Z_j|$, and
\item for every $j\in\{1,\ldots,t\}$, set $f(v_j,b)=|Z_j|$.
\end{itemize}
It is easy to verify that $f$ is a feasible flow of volume $st$ and $\omega(f)=\sum_{i=1}^t\sum_{\bfx_h\in Z_i}\|\bfx_h-\bfc_i\|_p$. Thus, $\omega(f^*)\leq \omega(f)=\opt^*(\bfZ,k)$.

For the opposite inequality, consider $f^*(\cdot)$. By the well-known property of flows (see~\cite{KleinbergT06}), we can assume that $f^*(\cdot)$ is an integer flow, that is, $f^*(e)$ is a nonnegative integer for every $e\in A(G)$. Since $v(f^*)=st$, we have that $f^*(a,u_i)=s$ for every $i\in\{1,\ldots,t\}$.  Then we construct the clustering  $\{Z_1,\ldots,Z_k\}$ as follows: for every $i\in\{1,\ldots,i\}$ and $j\in\{1,\ldots,k\}$, we put exactly $f^*(u_i,v_j)$ points of $S_i$ into $Z_j$. Because  $f^*(a,u_i)=s$ for every $i\in\{1,\ldots,t\}$ and $c(v_j,b)=s$ for every $j\in\{1,\ldots,k\}$, we obtain that $\{Z_1,\ldots,Z_k\}$ is a partition of $\bfZ$ such that $|Z_i|\leq s$ for every  $i\in\{1,\ldots,k\}$ and $\sum_{i=1}^t\sum_{\bfx_h\in Z_i}\|\bfx_h-\bfc_i\|_p=\omega(f^*)$. This implies that 
$\opt^*(\bfZ,k)\leq \sum_{i=1}^t\sum_{\bfx_h\in Z_i}\|\bfx_h-\bfc_i\|_p=\omega(f^*)$. 

This proves that $\omega(f^*)=\opt^*(\bfZ,k)$. Moreover, we can observe that, given an integer flow $f(\cdot)$ with $v(f)=st$, we can construct a feasible clustering $\{Z_1,\ldots,Z_k\}$ of cost $\omega(f)$ such that for every $i\in\{1,\ldots,t\}$ and every $j\in\{1,\ldots,k\}$,   $|S_i\cap Z_j|=f(u_i,v_j)$. Recall that the capacities of the arcs of $G$ are the same and are equal to $s$. Then again exploiting the properties of flows (see~\cite{KleinbergT06}), we observe that there is a flow $f^*(\cdot)$ with $v(f^*)=st$ of minimum cost such that \emph{saturated} arcs (that is, arcs $e$ with $f^*(e)=c(e)=s$) compose internally vertex disjoint $(a,b)$-paths, and the flow on other arcs is zero. This implies, that for the clustering $\{Z_1,\ldots,Z_k\}$ constructed for $f^*(\cdot)$, for every $j\in\{1,\ldots,k\}$, ether $Z_j=\emptyset$ or there is $i\in\{1,\ldots,t\}$ such that $Z_j=S_i$. 
Assume that $j_1,\ldots,j_t$ are distinct indices from $\{1,\ldots,k\}$ such that $Z_{j_h}=S_h$ for $h\in\{1,\ldots,t\}$. Then  
$\omega(f^*)=\sum_{i=1}^t\sum_{\bfx_h\in Z_i}\|\bfx_h-\bfc_i\|_p=s\sum_{h=1}^t\|\bfx_{i_h}-\bfc_{j_h}\|_p$ and
\begin{equation*} 
\opt^*(\bfZ,k)=\omega(f^*)=s\sum_{h=1}^t\|\bfx_{i_h}-\bfc_{j_h}\|_p\geq  s\cdot\min_{(q_1,\ldots,q_t)}\sum_{h=1}^t\|\bfx_{i_h}-\bfc_{q_h}\|_p,
\end{equation*}
where the minimum is taken over all $t$-tuples $(q_1,\ldots,q_t)$ of distinct indices from $\{1,\ldots,k\}$. This proves the claim.
\end{proof}
Recall that $\opt^*(\bfZ,k)\leq \opt(\bfX,k)$.
By the choice of $j_1,\ldots,j_t$ in (\ref{eqn:choice}) and Claim~\ref{cl:aux-opt}, we obtain that inequality (\ref{eqn:upper-bound-two}) holds. Then by (\ref{eqn:upper-bound-two}), we have that  $\opt(\bfY,k-t)\leq 2\cdot\opt(\bfX,k)$ as required by the lemma. 
\end{proof}

\subsection{Proof of Theorem~\ref{thm:lossykernel1}}\label{subsec:2-appr}
Now we are ready to show the result about approximate kernel that we restate.

\Lossy*

\begin{proof}
Let $(\bfX,k,B)$ be an instance of \probPEClust with \linebreak $\bfX=\{\bfx_1,\ldots,\bfx_n\}$, where the points are from  $\mathbb{Z}^d$  and $n$ is divisible by $k$.
 Recall that a lossy kernel consists of two algorithms. The first algorithm   is 
  a polynomial time reduction producing an instance $(\bfX',k',B')$ of bounded size. The second algorithm is a solution-lifting  and  for every equal $k'$-clustering $\{X_1',\ldots,X_k'\}$ of $\bfX'$, this algorithm produces in polynomial time an equal $k$-clustering $\{X_1,\ldots,X_k\}$ of $\bfX$ such that 
\begin{equation}\label{eqn:condition}
\frac{\cost_p^B(X_1,\ldots,X_k)}{\opt(\bfX,k,B)}\leq 2\cdot \frac{\cost_p^{B'}(X_1',\ldots,X_{k'}')}{\opt(\bfX',k',B')}. \footnote{Note that by our simplifying assumption, $\frac{\cost_p^B(X_1,\ldots,X_k)}{\opt(\bfX,k,B)}=1$ if $\opt(\bfX,k,B)=0$, and the same assumption is used for $ \frac{\cost_p^{B'}(X_1',\ldots,X_{k'}')}{\opt(\bfX',k',B')}$.}
\end{equation}

We separately consider the cases when $\frac{n}{k} \geq 4B+1$ and $\frac{n}{k} \leq 4B$.

Suppose that $\frac{n}{k} \geq 4B+1$. Then we apply the algorithm from Lemma~\ref{lem:polynomialtimealgorithm}. If the algorithm returns the answer that $\bfX$ does no admit an equal $k$-clustering of cost at most $B$, then the reduction algorithm returns an trivial no-instance $(\bfX',k',B')$ of constant size, that is, an instance  such that $\bfX'$ has no clustering of cost at most $B'$. For example, we set $\bfX'=\{(0),(1)\}$, $k'=1$, and $B'=0$. 
Here and in the further cases when the reduction algorithm returns a trivial no-instance, the solution-lifting algorithm returns an arbitrary equal $k$-clustering of $\bfX$. 
Since $\cost_p^B(X_1,\ldots,X_k)=\opt(\bfX,k,B)=B+1$, 
 (\ref{eqn:condition}) holds.
Assume that the algorithm from Lemma~\ref{lem:polynomialtimealgorithm} produced an equal $k$-clustering $\{X_1,\ldots,X_k\}$ of minimum cost. Then the  reduction
 returns an arbitrary instance of \probPEClust of constant size. 
  For example, we can use $\bfX'=\{(0)\}$, $k'=1$, and $B'=0$.
 The solution-lifting algorithms  always returns $\{X_1,\ldots,X_k\}$. Clearly, $\cost_p^B(X_1,\ldots,X_k)=\opt(\bfX,k,B)$ and (\ref{eqn:condition}) is fulfilled.

From now on, we assume that $\frac{n}{k} \leq 4B$, that is, $n \leq 4Bk$. 
 We apply the algorithm from Lemma~\ref{lem:coordinatereduction1}. If this algorithm reports that there is no  equal $k$-clustering of cost at most $B$, then the reduction algorithm returns a trivial no-instance and the solution-lifting algorithm outputs an arbitrary equal $k$-clustering of $\bfX$. Clearly, (\ref{eqn:condition}) is satisfied. Assume that this is not the case. Then we obtain a 
  collection of $n\leq 4Bk$ points  $\bfY=\{\bfy_1,\ldots,\bfy_n\}$ of $\mathbb{Z}^{d'}$ satisfying conditions (i)--(iii) of Lemma~\ref{lem:coordinatereduction1}. That is,
  \begin{itemize}
 \item[(i)] for every partition $\{I_1,\ldots,I_k\}$ of $\{1,\ldots,n\}$ such that $|I_1|=\cdots=|I_k|=\frac{n}{k}$, either $\cost_p(X_1,\ldots,X_k)>B$ and  $\cost_p(Y_1,\ldots,Y_k)>B$ or 
  $\cost_p(X_1,\ldots,X_k)=\cost_p(Y_1,\ldots,Y_k)$, where $X_i=\{\bfx_h\mid h\in I_i\}$ and $Y_i=\{\bfy_h\mid h\in I_i\}$ for every $i\in\{1,\ldots,k\}$, 
 \item[(ii)] $d' = \mathcal{O}(kB^{p+1})$, and 
  \item[(iii)] $|\bfy_i[h]| = \mathcal{O}(kB^2)$ for $h \in \{1,\ldots,d'\}$ and $i \in\{1,\ldots,n\}$. 
\end{itemize}
By (i), for given an equal $k$-clustering clustering $\{Y_1,\ldots,Y_k\}$ of $\bfY$, we can compute the corresponding clustering $\{X_1,\ldots,X_k\}$ by setting $X_i=\{\bfx_h\mid \bfy_h\in Y_i\}$ for $i\in\{1,\ldots,k\}$. Then $\opt(\bfX,k,B)=\opt(\bfY,k,B)$
and 
\begin{equation}\label{eqn:x-y}
\frac{\cost_p^B(X_1,\ldots,X_k)}{\opt(\bfX,k,B)}= \frac{\cost_p^{B}(Y_1,\ldots,Y_{k})}{\opt(\bfY,k,B)}. 
\end{equation}
 Hence the instances $(\bfX,k,B)$ and $(\bfY,k,B)$ are equivalent. We continue with the compressed instance $(\bfY,k,B)$.
 
Now we apply the greedy procedure that constructs clusters $S_1,\ldots,S_t$ composed by identical points. Formally, we initially set $\bfX':=Y$, $k':=k$, and $i:=0$. Then we do  the following:
\begin{itemize}
\item while $\bfX'$ contains a collections $S$ of $s$ identical points, set $i:=i+1$, $S_i:=S$, $\bfX':=\bfX'\setminus S$, and $k':=k'-1$.
\end{itemize}

 Denote by $\bfX'$ the set of points obtained by the  application of the procedure and let $S_1,\ldots,S_t$ be the collections of identical points constructed by the procedure.
 Note that $k'=k-t$. We also define $B'=2B$. Notice that it may happen that $\bfX'=\bfY$ or $\bfX'=\emptyset$.
 The crucial property exploited by the kernelization is that  by  Lemma~\ref{lem:bound-opt}, $\opt(\bfX',k')\leq 2\cdot \opt(\bfY,k)$. 
 
We argue that if $k'>B$, then we have no $k$-clustering of cost at most $B$. 
 Suppose that $k'>B'$. Consider an arbitrary equal $k'$-clustering $\{X_1',\ldots,X_{k'}'\}$ of $\bfX'$. Because the construction of $S_1,\ldots,S_t$ stops when there is no collection of $s$ identical points, 
 each cluster $X_i'$ contains at least two distinct points. Since all points have integer coordinates, we have that $\cost_p(X_i')\geq 1$ for every $i\in\{1,\ldots,k'\}$. Therefore, 
 $\cost_p(X_1',\ldots,X_{k'}')=\sum_{i=1}^{k'}\cost_p(X_i')\geq k'>B'=2B$.  This means that $2\cdot \opt(Y,k)\geq \opt(\bfX',k')>2B$ and $\opt(\bfY,k)>B$. Using this, our reduction 
 algorithm returns a trivial no-instance. Then the solution-lifting algorithm outputs an arbitrary equal $k$-clustering of $\bfX$ and this satisfies (\ref{eqn:condition}).  
 
From now on we assume that $k'\leq B'=2B$ and construct the reduction and solution lifting algorithms for this case. 

If $k'=0$, then $\bfX'=\emptyset$ and the reduction algorithm simply returns an arbitrary instance of constant size. Otherwise, 
 our reduction algorithms returns $(\bfX',k',B')$. 
Observe that since  $k'\leq B'=2B$,  $|\bfX'|\leq n\leq 4B^2$. Recall that  $d'=\Oh(B^{p+2})$ and
$|\bfx_i'[h]| = \mathcal{O}(B^3)$ for $h \in \{1,\ldots,d'\}$ for every point $\bfx_i'\in \bfX'$. 
We conclude that the instance $(\bfX',k',B')$ of  \probPEClust satisfies the size conditions of the theorem.

Now we describe the solution-lifting algorithm and argue that inequality (\ref{eqn:condition}) holds.  

If $k'=0$, then the solution-lifting algorithm ignores the output of the reduction algorithm which was arbitrary. It takes the equal $k$-clustering $\{S_1,\ldots,S_k\}$ of $\bfY$ and outputs the equal $k$-clustering $\{X_1,\ldots,X_k\}$ of $\bfX$  by 
setting $X_i=\{\bfx_h\mid \bfy_h\in S_i\}$ for $i\in\{1,\ldots,k\}$.
Clearly, $\cost_p(S_1,\ldots,S_k)=\cost_p(X_1,\ldots,X_p)=0$. Therefore,  (\ref{eqn:condition}) holds.

If 
$k'>0$, we  
consider an equal $k'$-clustering $\{X_1',\ldots,X_{k'}'\}$ of $\bfX'$. The solution-lifting algorithm constructs an equal  $k$-clustering $\{S_1,\ldots,S_t,X_1',\ldots,X_{k'}'\}$, that is, we just add the clusters constructed by our greedy procedure. 
Since the points in each set $S_i$ are identical, $\cost_p(S_i)=0$ for every $i\in\{1,\ldots,t\}$. Therefore, 
\[\cost_p(S_1,\ldots,S_t,X_1',\ldots,X_{k'}')=\cost_p(X_1',\ldots,X_{k'}').\] Notice that since $\opt(\bfX',k')\leq 2\cdot \opt(\bfY,k)$, we have that 
 $\opt(\bfX',k',B')\leq 2\cdot \opt(\bfY,k,B)$. Indeed, if $\opt(\bfY,k)\leq B$, then $\opt(\bfX',k')\leq 2B=B'$. Hence, 
 $\opt(\bfY,k,B)=\opt(\bfY,k)$, $\opt(\bfX',k',B')=\opt(\bfX',k')$, and  $\opt(\bfX',k',B')\leq 2\cdot \opt(\bfY,k,B)$.
 If   $\opt(\bfY,k)>B$, then $\opt(\bfY,k,B)=B+1$. In this case 
 $2\cdot \opt(\bfY,k,B)=2B+2>\opt(\bfX',k',B')$, because $\opt(\bfX',k',B')\leq B'+1=2B+1$. Finally,  since 
$\cost_p(S_1,\ldots,S_t,X_1',\ldots,X_{k'}')=\cost_p(X_1',\ldots,X_{k'}')$ and $\opt(\bfX',k',B')\leq 2\cdot \opt(\bfY,k,B)$, we conclude that  
\begin{equation}\label{eqn:y-x}
\frac{\cost_p^B(S_1,\ldots,S_t,X_1',\ldots,X_{k'}')}{\opt(\bfY,k,B)}\leq 2\cdot\frac{\cost_p^{B}(X_1,\ldots,X_{k'}')}{\opt(\bfX',k',B')}. 
\end{equation}

Then the solution-lifting algorithm computes the equal $k$-clustering $\{X_1,\ldots,X_k\}$ for the equal $k$-clustering $\{Y_1,\ldots,Y_k\}=\{S_1,\ldots,S_t,X_1',\ldots,X_{k'}'\}$ of $\bfY$ by 
setting $X_i=\{\bfx_h\mid \bfy_h\in Y_i\}$ for $i\in\{1,\ldots,k\}$. Combining (\ref{eqn:x-y}) and (\ref{eqn:y-x}), we obtain (\ref{eqn:condition}). 

This concludes the description of the reduction and solution-lifting algorithms, as well as the proof of  their correctness. To argue that the reduction algorithm is a  polynomial time algorithm, we observe that the algorithms from Lemmata~\ref{lem:polynomialtimealgorithm} and \ref{lem:coordinatereduction1} run in polynomial time. Trivially, the greedy construction of $S_1,\ldots,S_t$, $\bfX$, and $k'$ can be done in polynomial time. Therefore, the reduction algorithm runs in polynomial time. The solution-lifting algorithm is also easily implementable to run in polynomial time. This concludes the proof. 
\end{proof}

\section{Kernelization}\label{sec:kern}
In this section we study (exact) kernelization of clustering with equal sizes. In Subsection~\ref{sec:lower-kern} we prove 
Theorem~\ref{thm:no-kern} claiming that  decision version of the problem,  \probEClust, does not admit a polynomial kernel being parameterized by $B$ only.
We also show  in Subsection~\ref{subsec:polyk} that the technical lemmata developed in the previous section for approximate kernel, can be used to   prove that 
 \probEClust parameterized by $k$ and $B$ admits a polynomial kernel.

\subsection{Kernelization lower bound}\label{sec:lower-kern}
In this section, we show  that it is unlikely that \probEClust admits a polynomial kernel when parameterized by $B$ only. We prove this for $\ell_0$ and $\ell_1$-norms.  Our lower bound holds even for points with binary coordinates, that is, for points from $\{0,1\}^d$. For this, we use the result of  Dell and Marx~\cite{DellM18} about kernelization lower bounds for the \probDM problem.

A hypergraph $\mathcal{H}$ is said to be \emph{$r$-uniform} for a positive integer $r$, if every hyperedge of $\mathcal{H}$ has size $r$. Similarly to graphs, a set of hyperedges $M$ is a \emph{matching} if the hyperedges in $M$ are pairwise disjoint, and $M$ is \emph{perfect} if every vertex of $\mathcal{H}$ is \emph{saturated} in $M$, that is, included in one of the hyperedges of $M$.  \probDM asks, given a $r$-uniform hypergraph $\mathcal{H}$, whether $\mathcal{H}$ has a perfect matching. Dell and Marx~\cite{DellM18} proved the following kernelization lower bound.

\begin{proposition}[\cite{DellM18}]\label{prop:dmatch}  
 Let $r\geq 3$ be an integer and let $\varepsilon$ be a positive real. If  $\nonopoly$, then \probDM 
 does not have kernels with $\Oh(\big(\frac{|V(\mathcal{H})|}{r}\big)^{r-\varepsilon})$ hyperedges.
 \end{proposition}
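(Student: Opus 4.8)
The plan is to derive the proposition from the \emph{weak composition} lower-bound framework (Fortnow--Santhanam and Dell--van Melkebeek, in the refined form due to Dell--Marx and Hermelin--Wu). Recall that a \emph{weak $d$-composition} from a language $L\subseteq\Sigma^*$ into a parameterized problem $Q$ is a polynomial-time algorithm that, given $t$ instances $x_1,\ldots,x_t$ of $L$ all of the same size $n$, outputs a single instance $(y,k)$ of $Q$ with $k\leq t^{1/d}\cdot(n+\log t)^{\Oh(1)}$ such that $(y,k)\in Q$ if and only if $x_i\in L$ for at least one $i$. The meta-theorem is: if $L$ is \classNP-hard and admits a weak $d$-composition into $Q$, and $Q$ admits a generalized kernelization (equivalently a polynomial compression) whose output has bitsize $\Oh(k^{d-\varepsilon})$ for some $\varepsilon>0$, then $\nonopoly$. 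For \probDM the parameter in the proposition is $k=|V(\mathcal{H})|/r$, the size of a perfect matching, and a single hyperedge -- a set of $r$ of the $rk$ vertices -- can be encoded in $r\lceil\log_2(rk)\rceil$ bits, which for fixed $r$ is $\Oh(\log k)$; hence a kernel with $\Oh(k^{r-\varepsilon})$ hyperedges has bitsize $\Oh(k^{r-\varepsilon}\log k)$, which is $\Oh(k^{r-\varepsilon/2})$ once $k$ is large. So it suffices to exhibit, for every integer $r\geq 3$, a weak $r$-composition from some \classNP-hard language into \probDM, with the parameter $k=|V(\mathcal{H})|/r$.

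To build the composition, first apply standard polynomial-time padding so that the number of input instances is $t=s^r$ for an integer $s=\Theta(t^{1/r})$, and index the instances by tuples $\mathbf{a}=(a_1,\ldots,a_r)\in\{1,\ldots,s\}^r$. We take as $L$ a convenient \classNP-complete problem whose ``yes''-certificates can be encoded as a perfect matching of a bounded-size $r$-partite $r$-uniform auxiliary hypergraph (any \classNP-complete problem can first be reduced to such a normal form). The composed hypergraph $\mathcal{H}$ is $r$-partite and $r$-uniform with parts $U_1,\ldots,U_r$; part $U_j$ consists of $s$ \emph{selector} vertices $p^j_1,\ldots,p^j_s$, one per possible value of the $j$-th coordinate, together with an $n^{\Oh(1)}$-sized set of vertices forming a single ``slot'' that will host one instance of $L$. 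The hyperedges fall into two kinds: (a) \emph{dummy} hyperedges that let all selectors except one per part be matched away freely; (b) for each tuple $\mathbf{a}$, the hyperedges realizing instance number $\mathbf{a}$ on the slot vertices, \emph{guarded} so that they can be used only when the selectors $p^1_{a_1},\ldots,p^r_{a_r}$ are precisely the ones the dummies leave unmatched. Consequently any perfect matching of $\mathcal{H}$ fixes one active value $a_j$ in each part and then must perfectly match the slot using the hyperedges of instance $\mathbf{a}=(a_1,\ldots,a_r)$, which is possible exactly when that instance is a ``yes''. Thus $\mathcal{H}$ has a perfect matching iff some $x_i$ is a yes-instance, and $|V(\mathcal{H})|/r=s+n^{\Oh(1)}=t^{1/r}\cdot n^{\Oh(1)}$, the parameter bound required of a weak $r$-composition.

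Putting it together: suppose $\nopoly$ holds but \probDM nevertheless has a kernel with $\Oh(k^{r-\varepsilon})$ hyperedges (where $k=|V(\mathcal{H})|/r$) for some $\varepsilon>0$. Rewriting each hyperedge in the $\Oh(\log k)$-bit encoding turns this into a generalized kernelization of bitsize $\Oh(k^{r-\varepsilon/2})$. Feeding this kernelization, together with the weak $r$-composition from the \classNP-complete language $L$ constructed above, into the meta-theorem yields $\nonopoly$, contradicting the assumption. This proves the proposition.

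The main obstacle is the gadget in the second step. The $r$-partite, $r$-uniform constraint is the binding one: every hyperedge must pick exactly one vertex from each of the $r$ parts, so the ``guard'' that couples the slot to the selected tuple $\mathbf{a}$, the dummy hyperedges that clean up the $s-1$ unselected selectors in each part, and the encoding of an arbitrary instance of $L$ into the slot all have to be routed through hyperedges of this rigid shape, while keeping the number of vertices per part at $t^{1/r}n^{\Oh(1)}$. This is precisely the technical core carried out by Dell and Marx; the remaining ingredients -- padding $t$ up to a perfect $r$-th power, the per-edge bit accounting, and the invocation of the composition meta-theorem -- are routine.
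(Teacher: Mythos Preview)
The paper does not prove this proposition at all: it is stated as a cited result from Dell and Marx~\cite{DellM18} and used as a black box (the paper only derives the easy Corollary~\ref{cor:dmatch} from it). So there is no ``paper's own proof'' to compare against.

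Your proposal is a reasonable high-level outline of the Dell--Marx argument itself---weak $r$-composition plus the bitsize meta-theorem---and you correctly identify that the only non-routine step is the $r$-partite $r$-uniform gadget construction, which you explicitly defer to the original paper. As a proof sketch that is fine; as a self-contained proof it is not, since the gadget you describe (selector vertices, dummy edges, guarded slot) is only a schematic and the actual construction in Dell--Marx is more delicate (in particular, arranging that the dummy edges clean up exactly $s-1$ selectors per part while remaining $r$-uniform and $r$-partite, and encoding an arbitrary NP-hard instance into the slot, requires care you have not supplied). But given that the paper itself simply cites the result, your level of detail already exceeds what the paper provides.
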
   
 
 We need a weaker claim.
 
 \begin{corollary}\label{cor:dmatch}
 \probDM admits no polynomial kernel when parameterized by the number of vertices of the input hypergraph, unless  $\nonopoly$. 
 \end{corollary}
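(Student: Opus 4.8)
The plan is to derive Corollary~\ref{cor:dmatch} directly from Proposition~\ref{prop:dmatch} by a simple contradiction argument, exploiting the fact that the size $\mathcal{O}\big((|V(\mathcal{H})|/r)^{r-\varepsilon}\big)$ forbidden by Dell--Marx is polynomial in $|V(\mathcal{H})|$ with an exponent that can be pushed above any prescribed constant by taking $r$ large and $\varepsilon$ small.

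First I would assume, towards a contradiction, that \probDM (with $r$ part of the input) admits a polynomial kernel when parameterized by the number of vertices $N:=|V(\mathcal{H})|$. By definition this gives a constant $c$ and a polynomial-time algorithm transforming any instance $(\mathcal{H},r)$ on $N$ vertices into an equivalent \probDM-instance whose total encoding length is at most $N^{c}$ for all sufficiently large $N$; in particular the output hypergraph has at most $N^{c}$ hyperedges. Next I would fix an integer $r_0\ge 3$ and a real $\varepsilon\in(0,1)$ with $r_0-\varepsilon>c$ (for instance $r_0=\max\{3,c+2\}$ and $\varepsilon=\tfrac12$). Restricting the assumed kernelization algorithm to inputs whose hypergraph is $r_0$-uniform then yields a polynomial-time algorithm compressing every such instance to an equivalent \probDM-instance with at most $N^{c}$ hyperedges. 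Since $r_0$ is a fixed constant and $c<r_0-\varepsilon$, we have $N^{c}=\mathcal{O}\big(N^{\,r_0-\varepsilon}\big)=\mathcal{O}\big((N/r_0)^{\,r_0-\varepsilon}\big)$, so this is precisely a kernel for \probDM with $\mathcal{O}\big((|V(\mathcal{H})|/r_0)^{\,r_0-\varepsilon}\big)$ hyperedges. This contradicts Proposition~\ref{prop:dmatch} (with $r=r_0$ and this $\varepsilon$) under $\nopoly$. Hence no polynomial kernel can exist unless $\nonopoly$.

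The argument is essentially bookkeeping, so I do not expect a genuine technical obstacle; the only point needing a word of care is that the assumed kernel for \probDM, when fed an $r_0$-uniform instance, is a priori allowed to output an instance of a different uniformity. This is harmless: the Dell--Marx bound of Proposition~\ref{prop:dmatch} is really a lower bound on polynomial \emph{compression} size (it is obtained via a weak cross-composition), and a polynomial kernel is in particular a polynomial compression into the language \probDM; alternatively, a standard padding step re-encodes the output as an $r_0$-uniform hypergraph without blowing up the number of hyperedges by more than a polynomial factor. Either way the contradiction with Proposition~\ref{prop:dmatch} goes through, which completes the proof.
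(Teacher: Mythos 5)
Your proof is correct and follows essentially the same route as the paper's one-sentence argument: a polynomial kernel parameterized by $|V(\mathcal{H})|$ has a fixed degree $c$ independent of $r$, so restricting to $r_0$-uniform instances with $r_0 - \varepsilon > c$ contradicts Proposition~\ref{prop:dmatch}. Your extra care about the output's uniformity (reading Dell--Marx as a compression lower bound, or padding) makes explicit a point the paper glosses over, but the core reasoning coincides.
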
  
   
\begin{proof} 
To see the claim, it is sufficient to observe that the existence of a polynomial kernel for \probDM parameterized by $|V(\mathcal{H})|$ implies that the problem has a kernel such that the number of hyperedges is polynomial in $|V(\mathcal{H})|$ with the degree of the polynomial that does not depend on $d$ contradicting Proposition~\ref{prop:dmatch}. 
\end{proof}  
 
We show the kernelization lower bound for $\ell_0$ and $\ell_1$ using the fact that optimum medians can be  computed by the \emph{majority rule} for a collection of binary points. Let $X$ be a collection of points of $\{0,1\}^d$. We construct $\bfc\in \{0,1\}^d$ as follows: for $i\in\{1,\ldots,d\}$, consider the multiset $S_i=\{\bfx[i]\mid \bfx\in X\}$ and set $\bfc[i]=0$ if at least half of the elements of $S_i$ are zeros, and set $\bfc[i]=1$ otherwise. It is straightforward to observe the following.

\begin{observation}\label{obs:majority}
Let $X$ be a collection of points of $\{0,1\}^d$ and let  $\bfc\in \{0,1\}^d$ be a vector constructed by the majority rule. Then for $\ell_0$ and $\ell_1$-norms, $\bfc$ is an optimum median for $X$.
\end{observation}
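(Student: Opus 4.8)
The plan is to prove Observation~\ref{obs:majority} by arguing coordinate-wise, exploiting the fact that both the $\ell_0$ and $\ell_1$ norms of a difference decompose as a sum over coordinates, and that for binary points each coordinate contributes $0$ or $1$ in both norms. First I would note that for any $\bfc\in\mathbb{R}^d$,
\[
\cost_p(X,\bfc)=\sum_{\bfx\in X}\|\bfc-\bfx\|_p=\sum_{i=1}^d f_i(\bfc[i]),
\]
where for $p=1$ we have $f_i(t)=\sum_{\bfx\in X}|t-\bfx[i]|$ and for $p=0$ we have $f_i(t)=|\{\bfx\in X: \bfx[i]\neq t\}|$ (with the convention that a nonzero real coordinate $t\neq 0,1$ still counts as ``different'' from every binary entry). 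Since the objective separates as a sum of one-variable functions $f_i$, an optimum median can be obtained by minimizing each $f_i$ independently, and it suffices to show that the majority value is a minimizer of $f_i$.

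The key step is the one-coordinate analysis. Fix a coordinate $i$ and let $a=|\{\bfx\in X:\bfx[i]=0\}|$ and $b=|\{\bfx\in X:\bfx[i]=1\}|$, so $a+b=|X|$. For $p=1$, restricting attention to $t\in[0,1]$ (values outside this interval are clearly worse since every data entry lies in $\{0,1\}$), we have $f_i(t)=a\cdot t+b\cdot(1-t)=b+(a-b)t$, which is linear in $t$; hence it is minimized at $t=0$ when $a\ge b$ and at $t=1$ when $b\ge a$, matching the majority rule, and the minimum value is $\min\{a,b\}$. For $p=0$, $f_i(t)$ only takes three relevant values: $f_i(0)=b$, $f_i(1)=a$, and $f_i(t)=a+b$ for any $t\notin\{0,1\}$; thus the minimum over all real $t$ is $\min\{a,b\}$, again attained at the majority value. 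In both cases the vector $\bfc$ produced by the majority rule satisfies $\bfc[i]\in\arg\min f_i$ for every $i$, so $\cost_p(X,\bfc)=\sum_{i=1}^d\min\{a_i,b_i\}=\min_{\bfc'\in\mathbb{R}^d}\cost_p(X,\bfc')$, proving that $\bfc$ is an optimum median.

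I do not anticipate a serious obstacle here — the statement is, as the paper says, ``straightforward.'' The one point requiring a little care is handling real-valued candidate medians rather than only binary ones: one must observe that for $\ell_1$ a non-integer coordinate can never strictly beat both endpoints $0$ and $1$ of the relevant interval (by linearity, or convexity, of $f_i$ on $[0,1]$, together with the fact that all data lie in $\{0,1\}$ so coordinates outside $[0,1]$ are strictly worse), and that for $\ell_0$ a non-binary coordinate value differs from \emph{every} data point and is therefore no better than the majority value. Once this is noted, the coordinate-wise minimization argument goes through verbatim for both norms.
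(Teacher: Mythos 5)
Your proof is correct, and since the paper states this observation without proof (it is declared ``straightforward to observe''), there is nothing in the paper to compare against; your coordinate-wise decomposition, linearity argument for $\ell_1$ on $[0,1]$, and the three-value case analysis for $\ell_0$ are exactly the expected justification. One small stylistic note: for $\ell_1$ you could have appealed to the standard fact that a weighted median minimizes the sum of absolute deviations, which here degenerates to the majority value since all data lie in $\{0,1\}$, but the explicit linearity computation you give is just as clean and has the advantage of making the rounding-to-$\{0,1\}$ step transparent.
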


We also use the following lemma that is a special case of Lemma~9 of~\cite{FominGPS21}.

   \begin{lemma}[\cite{FominGPS21}]\label{lem:exchange-old}
 Let $\{X_1,\ldots,X_k\}$ be an equal $k$-clustering of a collection of points $\bfX=\{\bfx_1,\ldots,\bfx_n\}$ from $\{0,1\}^d$, and let  $\bfc_1,\ldots,\bfc_k$ be optimum medians for $X_1,\ldots,X_k$, respectively.  
 Let also $\bfC\subseteq\{\bfc_1,\ldots,\bfc_k\}$ be the set of medians coinciding  with some points of $\bfX$.
 Suppose that every collection of the same points of $\bfX$ has size at most $\frac{n}{k}$.
 Then there is an equal $k$-clustering $\{X_1',\ldots,X_k'\}$ of $\bfX$ such that 
 $\cost_0(X_1',\ldots,X_k',\bfc_1,\ldots,\bfc_k)\leq\cost_0(X_1,\ldots,X_k,\bfc_1,\ldots,\bfc_k)$ and for every 
 $i\in\{1,\ldots,k\}$, the following is fulfilled: if $\bfc_i\in \bfC$, then each $\bfx_h\in \bfX$ coinciding with $\bfc_i$ is in $X_i'$.
 \end{lemma}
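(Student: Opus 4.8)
The plan is a local-exchange argument over the finite collection $\mathcal{C}$ of all equal $k$-clusterings whose cost with respect to the fixed centres $\bfc_1,\ldots,\bfc_k$ is at most $\cost_0(X_1,\ldots,X_k,\bfc_1,\ldots,\bfc_k)$; note $\mathcal{C}\neq\emptyset$ as it contains $\{X_1,\ldots,X_k\}$. For an index $i$ with $\bfc_i\in\bfC$ write $P_i=\{\bfx_h\in\bfX : \bfx_h=\bfc_i\}$; by hypothesis $|P_i|\leq s:=\frac{n}{k}$, and we may assume the medians belonging to $\bfC$ are pairwise distinct (this is the setting in which the lemma is applied; coinciding such medians would have to be merged into one block for the conclusion to be meaningful), so the sets $P_i$ are pairwise disjoint. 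Call $\bfx_h$ \emph{misplaced} in a clustering $\{Z_1,\ldots,Z_k\}$ if $\bfx_h=\bfc_i\in\bfC$ for some $i$ but $\bfx_h\notin Z_i$. I would pick $\{X_1',\ldots,X_k'\}\in\mathcal{C}$ minimising the number of misplaced points; it suffices to show this number is $0$. So assume towards a contradiction that some $\bfx_h\in P_i$ lies in $X_j'$ with $j\neq i$.

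Since $\bfx_h\in P_i\setminus X_i'$ we get $|X_i'\cap P_i|\leq|P_i|-1\leq s-1<|X_i'|$, so $X_i'$ contains a point $\bfx_{h'}$ with $\bfx_{h'}\neq\bfc_i$. Obtain $\{X_1'',\ldots,X_k''\}$ by swapping $\bfx_h$ and $\bfx_{h'}$, i.e.\ moving $\bfx_h$ into cluster $i$ and $\bfx_{h'}$ into cluster $j$; all cluster sizes are preserved. The cost change is
\[
\Delta=\big(\|\bfc_i-\bfx_h\|_0-\|\bfc_j-\bfx_h\|_0\big)+\big(\|\bfc_j-\bfx_{h'}\|_0-\|\bfc_i-\bfx_{h'}\|_0\big).
\]
As $\bfx_h=\bfc_i$, the first bracket equals $-\|\bfc_i-\bfc_j\|_0$, and the triangle inequality for the Hamming norm gives $\|\bfc_j-\bfx_{h'}\|_0\leq\|\bfc_j-\bfc_i\|_0+\|\bfc_i-\bfx_{h'}\|_0$, so the second bracket is at most $\|\bfc_i-\bfc_j\|_0$; hence $\Delta\leq 0$ and $\{X_1'',\ldots,X_k''\}\in\mathcal{C}$.

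It remains to check that the swap strictly decreases the number of misplaced points, which I expect to be the step requiring the most care. The point $\bfx_h$ now lies in cluster $i$ with $\bfx_h=\bfc_i$, and by distinctness of the medians in $\bfC$ it equals no other such median, so it is no longer misplaced: its contribution drops from $1$ to $0$. For $\bfx_{h'}$: if it was not misplaced before, then being in cluster $i$ with $\bfx_{h'}\neq\bfc_i$ it equals no median of $\bfC$ at all, hence it is still not misplaced after moving to cluster $j$; and if it was misplaced before, its contribution can only stay $1$ or drop to $0$. No other point has moved, so the total number of misplaced points strictly decreases, contradicting the choice of $\{X_1',\ldots,X_k'\}$. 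Therefore the minimiser in $\mathcal{C}$ has no misplaced points, which is exactly the clustering required by the lemma. The main obstacle is the bookkeeping that the swap never creates a \emph{new} misplaced point (together with pinning down the degenerate case of coinciding medians in $\bfC$); the cost estimate itself is a single triangle-inequality step.
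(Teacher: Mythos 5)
The paper does not prove this lemma; it imports it as a special case of Lemma~9 of the cited reference~\cite{FominGPS21}, so there is no in-paper proof to compare against. Taken on its own, your local-exchange argument is correct and is the natural route: work within the (finite, nonempty) family of equal $k$-clusterings of cost at most the starting cost with the centres frozen, minimize the number of misplaced points, and show that a cost-nonincreasing swap strictly reduces that potential. The cost estimate via $\|\bfc_j-\bfx_{h'}\|_0\le\|\bfc_j-\bfc_i\|_0+\|\bfc_i-\bfx_{h'}\|_0$, the existence of a swap partner $\bfx_{h'}\in X_i'$ with $\bfx_{h'}\ne\bfc_i$ (since $|X_i'\cap P_i|\le|P_i|-1\le s-1<s=|X_i'|$), and the bookkeeping that the swap cannot create a new misplaced point are all carried out correctly. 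You are also right that the statement as written must implicitly assume the members of $\bfC$ are pairwise distinct: the conclusion quantifies over indices $i$, so if $\bfc_i=\bfc_j$ for $i\ne j$ with both in $\bfC$, the required clustering cannot exist at all since the common block of coinciding points would have to lie in both $X_i'$ and $X_j'$. This paper only invokes the lemma after establishing that distinctness (it is the second part of the claim just before the lemma is applied in the lower-bound proof), so your reading matches the intended use.
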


   Now we are ready to prove Theorem~\ref{thm:no-kern}, we restate it here. 
   
 \Nokernel*
\begin{proof}  Notice that for any binary vector $\bfx\in\{0,1\}^d$, $\|\bfx\|_0=\|\bfx\|_1$. Since we consider only instances where the input points are binary, we can assume that the medians of clusters are binary as well by Observation~\ref{obs:majority}. Then it is sufficient to prove the theorem for one norm, say $\ell_0$.  
We reduce from \probDM. Let $\mathcal{H}$ be an $r$-uniform hypergraph.  
Denote by $v_1,\ldots,v_n$ the vertices and by $E_1,\ldots,E_m$ the hyperedges of $\mathcal{H}$, respectively. We assume that $n$ is divisible by $r$, as otherwise $\mathcal{H}$ has no perfect matching. We also assume that $r\geq 3$, because for $r\leq 2$, \probDM can be solved in polynomial time~\cite{LovaszP09}.

We construct the instance $(\bfX,k,B)$ of \probEClust, where $\bfX$ is a collection of $(r-1)n+rm$ points of $\{0,1\}^d$, where $d=2rn$.  

To describe the construction of $\bfX$, we partition the set  $\{1,\ldots,2rn\}$ of coordinate indices  into $n$ blocks $R_1,\ldots,R_n$ of size $2r$ each. For every $i\in\{1,\ldots,n\}$, we select an index $p_i\in R_i$ and set $R_i'=R_i\setminus\{p_1\}$. Formally,
\begin{itemize} 
\item $R_i=\{2r(i-1)+1,\ldots,2ri\}$ for $i\in\{1,\ldots,n\}$,
\item $p_i=2r(i-1)+1$ for $i\in\{1,\ldots,n\}$, and
\item $R_i'=\{2r(i-1)+2,\ldots,2ri\}$ for $i\in\{1,\ldots,n\}$.
\end{itemize}

The set of points $\bfX$ consists of $n+m$ blocks of identical points $V_1,\ldots,V_n$ and $F_1,\ldots,F_m$, where $|V_i|=r-1$ for each $i\in\{1,\ldots,n\}$ and $|F_i|=r$ for $i\in\{1,\ldots,m\}$. Each block $V_i$ is used to encode the vertex $v_i$, and each block $F_i$ is used to encode the corresponding hyperedge $E_i$. An example is shown in Figure~\ref{fig:no-kern}.

 For each $i\in\{1,\ldots,n\}$, we define the vector $\bfv_i\in\{0,1\}^{2rn}$ corresponding to the vertex $v_i$ of $\mathcal{H}$:
 \begin{equation*}   
\bfv_i[j]=
\begin{cases}
1&\mbox{if }j\in R_i,\\
0&\mbox{otherwise}.
\end{cases}
\end{equation*}
Then $V_i$ consists of $r-1$ copies of $\bfv_i$ that we denote $\bfv_i^{(1)},\ldots,\bfv_i^{(r-1)}$.

For every $j\in\{1,\ldots,m\}$, we define the vector $\bff_j\in \{0,1\}^{2rn}$ corresponding to the hyperedge $E_j=\{v_{i_1^{(j)}},\ldots,v_{i_r^{(j)}}\}$:
\begin{equation*}   
\bff_j[h]=
\begin{cases}
1&\mbox{if }h=p_s\text{ for some }s\in\{i_1^{(j)},\ldots,i_r^{(j)}\},\\
0&\mbox{otherwise}. 
\end{cases}   
\end{equation*}
Then $F_j$ includes $r$ copies of $\bff_j$ denoted by $\bff_j^{(1)},\ldots,\bff_j^{(r)}$.

\begin{figure}[!h]
\[
\bfX=
\left(
\begin{array}{cc|cc|cc|cc|cc|cc||ccc|ccc|ccc|ccc}
1&1&0&0&0&0&0&0&0&0&0&0&1&1&1&0&0&0&1&1&1&0&0&0\\
\hdashline
1&1&0&0&0&0&0&0&0&0&0&0&0&0&0&0&0&0&0&0&0&0&0&0\\
1&1&0&0&0&0&0&0&0&0&0&0&0&0&0&0&0&0&0&0&0&0&0&0\\
1&1&0&0&0&0&0&0&0&0&0&0&0&0&0&0&0&0&0&0&0&0&0&0\\
1&1&0&0&0&0&0&0&0&0&0&0&0&0&0&0&0&0&0&0&0&0&0&0\\
1&1&0&0&0&0&0&0&0&0&0&0&0&0&0&0&0&0&0&0&0&0&0&0\\
\hline
0&0&1&1&0&0&0&0&0&0&0&0&1&1&1&0&0&0&0&0&0&1&1&1\\
\hdashline
0&0&1&1&0&0&0&0&0&0&0&0&0&0&0&0&0&0&0&0&0&0&0&0\\
0&0&1&1&0&0&0&0&0&0&0&0&0&0&0&0&0&0&0&0&0&0&0&0\\
0&0&1&1&0&0&0&0&0&0&0&0&0&0&0&0&0&0&0&0&0&0&0&0\\
0&0&1&1&0&0&0&0&0&0&0&0&0&0&0&0&0&0&0&0&0&0&0&0\\
0&0&1&1&0&0&0&0&0&0&0&0&0&0&0&0&0&0&0&0&0&0&0&0\\
\hline
0&0&0&0&1&1&0&0&0&0&0&0&1&1&1&0&0&0&1&1&1&0&0&0\\
\hdashline
0&0&0&0&1&1&0&0&0&0&0&0&0&0&0&0&0&0&0&0&0&0&0&0\\
0&0&0&0&1&1&0&0&0&0&0&0&0&0&0&0&0&0&0&0&0&0&0&0\\
0&0&0&0&1&1&0&0&0&0&0&0&0&0&0&0&0&0&0&0&0&0&0&0\\
0&0&0&0&1&1&0&0&0&0&0&0&0&0&0&0&0&0&0&0&0&0&0&0\\
0&0&0&0&1&1&0&0&0&0&0&0&0&0&0&0&0&0&0&0&0&0&0&0\\
\hline
0&0&0&0&0&0&1&1&0&0&0&0&0&0&0&1&1&1&0&0&0&1&1&1\\
\hdashline
0&0&0&0&0&0&1&1&0&0&0&0&0&0&0&0&0&0&0&0&0&0&0&0\\
0&0&0&0&0&0&1&1&0&0&0&0&0&0&0&0&0&0&0&0&0&0&0&0\\
0&0&0&0&0&0&1&1&0&0&0&0&0&0&0&0&0&0&0&0&0&0&0&0\\
0&0&0&0&0&0&1&1&0&0&0&0&0&0&0&0&0&0&0&0&0&0&0&0\\
0&0&0&0&0&0&1&1&0&0&0&0&0&0&0&0&0&0&0&0&0&0&0&0\\
\hline
0&0&0&0&0&0&0&0&1&1&0&0&0&0&0&1&1&1&1&1&1&0&0&0\\
\hdashline
0&0&0&0&0&0&0&0&1&1&0&0&0&0&0&0&0&0&0&0&0&0&0&0\\
0&0&0&0&0&0&0&0&1&1&0&0&0&0&0&0&0&0&0&0&0&0&0&0\\
0&0&0&0&0&0&0&0&1&1&0&0&0&0&0&0&0&0&0&0&0&0&0&0\\
0&0&0&0&0&0&0&0&1&1&0&0&0&0&0&0&0&0&0&0&0&0&0&0\\
0&0&0&0&0&0&0&0&1&1&0&0&0&0&0&0&0&0&0&0&0&0&0&0\\
\hline
0&0&0&0&0&0&0&0&0&0&1&1&0&0&0&1&1&1&0&0&0&1&1&1\\
\hdashline
0&0&0&0&0&0&0&0&0&0&1&1&0&0&0&0&0&0&0&0&0&0&0&0\\
0&0&0&0&0&0&0&0&0&0&1&1&0&0&0&0&0&0&0&0&0&0&0&0\\
0&0&0&0&0&0&0&0&0&0&1&1&0&0&0&0&0&0&0&0&0&0&0&0\\
0&0&0&0&0&0&0&0&0&0&1&1&0&0&0&0&0&0&0&0&0&0&0&0\\
0&0&0&0&0&0&0&0&0&0&1&1&0&0&0&0&0&0&0&0&0&0&0&0\\
\end{array}
\right)
\]
\caption{The construction of $\bfX$ for $\mathcal{H}$ with $V(\mathcal{H})=\{v_1,\ldots,v_6\}$ and 
the hyperedges $E_1=\{v_1,v_2,v_3\}$, $E_2=\{v_4,v_5,v_6\}$, $E_3=\{v_1,v_3,v_5\}$, and $E_4=\{v_2,v_4,v_5\}$. The collection of the points $\bfX$ is shown here as a matrix, where each column is a point of $\bfX$.
Note that $r=3$ here. The blocks of $\bfX$ are shown by solid lines and the part of $\bfX$ corresponding to the vertices of $\mathcal{H}$ is separated from the part corresponding to hyperedges by a double line. The blocks of coordinates with indices $R_1,\ldots,R_6$  are separated by solid lines. 
The coordinates with the indices $p_1=1$, $p_2=7$, $p_3=13$, $p_4=19$, $p_5=25$, and $p_6=31$ are underlined by dashed lines.  
 }\label{fig:no-kern}
\end{figure}

 To complete the construction of the instance of \probEClust, we define
 \begin{itemize}
 \item $k=n+m-\frac{n}{r}$,
 \item $B=(3r-2)n$.
 \end{itemize}  
Recall that $n$ is divisible by $r$ and note that $\frac{(r-1)n+rm}{k}=r$.

It is straightforward to verify that the construction of $(\bfX,k,B)$ is polynomial.
   
We claim that the hypergraph $\mathcal{H}$ has a perfect matching if and only if $(\bfX,k,B)$ is a yes-instance of \probEClust. The proof uses the following property of the points of $\bfX$:
for every $i\in\{1,\ldots,n\}$ and every $j\in\{1,\ldots,m\}$,
\begin{equation}\label{eq:constr-cost}
\|\bfv_i-\bff_j\|_0=
\begin{cases}
3r-2&\mbox{if }v_i\in E_j,\\
3r&\mbox{if }v_i\notin E_j.
\end{cases}
\end{equation} 

 For the forward direction, assume that $\mathcal{H}$ has a perfect matching $M$. Assume without loss of generality that 
 $M=\{E_1,\ldots,E_s\}$ for $s=\frac{n}{r}$. 
 Since $M$ is a prefect matching, for every $i\in\{1,\ldots,n\}$, there is a unique $h_i\in\{1,\ldots,s\}$ such that $v_i\in E_{h_i}$.   
 We construct the equal $k$-clustering $\{X_1,\ldots,X_k\}$ as follows. 
 
 For every $i\in\{1,\ldots,n\}$, we define 
 $X_i=V_i\cup\{\bff_{h_i}^{(t)}\}$, where $t$ is chosen from the set $\{1,\ldots,r\}$ in such a way that $X_1,\ldots,X_n$ are disjoint. In words, we initiate each cluster $X_i$ by setting $X_i:=V_i$ for $i\in\{1,\ldots,n\}$. This way, we obtain $n$ clusters of size $r-1$ each.
 Then we consider the blocks of points $F_1,\ldots,F_s$ corresponding to the hyperedges  of $M$ and split them between the clusters $X_1,\ldots,X_n$ by including a single element  into each cluster. It is crucial that each $X_i=V_i$ is complemented by an element of $F_{h_i}$, that is, by an element of the initial cluster corresponding to the hyperedge saturating the vertex $v_i$. Since $M$ is a perfect matching, this splitting is feasible.  
 
 Notice that the first $s$ blocks of points $F_1,\ldots,F_s$ are split between $X_1,\ldots,X_n$. The remaining $m-s$ blocks $F_{s+1},\ldots, F_{m}$ have size $r$ each and form clusters $X_{n+1},\ldots,X_{k}$. This completes the construction of $\{X_1,\ldots,X_k\}$.
 
To evaluate $\cost_0(X_1,\ldots,X_k)$,  notice that the optimal median $\bfc_i=\bfv_i$ for $i\in\{1,\ldots,n\}$ by the majority rule. Then, by (\ref{eq:constr-cost}), 
$\cost_0(X_i)=\|\bfv_i-\bff_{h_i}\|=3r-2$.
Since the clusters $X_{n+1},\ldots,X_{r}$ consist of identical points, we have that $\cost_0(X_i)=0$ for $i\in\{1,\ldots,m-s\}$. 
Then $\cost(X_1,\ldots,X_k)=(3r-2)n=B$. Therefore, $(\bfX,k,B)$ is a yes-instance of \probEClust.

For the opposite direction, let $\{X_1,\ldots,X_k\}$ be an equal $k$-clustering of $\bfX$ of cost at most $B$. Denote by $\bfc_1,\ldots,\bfc_r$ the optimal medians constructed by the majority rule. Observe that the choice of a median by the majority rule described above is not symmetric, because if $i$-th coordinates of the points in a cluster have the same number of zeros and ones, the rule selects the zero value for the $i$-coordinate of the median.  We show the following claim.

\begin{claim}\label{cl:means}
For every $i\in\{1,\ldots,k\}$, either $\bfc_i\in\{\bfv_1,\ldots,\bfv_n\}$ or $\bfc_i[j]=0$ for all $j\in R_1'\cup\ldots\cup R_n'$. Moreover, the medians of the first type, that is, coinciding with one of $\bfv_1,\ldots,\bfv_n$, are distinct. 
\end{claim}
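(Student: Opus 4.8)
The plan is to treat each cluster $X_i$ in isolation and to track, for every $a\in\{1,\ldots,n\}$, the number $t_a$ of copies of $\bfv_a$ contained in $X_i$; since $|X_i|=r$ we have $\sum_a t_a\le r$. The starting observation is that the coordinates in $R_a'$ are ``seen'' only by the copies of $\bfv_a$: each $\bff_j$ is supported entirely on the $p$-coordinates, and $\bfv_b$ with $b\ne a$ is supported on $R_b$, so on any coordinate $j\in R_a'$ the points of $X_i$ carry exactly $t_a$ ones. By the majority rule (Observation~\ref{obs:majority}) this means $\bfc_i[j]=1$ precisely when $t_a>r/2$. A pigeonhole argument then shows that at most one index $a$ can satisfy $t_a>r/2$, since two such indices would force $t_a+t_{a'}>r$, exceeding $|X_i|$. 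This dichotomy is the backbone of the argument.

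If no index $a$ has $t_a>r/2$, the observation above immediately gives $\bfc_i[j]=0$ for all $j\in R_1'\cup\cdots\cup R_n'$, which is the second alternative. Otherwise let $a_i$ be the unique index with $t_{a_i}>r/2$; I will show $\bfc_i=\bfv_{a_i}$. On $R_{a_i}'$ the median is $1$, and on every other $R_b'$ it is $0$ because there $t_b\le r-t_{a_i}<r/2$. For the $p$-coordinates: at $p_{a_i}$ the $t_{a_i}$ copies of $\bfv_{a_i}$ already push the count of ones above $r/2$, so $\bfc_i[p_{a_i}]=1$; at $p_b$ with $b\ne a_i$ the ones come from the $t_b$ copies of $\bfv_b$ together with those copies of $\bff_j$ whose hyperedge contains $v_b$, and bounding the latter by the total number $r-\sum_a t_a$ of $\bff$-copies in $X_i$ yields at most $t_b+(r-\sum_a t_a)\le r-t_{a_i}<r/2$ ones, whence $\bfc_i[p_b]=0$. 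Collecting these facts, $\bfc_i$ is $1$ exactly on $R_{a_i}$, that is, $\bfc_i=\bfv_{a_i}$. Finally, the medians of the first type are distinct: if $\bfc_i=\bfc_{i'}=\bfv_a$ with $i\ne i'$, then both $X_i$ and $X_{i'}$ would contain more than $r/2$ copies of $\bfv_a$, whereas $\bfX$ has only $r-1<r$ copies of $\bfv_a$ in total.

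The only mildly delicate point is the counting at the $p_b$-coordinates in the second case: one must check that the $\bff$-copies can only raise the count at $p_{a_i}$ (harmless) while staying strictly below the majority threshold at every other $p_b$, which is exactly what $t_b+(r-\sum_a t_a)\le r-t_{a_i}<r/2$ provides, using nothing beyond $t_{a_i}>r/2$ and $|X_i|=r$. I remark that the hypothesis $\cost_0(X_1,\ldots,X_k)\le B$ is not actually needed for this structural claim; it will enter only later, when the claim is turned into the existence of a perfect matching of $\mathcal{H}$.
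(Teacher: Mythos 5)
Your proof is correct and follows essentially the same approach as the paper's: both rest on the observation that for $h\in R_a'$ only copies of $\bfv_a$ contribute a $1$ at coordinate $h$, so the majority rule makes $\bfc_i[h]=1$ if and only if a strict majority (more than $r/2$) of $X_i$ consists of copies of $\bfv_a$, and a strict majority of identical points then determines the median completely. The only difference is one of economy: once you have $t_{a_i}>r/2$, the conclusion $\bfc_i=\bfv_{a_i}$ is immediate at every coordinate (the majority value is $\bfv_{a_i}$'s value everywhere, with no tie-break needed), so your coordinate-by-coordinate verification on the $p_b$'s, while correct, is unnecessary work; the paper dispenses with it in one line. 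Your closing remark that the budget hypothesis plays no role in this purely structural claim is also accurate.
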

 
\begin{proof}[Proof of Claim~\ref{cl:means}]
Suppose that $\bfc_i[h]\neq 0$ for some  $h\in R_j'$, where $j\in\{1,\ldots,n\}$. Observe that, by the construction of $\bfX$, for every point $\bfx\in \bfX$,  
$\bfx[h]=1$ only if $\bfx\in V_j$. 
Since $\bfc_i$ is constructed by the majority rule, we obtain that more than half elements of $X_i$ are from $V_j$ and $\bfc_i=\bfv_j$. To see the second part of the claim, notice that $|V_j|=r-1$ and, therefore, at most one cluster $X_i$ of size $r$ can have at least half of its elements from $V_j$.  
\end{proof}
   
By Claim~\ref{cl:means}, we assume without loss of generality that $\bfc_i=\bfv_i$ for $i\in\{1,\ldots,\ell\}$ for some $\ell\in\{0,\ldots,r\}$ ($\ell=0$ if there is no cluster with the median from
$\{\bfv_1,\ldots,\bfv_n\}$) and $\bfc_i[j]=0$ for $j\in R_1'\cup\ldots\cup R_n'$ whenever $i\in\{\ell+1,\ldots,k\}$. 
Because the medians $\bfc_1,\ldots,\bfc_\ell$ are equal to points of $\bfX$, by Lemma~\ref{lem:exchange-old}, we can assume that $V_i\subset X_i$ for $i\in\{1,\ldots,\ell\}$.

\begin{claim}\label{cl:ell}
 $\ell=n$.
\end{claim}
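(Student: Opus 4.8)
The plan is a counting argument by contradiction. Assume $\ell<n$; note first that $\ell\le n$, since by Claim~\ref{cl:means} the first-type medians are pairwise distinct elements of $\{\bfv_1,\dots,\bfv_n\}$. Recall that after the application of Lemma~\ref{lem:exchange-old} we work with a clustering for which $V_i\subseteq X_i$ for all $i\le\ell$, the (reference) medians satisfy $\bfc_i=\bfv_i$ for $i\le\ell$ and $\bfc_i[h]=0$ for all $h\in R_1'\cup\dots\cup R_n'$ when $i>\ell$, and $\cost_0(X_1,\dots,X_k,\bfc_1,\dots,\bfc_k)\le B=(3r-2)n$. Since $|V_i|=r-1$ and $|X_i|=r$, for every $i\le\ell$ there is a unique point $\bfx_i\in\bfX\setminus V_i$ with $X_i=V_i\cup\{\bfx_i\}$, and then $\cost_0(X_i,\bfc_i)=\|\bfv_i-\bfx_i\|_0$ because all points of $V_i$ coincide with $\bfv_i$.

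First I would analyse the extra points $\bfx_i$, $i\le\ell$. If $\bfx_i\in V_j$ for some $j$, then necessarily $j>\ell$: if $j\le\ell$ then $V_j\subseteq X_j$ gives $\bfx_i\in X_j$, while also $\bfx_i\in X_i$ with $i\ne j$ (as $\bfx_i\notin V_i$ rules out $i=j$), contradicting that the clusters are disjoint; hence $\|\bfv_i-\bfx_i\|_0=\|\bfv_i-\bfv_j\|_0=4r$, the blocks $R_i,R_j$ being disjoint and each $\bfv$ all-ones on its own block. If instead $\bfx_i\in F_j$, then $\|\bfv_i-\bfx_i\|_0\ge 3r-2$ by~(\ref{eq:constr-cost}). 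Let $a$ be the number of indices $i\le\ell$ of the first kind; the corresponding points $\bfx_i$ are distinct and all lie in $\bigcup_{j>\ell}V_j$, so $a\le (r-1)(n-\ell)$, and $\sum_{i=1}^{\ell}\cost_0(X_i,\bfc_i)\ge 4ra+(3r-2)(\ell-a)$.

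Next I would bound the cost carried by $\bfv$-points sitting in clusters $X_i$ with $i>\ell$. Such a copy of $\bfv_j$ differs from $\bfc_i$ on all $|R_j'|=2r-1$ coordinates of $R_j'$ (there $\bfv_j$ is $1$ and $\bfc_i$ is $0$), so it contributes at least $2r-1$. Counting, $\bigcup_{i\le\ell}X_i$ contains exactly $(r-1)\ell+a$ of the $(r-1)n$ $\bfv$-points, hence $(r-1)(n-\ell)-a\ge 0$ of them lie in clusters indexed by $i>\ell$. Adding the contributions,
\[
(3r-2)n\ \ge\ \cost_0(X_1,\dots,X_k,\bfc_1,\dots,\bfc_k)\ \ge\ 4ra+(3r-2)(\ell-a)+(2r-1)\bigl((r-1)(n-\ell)-a\bigr),
\]
which, since the coefficient of $a$ on the right collapses to $3-r$, rearranges to $(n-\ell)(2r^2-6r+3)\le (r-3)\,a$. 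Plugging $a\le(r-1)(n-\ell)$ and using $r\ge 3$ yields $(n-\ell)(2r^2-6r+3)\le (r-3)(r-1)(n-\ell)$; as $n-\ell\ge 1$ we may cancel and obtain $2r^2-6r+3\le (r-1)(r-3)=r^2-4r+3$, i.e.\ $r^2-2r\le 0$, i.e.\ $r\le 2$, contradicting $r\ge 3$. Therefore $\ell=n$.

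I expect the only genuine obstacle to be choosing lower bounds that are simultaneously valid and tight enough: the decisive point is that a $\bfv$-point placed as the extra point of a first-type cluster costs $4r$ (versus $3r-2$ for an $\bff$-point there, or only $2r-1$ for a $\bfv$-point in a cluster indexed by $i>\ell$), and that each first-type cluster admits exactly one such extra point — encoded by $a\le(r-1)(n-\ell)$. Combining these two facts is precisely what forces the total above $B$. The sign bookkeeping around the coefficient $3-r\le 0$ of $a$ (so that one must push $a$ to its maximum to read off the cleanest contradiction) is the one place where care is needed; the construction in the forward direction shows the resulting estimate is tight.
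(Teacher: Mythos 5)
Your proof is correct and follows essentially the same route as the paper's: your $a$ is exactly the paper's $p$, your $(r-1)(n-\ell)-a$ is the paper's $q$, the three cost lower bounds ($4r$, $3r-2$, $2r-1$) are the same, and the final algebraic step (collecting the coefficient $3-r$ on $a$ and pushing $a$ to its maximum) is the same calculation the paper performs by first noting $2r-1\geq r+2$ and then bounding $(r+2)(p+q)$ from below. The only difference is expository bookkeeping, not substance.
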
 
 
\begin{proof}[Proof of Claim~\ref{cl:ell}]
The proof is by contradiction. Assume that $\ell<n$. Consider the elements of $n-\ell$ blocks $V_{\ell+1},\ldots,V_n$. Let $p$ be the number of elements of $V_{\ell+1}\cup\ldots\cup V_n$ included in $X_1,\ldots,X_\ell$ and the remaining $q=(r-1)(n-\ell)-p$ elements are in $X_{\ell+1},\ldots,X_k$. 
By the definition of $\bfv_1,\ldots,\bfv_n$, 
if a point $\bfv_h^{(t)}\in V_h$ for some $h\in\{\ell+1,\ldots,n\}$  is in $X_i$ for some $i\in\{1,\ldots,\ell\}$, then 
$\|\bfv_h^{(t)}-\bfc_i\|_0=\|\bfv_h-\bfv_i\|_0=4r$.
Also we have that if $\bfv_h^{(t)}\in V_h$ for some $h\in\{\ell+1,\ldots,n\}$  is in $X_i$ for some $i\in\{\ell+1,\ldots,r\}$, then $\|\bfv_h^{(t)}-\bfc_I\|_0=\|\bfv_h-\bfc_I\|_0\geq |R_h'|=2r-1$.  
By (\ref{eq:constr-cost}), if the unique point $X_i\setminus V_i$ is $\bff_h^{(t)}\in F_h$ for some $h\in\{1,\ldots,m\}$, then 
$\|\bff_h^{(t)}- \bfc_i\|_0=\|\bff_h-\bfv_i\|_0\geq 3r-2$. 
Then $\sum_{i=1}^\ell\cost_0(X_i)\geq 4rp+(3r-2)(\ell-p)$ and $\sum_{i=\ell+1}^k\cost_0(X_i)\geq (2r-1)q$.
Recall also that $r\geq 3$ and, therefore, 
$r+2\leq 2r-1$ and $(r+2)(r-1)>3r-2$.
Summarizing, we obtain that 
\begin{align*}
\cost_0(X_1,\ldots,X_k)=&\sum_{i=1}^\ell\cost_0(X_i)+\sum_{i=\ell+1}^k\cost_0(X_i)\\
&\geq \big(4rp+(3r-2)(\ell-p)\big)+\big((2r-1)q\big)=(3r-2)\ell+(r+2)p+(2r-1)q\\
&\geq (3r-2)\ell+(r+2)(p+q)=(3r-2)\ell+(r+2)(r-1)(n-\ell)\\
&>(3r-2)n=B,
\end{align*} 
but this contradicts that $\cost_0(X_1,\ldots,X_k)\leq B$. This proves the claim.
\end{proof}   

By Claim~\ref{cl:ell}, we obtain that $\bfc_i=\bfv_i$ and $X_i\subset I_i$ for $i\in\{1,\ldots,n\}$.   For every $i\in \{1,\ldots,n\}$, $X_i\setminus V_i$ contains a unique point. Clearly, this is a point from 
$F_1\cup\cdots\cup F_m$. Denote by  $\bff_{h_i}^{(t_i)}$ the point of  $X_i\subset I_i$ for $i\in\{1,\ldots,n\}$. 
By (\ref{eq:constr-cost}), $\|\bfc_i-\bff_{h_i}^{(t_i)}\|_0=\|\bfc_i-\bff_{h_i}\|_0\geq 3r-2$ for every $i\in \{1,\ldots,n\}$. This means that 
\begin{equation*}
B\geq \cost_0(X_1,\ldots,X_k)=\sum_{i=1}^n\cost_0(X_i)+\sum_{i=n+1}^k\cost_0(X_i)
\geq \sum_{i=1}^n\cost_0(X_i)\geq (3d-2)n=B.
\end{equation*}
   Therefore, $\sum_{i=n+1}^k\cost_0(X_i)=0$. Hence, $k-n=m-s$ clusters $X_{n+1},\ldots,X_k\subseteq F_1\cup\cdots\cup F_m$, where $s=\frac{n}{r}$, consists of identical points.
Without loss of generality, we assume that   $F_{s+1},\ldots,F_m$ form these  clusters. Then the elements of $F_1,\ldots,F_s$ are split to complement $V_1,\ldots,V_n$ to form $X_1,\ldots,X_n$. 
In particular, for every $i\in\{1,\ldots,n\}$, there is $\bff_{h_i}^{(t_i)}\in X_i$ for some $h_i\in \{1,\ldots,m\}$ and $t_i\in\{1,\ldots,r\}$.   

We claim that $M=\{E_1,\ldots,E_s\}$  is a perfect matching of $\mathcal{H}$. To show this, consider a vertex $v_i\in V(\mathcal{H})$. We prove that $v_i\in E_{h_i}$. For sake of contradiction, assume that $v_i\notin E_{h_i}$. Then
$\|\bff_{h_i}^{(t_i)}-\bfc_i\|_0=\|\bff_{h_i}-\bfv_i\|_0=3r$ by (\ref{eq:constr-cost}) and
 \begin{equation*}
\cost_0(X_1,\ldots,X_k)=\sum_{j=1}^n\cost_0(X_j)\geq \sum_{j=1}^n\|\bff_{h_j}^{(t_j)}-\bfc_i\|_0= \sum_{j=1}^n\|\bff_{h_i}-\bfv_i\|_0\geq (3r-2)n+2>B;
\end{equation*}   
a contradiction with $\cost_0(X_1,\ldots,X_k)\leq B$.   Hence, every vertex of $V(\mathcal{H})$ is saturated by some hyperedge of $M$. Since $|M|=s=\frac{n}{r}$, we have that the hyperedges of $M$ are pairwise disjoint, that is, $M$ is a matching.   
Since every vertex is saturated and $M$ is a matching, $M$ is a perfect matching. 

This concludes the proof of our claim that  $\mathcal{H}$ has a perfect matching if and only if $(\bfX,k,B)$ is a yes-instance of \probEClust.   

Observe that $B=(3r-2)n$ in the reduction meaning that $B=\Oh(n^2)$. Since \probEClust is in \classNP, there is a polynomial reduction form \probEClust to \probDM.
Then if \probEClust has a polynomial kernel when parameterized by $B$, then \probDM has a polynomial kernel when parameterized by the number of vertices of the input hypergraph. This leads to a contradiction with Corollary~\ref{cor:dmatch} and completes the proof of the theorem.
\end{proof}   
 
 \subsection{Polynomial kernel for $k+B$ parameterization}\label{subsec:polyk}
In this subsection we prove Theorem~\ref{thm:kern}
  that we restate here.

\Kernel*
\begin{proof}
Let $(\bfX,k,B)$ be an instance of \probEClust with $\bfX=\{\bfx_1,\ldots,\bfx_n\}$, where the points are from  $\mathbb{Z}^d$. Recall that $n$ is divisible by $k$.

Suppose $\frac{n}{k}\geq 4B+1$. Then we can apply the algorithm from Lemma~\ref{lem:polynomialtimealgorithm}. If the algorithm returns that there is no equal $k$-clustering of cost at most $B$, then the kernelization algorithm returns a trivial no-instance of \probEClust. Otherwise, if $\opt(X,k)\leq B$, then the algorithm returns a trivial yes-instance. 

Assume from now that  $\frac{n}{k} \leq 4B$, that is, $n \leq 4Bk$. Then we apply the algorithm from Lemma~\ref{lem:coordinatereduction1}. If this algorithm reports that there is no  equal $k$-clustering of cost at most $B$, then the kernelization  algorithm returns a trivial no-instance of \probEClust. Otherwise, the algorithm from Lemma~\ref{lem:coordinatereduction1} returns a 
  collection of $n\leq 4Bk$ points  $\bfY=\{\bfy_1,\ldots,\bfy_n\}$ of $\mathbb{Z}^{d'}$ satisfying conditions (i)--(iii) of the lemma. By (i), we obtain that the instances $(\bfX,k,B)$ and $(\bfY,k,B)$ of  \probEClust are equivalent. By (ii), we have that the dimension $d'= \mathcal{O}(k(B^{p+1}))$, and by (iii), each coordinate of a point takes an absolute value of $\mathcal{O}(kB^2)$. Thus, $(\bfY,k,B)$ is a required kernel.
\end{proof}

\section{\textsf{APX}-hardness of \probMEClust}
\label{sec:apxhard}

In this section, we prove \textsf{APX}-hardness of \probMEClust w.r.t. Hamming ($\ell_0$) and $\ell_1$ distances. The constructed hard instances consists of high-dimensional binary ($0/1$) points. As the $\ell_0$ and $\ell_1$ distances between any two binary points are the same, we focus on the case of $\ell_0$ distances. Our reduction is from \textsc{3-Dimensional Matching} (\textsc{3DM}), where we are given three disjoint sets of elements $X, Y$ and $Z$ such that $|X|=|Y|=|Z|=n$ and a set of $m$ triples $T\subseteq X\times Y\times Z$. In addition, each element of $W:= X\cup Y\cup Z$ appears in at most 3 triples. A set $M \subseteq T$ is called a matching if no element of $W$ is contained in more than one triples of $M$. The goal is to find a maximum cardinality matching. We need the following theorem due to Petrank \cite{Patrank94}. 

\begin{theorem}[Restatement of Theorem 4.4 from 
\cite{Patrank94}] There exists a constant $0 <\gamma <
1$, such that it is NP-hard to distinguish the instances of the \textsc{3DM} problem
in which a perfect matching exists, from the instances in which there is a matching of size at most $(1-\gamma)n$.
\end{theorem}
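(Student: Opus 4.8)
Since this statement is quoted as a restatement of Theorem~4.4 of \cite{Patrank94}, the cleanest course in the paper is to use it as a black box and refer to that reference; for completeness I outline how the statement is obtained. The plan is to trace the standard chain of gap-preserving reductions starting from the PCP theorem. Concretely, I would first invoke the hardness of a \emph{bounded-occurrence} constraint satisfaction problem with \emph{perfect completeness}: there is a constant $\delta>0$ such that it is \textsf{NP}-hard to distinguish satisfiable instances of \textsc{3-SAT} in which every variable occurs in $O(1)$ clauses from instances in which no assignment satisfies more than a $(1-\delta)$ fraction of the clauses. This follows from the PCP theorem (gap-\textsc{3-SAT} with completeness $1$) followed by the usual expander-based degree reduction, which replaces each heavily used variable by a cycle of equal copies linked by equality constraints and preserves perfect completeness while losing only a constant factor in the gap. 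The point that completeness is exactly $1$ --- the ``gap location'' --- is what later forces the \textsc{3DM} threshold to be a \emph{perfect} matching on one side.

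Next I would apply the classical polynomial-time reduction from \textsc{3-SAT} to \textsc{3DM}: a truth-setting ring gadget for every variable whose two perfect internal matchings encode the two truth values, a satisfaction gadget for every clause that admits a matching exactly when one of its literals is set to true, and garbage-collection triples that absorb the leftover literal elements. The work is to check that this reduction is gap-preserving and gap-locating in the required way: (i)~a satisfying assignment yields a perfect matching of the whole \textsc{3DM} instance; and (ii)~if every assignment leaves a $\delta$ fraction of clauses unsatisfied, then every matching of the \textsc{3DM} instance leaves a constant fraction of the ground set unmatched, since each unsatisfied clause gadget obstructs at least one element and the instance has size linear in the formula (all gadgets are of constant size because the formula is bounded-occurrence). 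Converting ``fraction of ground set unmatched'' into ``matching of size at most $(1-\gamma)n$'' and absorbing the equality-constraint clauses fixes the value of $\gamma$; a final constant-size local modification normalizes the instance so that each element of $W$ lies in at most $3$ triples.

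The step I expect to be the main obstacle is precisely the soundness bookkeeping in~(ii): one must argue that a near-maximum matching can be ``rounded'' to a near-satisfying assignment with only a constant-factor loss, i.e.\ that the only real obstructions to a perfect matching are the unsatisfied clauses, and then translate the two constants ($\delta$ and the per-clause element loss) into a single $\gamma\in(0,1)$ --- together with the care needed to keep completeness exactly $1$ through both the degree reduction and the gadget construction. Since all of this is carried out in \cite{Patrank94}, in the paper I would simply cite that theorem.
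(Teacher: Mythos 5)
The paper treats this as an external black-box result and simply cites Petrank~\cite{Patrank94}, exactly as you propose; your additional sketch of the PCP\,$\to$\,bounded-occurrence gap-3-SAT\,$\to$\,gap-locating 3DM chain is a reasonable high-level account of how that theorem is established but is not reproduced in the paper. So the proposal matches the paper's approach.
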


$\gamma$ should be seen as a very small constant close to 0. 
We use the construction described in Section \ref{sec:lower-kern}, with a small modification. 

We are given an instance of \textsc{3DM}. Let $N=3n$, the total number of elements. We construct a binary matrix $A$ of dimension $6N\times (2N+3m)$. For each element we take $2$ columns and for each triple 3 columns. The $6N$ row indexes are partitioned into $N$ parts each of size 6. In particular, let $R_1=\{1,\ldots,6\}$, $R_2=\{7,\ldots,12\}$ and so on. For the $i$-th element, we construct the column $a_i$ of length $6N$ which has 1 corresponding to the indexes in $R_i$ and 0 elsewhere. 

Recall that each element can appear in at most 3 triples. For each element $x$, consider any arbitrary ranking of the triples that contain it. The occurrence of $x$ in a triple with rank $j$ is called its $j$-th occurence for $1\le j \le 3$. For example, suppose $x$ appears in three triples $t_w, t_y$ and $t_z$. One can consider the ranking $1. t_w, 2. t_y, 3. t_z$. Then, the occurence of $x$ in $t_y$ is called 2-nd occurence. Let $v_i^j$ be the $j$-th index of $R_i$ for $1\le i\le N, 1\le j\le 3$. For each triple $t$ with $j_1$-, $j_2$- and $j_3$-th occurences of the elements $p,q$ and $r$ in $t$, respectively, we construct the column $b_t$ of length $6N$ which has 1 corresponding to the indexes $v_p^{j_1}, v_q^{j_2}$ and $v_r^{j_3}$, and 0 elsewhere. 

The triple columns are defined in a different way in our reduction in Section \ref{sec:lower-kern} where for each triple and each element, a fixed index is set to 1. But, we set different indexes based on the occurences of the element. This ensures that for two different triple columns $b_s$ and $b_t$, their Hamming distance $d_H(b_s,b_t)=6$. Note that $d_H(a_i,b_t)=7$ if the element $i$ is in triple $t$, otherwise $d_H(a_i,b_t)=9$. Set cluster size to be 3, number of clusters $k$ to be $(2N/3)+m$. We will prove the following lemma. 

\begin{lemma}
If there is a perfect matching, there is a feasible clustering of cost $7N$. If all matchings have size at most $(1-\gamma)n$, any feasible clustering has cost at least $7(1-\gamma)N+(23/3)\gamma N$. 
\end{lemma}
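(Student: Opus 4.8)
Suppose the \textsc{3DM} instance has a perfect matching $M$; then $|M|=n$ and every element $w\in W$ lies in a unique triple $t(w)\in M$. Form a clustering by taking, for each $w\in W$, the cluster $\{a_w,a_w,b_{t(w)}\}$ --- distributing the three copies of each $b_t$ with $t\in M$ among the clusters of the three elements of $t$ --- and, for each $t\in T\setminus M$, the cluster $\{b_t,b_t,b_t\}$. This is a partition into $N+(m-n)=(2N/3)+m=k$ clusters of size $3$. By Observation~\ref{obs:majority} the optimal median of $\{a_w,a_w,b_{t(w)}\}$ is $a_w$, so this cluster has cost $d_H(a_w,b_{t(w)})=7$ (since $w\in t(w)$), while each $\{b_t,b_t,b_t\}$ has cost $0$. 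Hence the clustering has cost $7N$.

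\textbf{Hard direction: setup.} Since $7(1-\gamma)N+\tfrac{23}{3}\gamma N=7N+\tfrac23\gamma N$, it suffices to show that every feasible clustering $\mathcal P$ has cost at least $7N+\tfrac23\gamma N$. By Observation~\ref{obs:majority} one may take all medians binary (coordinate-wise majorities), and a direct coordinate count then shows that for three binary points $x,y,z$ the optimal median has cost $\tfrac12\big(d_H(x,y)+d_H(y,z)+d_H(x,z)\big)$; consequently the cost of $\mathcal P$ equals $\tfrac12$ times the sum of $d_H$ over all co-clustered pairs. The distances that occur are $d_H(a_i,a_j)=12$ $(i\neq j)$, $d_H(b_s,b_t)=6$ $(s\neq t)$, and $d_H(a_i,b_t)=7$ if $i\in t$ and $9$ otherwise. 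Enumerating the possible contents of one cluster with this table, every cluster containing an element point has cost at least $7$, with equality only for $\{a_i,a_i,b_t\}$ and $\{a_i,b_t,b_t\}$ with $i\in t$, and every cluster of cost below $7$ consists of triple points only.

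\textbf{Hard direction: canonical form and counting.} As in the proof of Theorem~\ref{thm:no-kern} (cf.\ Claims~\ref{cl:means}--\ref{cl:ell}) --- invoking Lemma~\ref{lem:exchange-old}, which applies since the largest block of equal points is a block $\{b_t,b_t,b_t\}$ of size $3=n/k$, and adding further cost-non-increasing exchanges that dissolve every element-containing cluster of cost $\ge 9$ --- I reduce, without increasing the cost, to a \emph{canonical} clustering in which every cluster is either an \emph{element cluster} $\{a_i,a_i,b_{t(i)}\}$ (exactly $N$ of them, exhausting the $2N$ element points) or consists only of triple points (exactly $m-n$ of them). Call an element cluster \emph{unmatched} if $i\notin t(i)$ (cost $9$ rather than $7$), and let $U$ be their number; for a triple $t$ let $r_t\in\{0,1,2,3\}$ be the number of element clusters using a copy of $b_t$, so $\sum_t r_t=N=3n$, and set $m_j=|\{t:r_t=j\}|$. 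Using $\tfrac12\sum d_H$ to evaluate the triple-only cost and the fact that the $3-r_t$ leftover copies of $b_t$ contribute at most $3$ within-triple co-clustered pairs when $r_t=0$, at most $1$ when $r_t=1$, and none otherwise, an elementary calculation gives $\cost_0(\mathcal P)\ge 7N+2U+3(m_1+m_2)$. Finally, a triple $t$ with $r_t=3$ all of whose three element clusters are matched satisfies $\{i:t(i)=t\}=t$; such triples are pairwise disjoint, so at most $(1-\gamma)n$ of them exist, while the remaining $r_t=3$ triples each own an unmatched element cluster, so there are at most $U$ of those. A short case analysis (on whether $m_3\le(1-\gamma)n$, using $m_1+2m_2+3m_3=3n$) then yields $2U+3(m_1+m_2)\ge 2\gamma n=\tfrac23\gamma N$, completing the proof.

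\textbf{Main obstacle.} The delicate step is the reduction to canonical form: one needs cost-non-increasing exchanges that both pull the two copies of each $a_i$ into one cluster and re-pack the displaced triple points into triple-only clusters, and a triangle-inequality argument that no such move raises the total cost --- subtler than the exact statement of Theorem~\ref{thm:no-kern} because here a triple's three copies may be split among matched and unmatched element clusters. Everything after canonical form (the displayed inequality and the case analysis) is routine arithmetic.
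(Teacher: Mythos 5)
Your easy direction and the median-cost identity $\cost_0(\{x,y,z\}) = \tfrac12\bigl(d_H(x,y)+d_H(y,z)+d_H(x,z)\bigr)$ for triples of binary points are both correct, as is the enumeration of per-cluster costs (every element-containing cluster costs $\geq 7$, with equality only for $\{a_i,a_i,b_t\}$ and $\{a_i,b_t,b_t\}$ with $i\in t$). The accounting after canonicalization also checks out: $\cost_0 \geq 7N + 2U + 3(m_1+m_2)$, and the two-case argument (on whether $m_3 \leq (1-\gamma)n$) does give $2U + 3(m_1+m_2) \geq 2\gamma n = \tfrac23\gamma N$ via $m_1+m_2 \geq \tfrac32(n-m_3)$ and disjointness of fully-matched $r_t=3$ triples.

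However, the canonical-form reduction is a genuine gap, not merely a detail to be filled in. You cite Lemma~\ref{lem:exchange-old}, but that lemma only applies to a median $\bfc_i$ that \emph{already} coincides with a point of $\bfX$; for the majority-rule median of a size-3 cluster to equal $a_j$, at least two of the three points must already be copies of $a_j$. So Lemma~\ref{lem:exchange-old} gives you nothing in precisely the case you need it --- when the two copies of $a_j$ are split across two clusters, neither cluster's median is $a_j$, and no point-median is involved. The remaining exchanges you describe ("pull the two copies of each $a_i$ into one cluster and re-pack the displaced triple points") would need a separate, global argument that the resulting reassignment is cost-non-increasing; the local exchange (swap $a_i^{(2)}$ with some point $y$ in the other cluster) only has the triangle-inequality bound $\Delta\cost \leq d_H(a_i,y)$, which is not $\leq 0$ in general. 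You explicitly flag this step as "delicate" and "subtler than Theorem~\ref{thm:no-kern}", but you do not prove it, and proving it would be a substantial lemma in its own right --- plausibly true, but not established by anything you wrote.

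The paper avoids this entirely by never reducing to a canonical clustering. It works with the clustering as given: it introduces the counts $\ell,\ell_1,\ldots,\ell_6$ (clusters with element medians, clusters containing one vs.\ several element columns, triples with various overlap patterns), proves local per-cluster or per-column cost lower bounds (Claims~\ref{cl:2-el-cluster} and~\ref{cl:1-el-cluster} give lower bounds of $5$ and $4$ for element columns that do not sit in an element-median cluster), and combines them arithmetically in Lemma~\ref{lem:bound-on-l}, Claim~\ref{cl;bound-on-l2}, and Lemma~\ref{lem:matching}. That path is more laborious but requires no structural normalization of the solution. Your route is cleaner \emph{conditional on} the canonicalization lemma, so if you can supply a correct proof of that lemma (e.g., by an explicit min-cost-flow/transportation argument showing the canonical cost is a lower bound, rather than by local exchanges) the rest of your argument goes through; as it stands, the proof is incomplete at its central step.
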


Note that it is sufficient to prove the above lemma for showing the \textsf{APX}-hardness of the problem. The proof of the first part of the lemma is exactly same as in the previous construction. We will prove the second part. To give some intuition of the cost suppose there is a matching of the maximum size $(1-\gamma)n$. Then we can cluster the matched elements and triples in the same way as in the perfect matching case by paying a cost of $7(1-\gamma)N$. Now for each unmatched element, we put its two columns in a cluster. Now we have $\gamma N$ clusters with one free slot in each. One can fill in these slots by columns corresponding to $\gamma N/3$ unmatched triples. All the remaining unmatched triples form their own cluster. Now, consider an unmatched triple $s$ whose 3 columns are used to fill in slots of unmatched elements $p, q$ and $r$. As this triple was not matched, it cannot contain all these three elements, i.e, it can contain at most 2 of these elements. Thus for at least one element the cost of the cluster must be 9. Thus the total cost of the three clusters corresponding to $p, q$ and $r$ is at least $7+7+9=23$. The total cost corresponding to all $\gamma N/3$ unmatched triples is then $(23/3)\gamma N$. We will show that one cannot find a feasible clustering of lesser cost. 

For our convenience, we will prove the contrapositive of the second part of the above lemma: if there is a feasible clustering of cost less than  $7(1-\gamma)N+(23/3)\gamma N$, then there is a matching of size greater than $(1-\gamma)n$. So, assume that there is such a clustering. Let $c_1,c_2,\ldots,c_k$ be the cluster centers.  

By Lemma \ref{lem:exchange-old}, we can assume that if a column $f$ of $A$ is a center of a cluster $C$, all  the columns identical to $f$ are in $C$. We will use this in the following. A center $c_i$ is called an element center if $c_i$ is an element column. Suppose the given clustering contains $\ell$ clusters with element centers for some $\ell$. WLOG, we assume that these are the first $\ell$ clusters. 

\begin{lemma}\label{lem:bound-on-l}
If the cost of the given clustering is less than $7(1-\gamma)N+(23/3)\gamma N$, $\ell > (1-2\gamma/9)N$.  
\end{lemma}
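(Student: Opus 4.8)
The plan is to prove that \emph{every} feasible clustering (after the normalization below) has cost at least $10N-3\ell$; the lemma then follows by arithmetic, since if the cost is less than $7(1-\gamma)N+\tfrac{23}{3}\gamma N=7N+\tfrac23\gamma N$ we get $10N-3\ell<7N+\tfrac23\gamma N$, hence $N-\ell<\tfrac29\gamma N$, i.e.\ $\ell>(1-\tfrac{2\gamma}{9})N$.

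First I would pin down the possible cluster configurations. By Observation~\ref{obs:majority} the centers may be taken binary, and by Lemma~\ref{lem:exchange-old} (applicable because every block of mutually identical columns has size $2$ or $3$, at most the cluster size $3$) we may assume that whenever a cluster's center equals a column of $A$ then all copies of that column lie in that cluster. Hence every cluster is of exactly one type: (i) an \emph{element-centered} cluster $\{a_i,a_i,x\}$ with center $a_i$ (there are $\ell$ such clusters, for $\ell$ pairwise distinct elements, since two such clusters cannot share a center), whose cost is $d_H(a_i,x)\in\{7,9,12\}$, equal to $7$ iff $x=b_t$ with $i\in t$ and equal to $12$ iff $x$ is an element column; (ii) a \emph{pure-triple} cluster $\{b_t,b_t,b_t\}$ of cost $0$; or (iii) a cluster whose center is not a column of $A$. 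For type (iii) I would do a short case analysis on the number $j\in\{0,1,2,3\}$ of element columns it contains, using that distinct triple columns have disjoint supports and that $a_i$ has six $1$'s inside block $R_i$: configurations like $\{a_i,b_t,b_t\}$, $\{b_t,b_t,b_s\}$ or $\{a_i,a_i,b_t\}$ are excluded (their majority center is a column of $A$ with a missing copy), and one checks that a type-(iii) cluster with $j$ element columns has cost at least $9,10,13,18$ for $j=0,1,2,3$ respectively.

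Then I would run the counting. Let $b_2$ be the number of type-(i) clusters whose third column is an element column, and let $M$ be the number of element columns lying in type-(iii) clusters. The $N-\ell$ non-center elements supply $2(N-\ell)$ columns, and each such column is either one of the $b_2$ type-(i) third columns or sits in a type-(iii) cluster (it cannot be in a type-(ii) cluster), so $b_2+M=2(N-\ell)$; in particular $b_2\le 2(N-\ell)$. The type-(i) clusters cost at least $7\ell+5b_2$ (each of the $b_2$ element-third clusters costs $12$), and summing the per-cluster lower bounds $9,10,13,18$ for type-(iii) against the numbers $0,1,2,3$ of element columns carried shows the type-(iii) clusters cost at least $6M$ (the worst ratio being $6$ per element column). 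Adding, the total cost is at least $7\ell+5b_2+6M=7\ell+12(N-\ell)-b_2\ge 7\ell+10(N-\ell)=10N-3\ell$.

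The delicate point is calibrating the type-(iii) cost bounds so that the final inequality holds for \emph{all} $\ell$, including $\ell$ close to $N$: a cruder estimate such as ``every non-center column is charged at least $4$'' or ``every type-(iii) cluster costs at least $9$'' loses a constant factor and only gives the lemma for $\ell\le\tfrac23N$. The two ingredients that rescue the bound for large $\ell$ are charging type-(iii) clusters \emph{per element column} they contain (coefficient $6$) rather than per cluster, and the identity $b_2+M=2(N-\ell)$ together with $b_2\le 2(N-\ell)$. Verifying the handful of per-configuration cost values with the explicit $0/1$ columns --- in particular that $\{a_i,b_s,b_t\}$ with $s\ne t$ costs at least $10$ and that the ``forbidden'' configurations really are ruled out by Lemma~\ref{lem:exchange-old} --- is the only technical work, and each such check is a one-line computation.
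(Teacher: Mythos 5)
Your proof is correct and follows essentially the same route as the paper: both derive the lower bound of $10N-3\ell$ on the total cost by accounting for element columns in element-centered versus remaining clusters, differing only in how the latter are charged (the paper obtains a charge of $5$ per non-center element column from its two claims about clusters with one versus at least two element columns, reaching $10N-3\ell$ exactly, whereas your explicit cost table $9,10,13,18$ for type-(iii) clusters yields a charge of $6$ per element column and the slightly stronger intermediate bound $12N-5\ell-b_2$, relaxed to $10N-3\ell$ via $b_2\le 2(N-\ell)$). One small phrasing slip: the configuration $\{a_i,a_i,b_t\}$ is not ruled out by a missing-copy argument --- both copies of $a_i$ are present --- it is simply a type-(i) cluster since its majority center is the element column $a_i$; this does not affect the validity of your argument.
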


\begin{proof}
Note that if a cluster center is an element column, then by Lemma \ref{lem:exchange-old} we can assume that both element columns are present in the cluster. Thus in our case, each of the first $\ell$ clusters contains two element columns and some other column. Now, each of these $\ell$ other columns can be either a column of some other element or a triple column. Let $\ell_1$ of these be element columns and $\ell_2$ of these be triple columns, where $\ell=\ell_1+\ell_2$. For each cluster corresponding to these $\ell_1$ element columns, the cost is 12, as $d_H(a_i,a_j)=12$ for all $i,j$. Similarly, for each cluster corresponding to the $\ell_2$ triple columns, the cost is at least 7, as $d_H(a_i,b_t)\ge 7$ for all $i,t$. 

Note that out of $2N$ element columns $2\ell+\ell_1$ are in the first $\ell$ clusters. The rest of the element columns are in the other clusters. Now there can be two cases: such a column is in a cluster that contains (i) at least 2 element columns and (ii) exactly one element column. 

\begin{claim}\label{cl:2-el-cluster}
The cost of each element column which are not in the first $\ell$ clusters is at least 5 in the first case.
\end{claim} 

\begin{proof}
Consider such a column $a_i$ and let $c_j$ be the center of the cluster that contains $a_i$. Note that the only 1 entries in $a_i$ are corresponding to the indexes in $R_i$. We claim that at most one entry of $c_j$ corresponding to the indexes in $R_i$ can be 1. This proves the original claim, as $|R_i|=6$. Consider an index $z \in R_i$ such that $c_j[z]=1$. As $c_j$ is not an element column and the centers are defined based on majority rule, there is a column $e$ in the cluster with $e[z]=1$. This must be a column of a triple that contains the element $i$. By construction, $e$ does not contain 1 corresponding to the indexes in $R_i \setminus \{z\}$. As the third column in the cluster is another element column (as we are in the first case), its entries corresponding to the indexes in $R_i$ are again 0. Hence, by majority rule, at most one entry of $c_j$ corresponding to the indexes in $R_i$ can be 1.         
\end{proof}

Next, we consider case (ii). 

\begin{claim}\label{cl:1-el-cluster}
Consider a cluster which is not one of the first $\ell$ clusters and contains exactly one element column. Then, its cost is at least 5. Moreover, the cost of the element column is at least 4. 
\end{claim}

\begin{proof}
Consider the element column $a_i$ of the cluster and let $c_j$ be the center of the cluster. Note that the only 1 entries in $a_i$ are corresponding to the indexes in $R_i$. Now, if the other two (triple) columns in the cluster are identical, there must be at most one entry of them corresponding to the indexes in $R_i$ that is 1. This is true by construction of triple columns. Hence, in this case at most one entry of $c_j$ corresponding to the indexes in $R_i$ can be 1 and the cost is at least 5. Otherwise, there can be two distinct triple columns $b_s$ and $b_t$ in the cluster and at most two indexes $z_1,z_2 \in R_i$ such that $z_1\ne z_2$ and $b_s[z_1]=b_t[z_2]=1$. By construction of the triple columns, there is no other indexes $z \in R_i \setminus \{z_1,z_2\}$ such that $b_s[z]=1$ or $b_t[z]=1$. Thus, by majority rule,  at most two entries of $c_j$ corresponding to the indexes in $R_i$ can be 1. Hence, the cost of $a_i$ is at least 4. Now, as $b_s$ and $b_t$ are distinct, the cost of either one of them must be at least 1. It follows that the cost of this cluster is at least 5. 
\end{proof}

Now, again consider the $2N-2\ell-\ell_1$ element columns that are not in the first $\ell$ clusters. Let $\kappa$ be the number of clusters which are not the first $\ell$ clusters and contain exactly 1 element column. This implies that, $2N-2\ell-\ell_1-\kappa$ element columns are contained in the clusters which are not the first $\ell$ clusters and contain at least 2 element columns. By, Claim \ref{cl:2-el-cluster}, the cost of each such column is at least 5. By Claim \ref{cl:1-el-cluster}, the cost of each of the $\kappa$ clusters defined above is at least 5. 

It follows that the total cost of the clustering is $12\ell_1+7\ell_2+(2N-2\ell-\ell_1-\kappa)5+5\kappa=10N-3\ell$, as $\ell=\ell_1+\ell_2$. Now, given that the cost is less than  $7(1-\gamma)N+(23/3)\gamma N$. 

\begin{align*}
& 10N-3\ell < 7(1-\gamma)N+(23/3)\gamma N=7N+2\gamma N/3\\    
& 3N-3\ell < 2\gamma N/3\\
& \ell > (1-2\gamma/9)N
\end{align*}
\end{proof}

Like before, let $\ell_2$ be the number of clusters out of the first $\ell$ clusters such that $\ell_2$ contains a triple column. 

\begin{claim}\label{cl;bound-on-l2}
$\ell_2 > (1-2\gamma/3)N$. 
\end{claim}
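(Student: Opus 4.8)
The plan is to bound the total cost from below using information about the structure of the first $\ell$ clusters, much as in the proof of Lemma~\ref{lem:bound-on-l}, and then combine this with the upper bound on the cost to deduce the claimed inequality on $\ell_2$.

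First, I would recall the situation: among the first $\ell$ clusters, exactly $\ell_2$ contain a triple column (and two element columns each, by Lemma~\ref{lem:exchange-old}), while the remaining $\ell_1 = \ell - \ell_2$ contain three element columns. The cost of each of the $\ell_1$ "all-element" clusters is $12$, since $d_H(a_i,a_j)=12$ for distinct elements. The cost of each of the $\ell_2$ "element-element-triple" clusters: if the triple column $b_t$ corresponds to a triple containing element $i$, then $d_H(a_i,b_t)=7$, and one checks that with two copies of $a_i$ and one copy of $b_t$ the majority center is $a_i$, so the cluster cost is $7$; if the triple does not contain element $i$, the cost is at least $9$. In either case the cost of each such cluster is at least $7$.

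Next, I would reuse exactly the accounting from the proof of Lemma~\ref{lem:bound-on-l} for the remaining $k-\ell$ clusters. There, the element columns not among the first $\ell$ clusters were shown to each cost at least $5$ (via Claim~\ref{cl:2-el-cluster} and Claim~\ref{cl:1-el-cluster}), giving the bound that the total cost is at least $12\ell_1 + 7\ell_2 + 5(2N-2\ell-\ell_1-\kappa) + 5\kappa = 10N - 3\ell$ — but now I want to be more careful about the contribution of the $\ell_2$ clusters with a triple column whose triple does not contain the corresponding element, which cost $9$ rather than $7$. So let $\ell_2'$ be the number of first-$\ell$ clusters whose triple column corresponds to a triple not containing the relevant element. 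Then the total cost is at least $12\ell_1 + 7(\ell_2 - \ell_2') + 9\ell_2' + 5(2N - 2\ell - \ell_1) = 10N - 3\ell + 2\ell_2'$. Combining with the hypothesis that the cost is less than $7(1-\gamma)N + (23/3)\gamma N = 7N + (2\gamma/3)N$, we get $10N - 3\ell + 2\ell_2' < 7N + (2\gamma/3)N$, hence $3\ell - 2\ell_2' > 3N - (2\gamma/3)N$. Since $\ell \le N$, this forces $2\ell_2' < (2\gamma/3)N$, i.e. $\ell_2' < (\gamma/3)N$. Now I would observe $\ell_2 \ge \ell - \ell_1 = \ell - (\ell - \ell_2)$ is trivial, so I instead bound $\ell_2$ from below differently: the $\ell_1$ all-element clusters use $3\ell_1$ element columns and the $\ell_2$ mixed clusters use $2\ell_2$ element columns, so together the first $\ell$ clusters contain $3\ell_1 + 2\ell_2$ element columns; but also $\ell_1 + \ell_2 = \ell > (1-2\gamma/9)N$ by Lemma~\ref{lem:bound-on-l}. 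The quantity I actually need is that most of these $\ell$ clusters have a triple column: combining $\ell > (1-2\gamma/9)N$ with an upper bound on $\ell_1$.

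The main obstacle — and the step I would spend the most care on — is getting a clean upper bound on $\ell_1$ (equivalently, the lower bound $\ell_2 > (1-2\gamma/3)N$ directly). The idea is that there are only $2N$ element columns total, and each of the $\ell_1$ clusters consumes three of them while each of the $\ell_2$ clusters consumes two, so $3\ell_1 + 2\ell_2 \le 2N$, i.e. $\ell_1 + (\ell_1 + \ell_2) \le 2N$, hence $\ell_1 \le 2N - \ell < 2N - (1-2\gamma/9)N = (1 + 2\gamma/9)N - \ell$. That is not yet tight enough on its own, so I would instead feed the refined cost bound back in: from $3\ell - 2\ell_2' > 3N - (2\gamma/3)N$ and $\ell = \ell_1 + \ell_2$, together with $3\ell_1 + 2\ell_2 \le 2N$ (so $\ell_1 \le 2N - 2\ell$, giving $\ell \le 2N - 2\ell + \ell_2$, i.e. $\ell_2 \ge 3\ell - 2N$), I get $\ell_2 \ge 3\ell - 2N > 3(1-2\gamma/9)N - 2N = N - (2\gamma/3)N = (1 - 2\gamma/3)N$, which is exactly the claim. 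So the proof reduces to: (a) the cluster-cost lower bounds for the three cluster types listed above, (b) the counting inequality $3\ell_1 + 2\ell_2 \le 2N$ on element columns, and (c) plugging in Lemma~\ref{lem:bound-on-l}; I expect part (a), specifically verifying the majority-rule center computations and the $d_H$ values in each case, to be the most delicate bookkeeping.
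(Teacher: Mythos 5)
Your final argument is correct, but it takes a genuinely different route from the paper's. The paper reuses the cost accounting: it lower-bounds the cost of the first $\ell$ clusters by $12\ell_1+7\ell_2$, writes $\ell_1=\ell-\ell_2$, substitutes $\ell > (1-2\gamma/9)N$ from Lemma~\ref{lem:bound-on-l}, and compares against the assumed upper bound $7N+\frac{2}{3}\gamma N$ on the total cost, which gives $\ell_2 > (1-2\gamma/3)N$ directly. You instead observe a purely combinatorial constraint: the first $\ell$ clusters consume $3\ell_1+2\ell_2=\ell_1+2\ell$ distinct element columns, and there are only $2N$ element columns in total, so $\ell_1\le 2N-2\ell$, equivalently $\ell_2\ge 3\ell-2N$; plugging in $\ell>(1-2\gamma/9)N$ then gives $\ell_2 > 3(1-2\gamma/9)N - 2N = (1-2\gamma/3)N$, as required. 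This is arguably cleaner than the paper's version, since the only thing it borrows from the cost hypothesis is Lemma~\ref{lem:bound-on-l} itself, replacing the second round of cost arithmetic with a one-line pigeonhole count. Two minor remarks on your write-up: the detour through $\ell_2'$ (the mixed clusters whose triple omits the center element) and the derived bound $\ell_2' < (\gamma/3)N$ play no role in your final deduction and can be deleted; and the intermediate line ``$\ell_1 + (\ell_1+\ell_2)\le 2N$'' misrewrites $3\ell_1+2\ell_2$ (which equals $\ell_1+2(\ell_1+\ell_2)$, not $\ell_1+(\ell_1+\ell_2)$), though you correct this when you actually use $\ell_1\le 2N-2\ell$ a few lines later.
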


\begin{proof}
Again consider the cost of the given clustering. The cost of the $\ell_2$ clusters is at least 7. The cost of the remaining $\ell-\ell_2$ clusters is exactly 12 as before. Now, as $\ell > (1-2\gamma/9)N$ by Lemma \ref{lem:bound-on-l}, 
\begin{align*}
    & 7\ell_2+12((1-2\gamma/9)N-\ell_2) < 7(1-\gamma)N+(23/3)\gamma N=7N+2\gamma N/3\\    
    & 7\ell_2+12N-24\gamma N/9-12\ell_2 < 7N +2\gamma N/3\\
    & 5\ell_2 > 5N - 30\gamma N/9\\
    & \ell_2 > (1-2\gamma/3)N
\end{align*}
\end{proof}

We show that out of the $\ell_2$ elements corresponding to these $\ell_2$ clusters, more than $(1-\gamma)N$ elements must be matched. 

\begin{figure}[t]
\centering
\includegraphics[width=.9\linewidth]{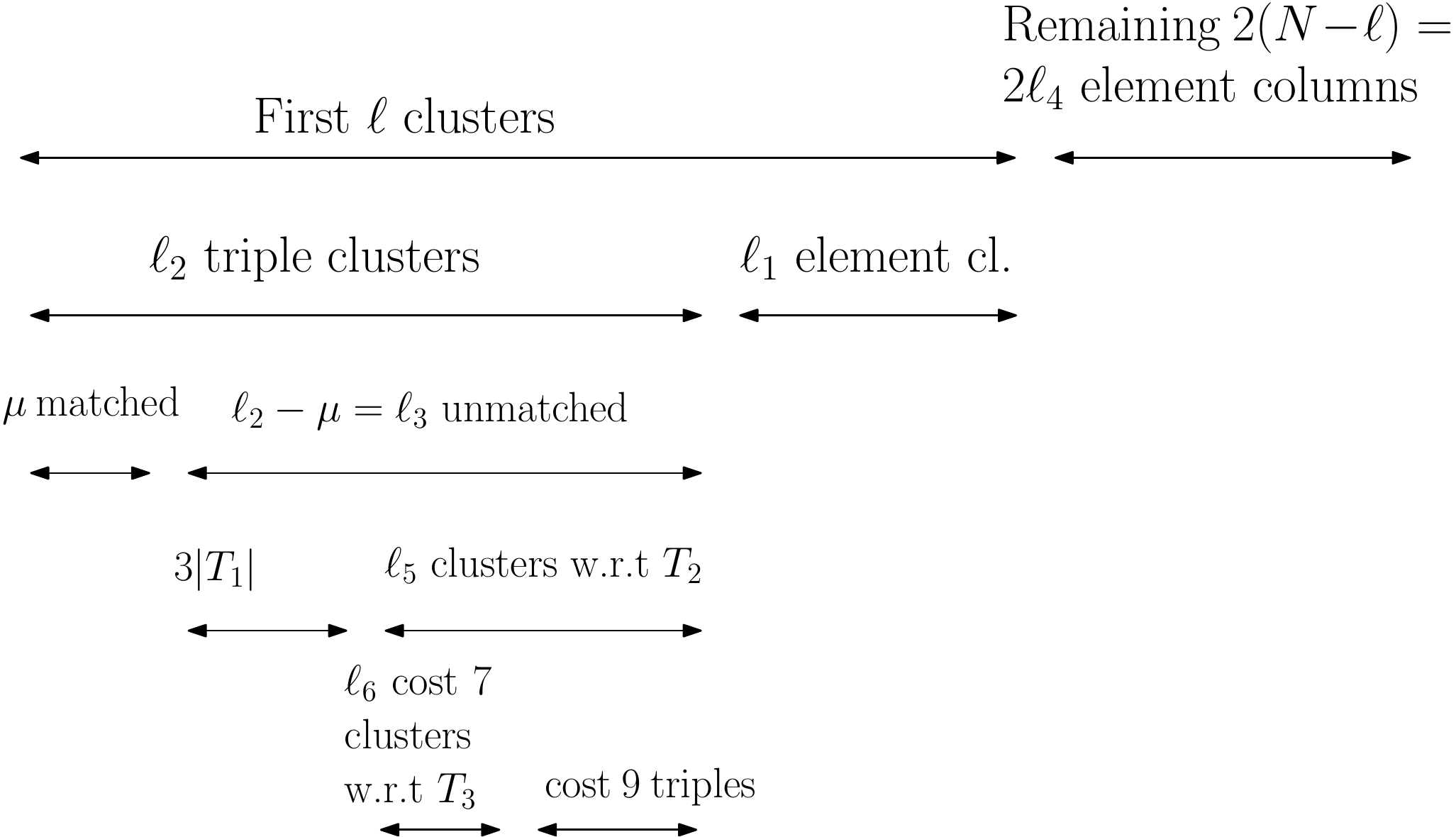}
\caption{Hierarchy of the clusters. Illustration of the proof of Lemma \ref{lem:matching}.}
\label{fig:hierarchy}
\end{figure}

\begin{lemma}\label{lem:matching}
There is a matching that matches more than $(1-\gamma)N$ elements. 
\end{lemma}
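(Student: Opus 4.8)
The plan is to read off a matching $M$ from the ``good'' element clusters and then show, by a charging argument against the cost budget, that $M$ leaves fewer than $\gamma N$ of the $N$ elements uncovered; a matching covering more than $(1-\gamma)N=3(1-\gamma)n$ elements has more than $(1-\gamma)n$ triples, which contradicts the standing assumption (Petrank's theorem) that every matching of the instance has size at most $(1-\gamma)n$. \textbf{Setup.} By Lemma~\ref{lem:exchange-old} every cluster whose centre is an element column $a_i$ contains both copies of $a_i$, hence is either an \emph{element--element} cluster $\{a_i,a_i,a_j\}$ (cost $12$) or an \emph{element--triple} cluster $\{a_i,a_i,b_t\}$, the latter \emph{good} if $i\in t$ (cost $7$) and \emph{bad} if $i\notin t$ (cost $9$); write $\ell_1,g,b$ for their numbers, so $\ell=\ell_1+g+b$, and let $A$ be the set of centres of good clusters, so $|A|=g$. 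Every remaining (``other'') cluster has $0,1,2$ or $3$ element columns (each at most once, otherwise its centre is an element column); I would first establish the sharpened bounds: a cluster with exactly one element column has cost $\ge 7$, one with two has cost $\ge 11$, one with three has cost $18$, and a pure-triple cluster using at least two distinct triples has cost $\ge 6$ (a single triple gives $0$). Each follows from the majority rule together with the facts that the supports $R_i$ are pairwise disjoint and that two distinct triple columns differ in exactly $6$ coordinates. For $i\in A$ let $t_i$ be the triple of its good cluster, and for a triple $t$ put $S_t=\{i\in t: i\in A,\ t_i=t\}$; the $S_t$ are pairwise disjoint with $\sum_t|S_t|=g$. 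Call $t$ \emph{complete} if $S_t=t$, and \emph{broken} if $1\le|S_t|\le 2$.

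\textbf{The matching and the uncovered elements.} I would take $M=\{t:\ t\text{ complete}\}$. Two distinct complete triples are disjoint (a common element would be assigned to both $t_i$'s), so $M$ is a matching, covering exactly $3|M|=\sum_{t\in M}|S_t|$ elements. Hence the number of uncovered elements is $U=N-3|M|=(N-g)+w$, where $w:=\sum_{t\text{ broken}}|S_t|$ counts the ``wasted'' good clusters attached to broken triples, and $N-g=\ell_1+b+(N-\ell)$ by the classification (every element is the centre of a good, bad, or element--element cluster, or of no cluster, the last count being $N-\ell$, using $\ell\le N$).

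\textbf{The charging.} Each broken triple has $3-|S_t|\ge 1$ \emph{defective} columns --- copies of $b_t$ not lying in a good cluster of $t$; such a copy cannot lie in a pure-triple cluster consisting of copies of $b_t$ only, so it lies in a bad cluster, an ``other'' cluster with one or two element columns, or a mixed pure-triple cluster, whence the total number $D$ of defective columns satisfies $D\le b+2\kappa+\mu+3p_1$ (here $\kappa,\mu$ count ``other'' clusters with one, resp.\ two, element columns, $\nu$ with three, and $p_1$ the mixed pure-triple clusters), while a case check on $|S_t|\in\{1,2\}$ gives $w\le 2D$. Writing the total cost as $7g+9b+12\ell_1+C$ with $C$ the cost of all ``other'' clusters, substituting $g=\ell-\ell_1-b$, using $N-\ell=\tfrac12(\ell_1+\kappa+2\mu+3\nu)$ (an accounting of the $2(N-\ell)$ element-column copies of the $N-\ell$ centreless elements), the estimate $C\ge 6p_1+7\kappa+11\mu+18\nu$, and $w\le 2D\le 2(b+2\kappa+\mu+3p_1)$, a routine computation yields the key inequality
\[
\text{(total cost)}\ \ge\ 7N+\tfrac23\bigl(\ell_1+b+(N-\ell)+w\bigr)\ =\ 7N+\tfrac23 U .
\]
Since the total cost is below $7(1-\gamma)N+\tfrac{23}{3}\gamma N=7N+\tfrac23\gamma N$, this forces $U<\gamma N$, so $3|M|=N-U>(1-\gamma)N$ and $|M|>(1-\gamma)n$, a matching larger than every matching of the instance --- the desired contradiction.

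\textbf{Main obstacle.} The displayed inequality is essentially tight (the extremal clustering sketched before the lemma attains equality), so the two delicate points are: proving the refined per-cluster cost bounds --- in particular cost $\ge 7$ for a cluster with a single element column, which is strictly stronger than Claim~\ref{cl:1-el-cluster} --- via a coordinate-level analysis of majority-rule centres; and carrying out the charging \emph{globally} rather than cluster by cluster, because a bad cluster is counted both in $b$ and as the host of a defective column (feeding into $w$), and only the combined accounting leaves no slack for the constant $\tfrac23$. The coarser bounds from Lemma~\ref{lem:bound-on-l} and Claim~\ref{cl;bound-on-l2} are not by themselves enough; the sharpened analysis of ``other'' clusters is what makes the constant close.
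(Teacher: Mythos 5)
Your proof is correct and reaches the same final inequality $\text{cost}\ge 7N+\tfrac23(N-\mu)$ as the paper, but the bookkeeping is genuinely different and, in fact, tighter. The paper fixes a maximum matching $M$ among the good element--triple clusters, charges the matched elements cost $7\mu$, and then accounts for the unmatched elements by splitting them into classes $\ell_1,\ell_3,\ell_4$ and subclassifying the involved triples into $T_1,T_2,T_3$; the per-cluster lower bounds it relies on are the weaker ones of Claims~\ref{cl:2-el-cluster} and~\ref{cl:1-el-cluster} (element-column cost $\ge 4$--$5$), supplemented by ad hoc "extra charging'' to close the constant. You instead enumerate all cluster types systematically (element--element; good/bad element--triple; ``other'' clusters with $0,1,2,3$ distinct element columns), prove the sharpened per-cluster bounds $7,\,11,\,18,\,6$ via the majority-rule/coordinate-support analysis, define $M$ explicitly as the set of \emph{complete} triples, and close with the defective-column charging $w\le 2D\le 2(b+2\kappa+\mu+3p_1)$. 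Your remark that the coarser paper-level bounds do not suffice is accurate: plugging cost $\ge 5$ instead of $\ge 7$ for the one-element-column clusters into your final linear combination produces the term $-\tfrac32\kappa$, which can be negative; the $\ge 7$ bound is what makes the slack nonnegative in every variable. One small imprecision in your sketch: the lemma itself is not proved by contradiction --- you prove the contrapositive directly (low cost $\Rightarrow$ large matching), so the closing sentence about ``the desired contradiction'' should really just conclude that $M$ is the required matching; the contradiction with Petrank's gap only arises one level up, in the surrounding soundness argument. Also, your bound $\ge 11$ for two-element-column clusters can be improved to $\ge 13$ (the minimum is attained when the triple contains both elements), but $\ge 11$ already suffices for the computation.
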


\begin{proof}
Consider the set of elements corresponding to the $\ell_2$ clusters, each of which contains a triple column. Let $M$ be a maximum matching involving these elements and triples that matches $\mu$ elements. We will show that $\mu > (1-\gamma)N$. The total cost of the clusters corresponding to these matched elements is $7\mu$. Let $\ell_{1}$ be the number of clusters out of the first $\ell$ clusters that contain all element columns (see Figure \ref{fig:hierarchy}). The total cost of these clusters is $12\ell_{1}$. Note that $3{\ell_1}$ columns are involved in these clusters. For the remaining at least $2(N-\mu) - 3{\ell_1}$ element columns and correspondingly at least $N-\mu - 3{\ell_1}/2$ elements, the corresponding columns can either be in one cluster along with a triple column or split into two clusters. Let $\ell_3$ be the number of such elements whose columns are in one cluster along with a triple column. Also, let $\ell_4$ be the remaining elements whose columns are split into two clusters (see Figure \ref{fig:hierarchy}). By Claims \ref{cl:1-el-cluster} and \ref{cl:2-el-cluster}, the cost of each split column is at least 4. Thus, the total cost corresponding to these $\ell_4$ elements is at least $8\ell_4$. Now, we compute the cost corresponding to the $\ell_3$ elements whose columns are in one cluster along with a triple column. Consider the set of triples involved in these clusters. Also, let $T_1$ be the set of triples whose three columns appear in these $\ell_3$ clusters. The cost of such triple columns is at least $7+7+9=23$, as they are not a part of the maximum matching. Let $\ell_5$ be the number of clusters among the $\ell_3$ clusters where the triples in $T_1$ does not appear and $T_2$ be the set of associated triples. Each triple in $T_2$ thus appear in at most 2 clusters among the $\ell_3$ clusters (see Figure \ref{fig:hierarchy}). Let $T_3\subseteq T_2$ be the set of triples each of which are only associated with the clusters of cost 7 and $\ell_6$ be the number of these clusters. As these triples are not part of the maximum matching, each of them can cover at most two unmatched elements. Thus, the size of $T_3$ is at least $\ell_6/2$. Note that, by definition, at least one column of each such triple does not belong to the first $\ell$ clusters. We compute the cost of these triple columns. If such a triple column appears in all triple column cluster, the cost of the column is at least 3, by construction of the triple columns and noting that two copies of the column cannot appear in the cluster. If such a triple is in a cluster with only one element column, its cost must be at least 2, as the element columns' at most one 1 entry can coincide with the 1 entries of the column. Now, if such a triple column appears in a cluster with two element columns, then the cost of the column is at least 1. However, the cost of the element columns must be at least 10. We charged each such element column a cost of 4 while charging the split columns corresponding to the $\ell_4$ elements. So, we can charge $10-8=2$ additional cost to those element columns. Instead, we charge this to the triple column. Thus, its charged cost is $1+2=3$. Thus, the total cost corresponding to the triples in $T_3$ is at least $(\ell_6/2)\cdot 2$. 

The total cost of the clustering is at least,
\begin{align*}
     & 7\mu+12\ell_1+8\ell_4+(23/3)|T_1|+(\ell_5-\ell_6)((7+9)/2)+7\ell_6+(\ell_6/2)\cdot 2\\
=  & 7\mu+12\ell_1+8\ell_4+(23/3)(\ell_3-\ell_5)+8\ell_5 \qquad\qquad\text{ (as } 3|T_1|=\ell_3-\ell_5)\\
\ge     &  7\mu+12\ell_1+8\ell_4+(23/3)\ell_3 \\
\ge & 7\mu+12\ell_1+(23/3)(\ell_3+\ell_4)\\
\ge & 7\mu+12\ell_1+(23/3)(N-\mu-3{\ell_1}/2) \qquad\qquad\text{ (as } \ell_3+\ell_4\ge N-\mu-3\ell_1/2)\\
= & 7\mu + (23/3)(N-\mu)+{\ell_1}/2\\
\ge & (23/3)N-(2/3)\mu \qquad\qquad\text{ (as } \ell_1\ge 0)\\
\end{align*}

Now, we know a strict upper bound on this cost. Thus, 
\begin{align*}
    & (23/3)N-(2/3)\mu < 7N+(2/3)\gamma N\\
    & (23/3-7)N-(2/3)\gamma N < (2/3)\mu\\
    & (2/3)N(1-\gamma) < (2/3)\mu\\
    & \mu > (1-\gamma)N
\end{align*}
\end{proof}

We summarize the results of this section in the following theorem. 

\begin{theorem}\label{thm:no-PTAS}   
There exists a constant $\epsilon_c > 0$, such that it is \classNP-hard to obtain a $(1+\epsilon_c)$-approximation for \probMEClust with $\ell_0$ $($or $\ell_1)$ distances, even if the input points are binary, that is, are from $\{0,1\}^d$. 
\end{theorem}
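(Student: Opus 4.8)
The plan is to derive Theorem~\ref{thm:no-PTAS} as a direct consequence of the gap lemma established above together with Petrank's inapproximability result for \textsc{3DM}. Fix the constant $\gamma\in(0,1)$ guaranteed by Petrank's theorem, so that it is \classNP-hard to distinguish \textsc{3DM} instances admitting a perfect matching (of size $n$, saturating all $N=3n$ elements) from those in which every matching saturates at most $(1-\gamma)n$ elements. The construction preceding the lemma turns such an instance, in polynomial time, into a \probMEClust instance whose points are binary, lie in $\{0,1\}^{6N}$, number $2N+3m$, with required cluster size $3$ and number of clusters $k=2N/3+m$; since $2N+3m=3k$, the size-$3$ equal clustering is feasible. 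Because $\|\bfx\|_0=\|\bfx\|_1$ for all $\bfx\in\{0,1\}^d$, it suffices to treat the $\ell_0$ case.

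Next I would invoke the lemma. In the perfect-matching case the optimal clustering cost is at most $7N$; in the case where every matching has size at most $(1-\gamma)n$, every feasible clustering (hence the optimum as well) costs at least $7(1-\gamma)N+\tfrac{23}{3}\gamma N=7N+\tfrac{2}{3}\gamma N$. The two cases are therefore separated by the multiplicative factor
\[
\frac{7N+\tfrac{2}{3}\gamma N}{7N}=1+\frac{2\gamma}{21}.
\]

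Finally, set $\epsilon_c:=\gamma/21>0$, which is strictly less than $2\gamma/21$. If some polynomial-time algorithm $(1+\epsilon_c)$-approximated \probMEClust on binary inputs under $\ell_0$ (equivalently $\ell_1$), then on an instance produced by the reduction it would return a clustering of cost at most $(1+\epsilon_c)\cdot 7N<7N+\tfrac{2}{3}\gamma N$ in the perfect-matching case, and of cost at least $7N+\tfrac{2}{3}\gamma N$ in the other case; comparing the returned cost against the threshold $7N+\tfrac{2}{3}\gamma N$ would then decide Petrank's gap problem in polynomial time, a contradiction with Petrank's theorem. Hence obtaining a $(1+\epsilon_c)$-approximation for \probMEClust is \classNP-hard. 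For the Theorem itself there is no real obstacle: the only things to double-check are that the (slightly modified) construction still realizes the ``perfect matching $\Rightarrow$ cost $7N$'' direction and that the divisibility condition holds, both of which are routine. The substantive work has already been done in the gap lemma — in particular in Lemma~\ref{lem:matching}, which extracts a matching saturating more than $(1-\gamma)N$ elements from any clustering cheaper than the threshold.
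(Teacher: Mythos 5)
Your proposal is correct and matches what the paper implicitly does: the paper states the theorem as a summary of the section, and the missing explicit step is exactly the arithmetic you supply — from the gap lemma, the optimum in the ``perfect matching'' case is at most $7N$ while in the ``no large matching'' case it is at least $7N+\tfrac{2}{3}\gamma N$, giving a multiplicative gap of $1+\tfrac{2\gamma}{21}$, so any fixed $\epsilon_c<\tfrac{2\gamma}{21}$ (your choice $\gamma/21$ works) makes $(1+\epsilon_c)$-approximation decide Petrank's gap problem. Your observation that $\|\bfx\|_0=\|\bfx\|_1$ on $\{0,1\}^d$ and the divisibility check $2N+3m=3k$ are the right routine sanity checks, and nothing further is needed beyond the already-established gap lemma.
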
 

\section{Conclusion}\label{sec:concl}
We initiated the study of lossy kernelization for clustering problems and proved that \probPEClust admits a $2$-approximation kernel. It is natural to ask whether the approximation factor may be improved. In particular, does the problem admit a \emph{polynomial size approximate kernelization scheme} (PSAKS) that is a lossy kernelization analog of PTAS (we refer to~\cite{FominLSZ19}  for the definition)?  Note that we proved that \probMEClust is {\sf APX}-hard and this refutes the existence of PTAS and makes it natural to ask the question about PSAKS. We also believe that it is interesting to consider the variants of the considered problems for means instead of medians. Here, the cost of a collection of points   $\bfX\subseteq \mathbb{Z}^d$ is defined as  
$\min_{\bfc\in\mathbb{R}^d}\sum_{\bfx\in \bfX}\|\bfc-\bfx\|_p^p$ for $p\geq 1$. Clearly, if $p=1$, that is, in the case of Manhattan norm, our results hold. However, for $p\geq 2$, we cannot translate our results directly, because our arguments rely on the triangle inequality. We would like to conclude the paper by underlining our belief that lossy kernelization may be natural tool for the lucrative area of approximation algorithms for clustering problems.

\end{document}